\documentclass{llncs}
\usepackage{mathpartir}
\usepackage{amsmath,amssymb,mathtools,xspace,upgreek}
\usepackage{float}

\floatstyle{boxed} 
\restylefloat{figure}

\usepackage{etapsBisim}

\title{Applicative Bisimulations\\ for Delimited-Control Operators}

\author{Dariusz Biernacki \and Sergue\"i Lenglet \thanks{The author is supported by the
    Alain Bensoussan Fellowship Programme.}}

\institute{University of Wroc\l{}aw}

\begin{document}

\pagestyle{plain}

\maketitle

\thispagestyle{plain}

\begin{abstract}
  We develop a behavioral theory for the untyped call-by-value
  $\lambda$-calculus extended with the delimited-control operators
  shift and reset. For this calculus, we discuss the possible
  observable behaviors and we define an applicative bisimilarity that
  characterizes contextual equivalence. We then compare the
  applicative bisimilarity and the CPS equivalence, a relation on
  terms often used in studies of control operators. In the process, we
  illustrate how bisimilarity can be used to prove equivalence of
  terms with delimited-control effects.
\end{abstract}

\section{Introduction}

Morris-style contextual equivalence \cite{JHMorris:PhD} is usually
regarded as the most natural behavioral equivalence for functional
languages based on $\lambda$-calculi. Roughly, two terms are
equivalent if we can exchange one for the other in a bigger program
without affecting its behavior (\ie whether it terminates or not). The
quantification over program contexts makes contextual equivalence hard
to use in practice and, therefore, it is common to look for more
effective characterizations of this relation. One approach is to rely
on coinduction, by searching for an appropriate notion of
\emph{bisimulation}. The bisimulation has to be defined in such a way
that its resulting behavioral equivalence, called \emph{bisimilarity},
is \emph{sound} and \emph{complete} with respect to contextual
equivalence (\ie it is included and contains contextual equivalence,
respectively).

The problem of finding a sound and complete bisimilarity in the
$\lambda$-calculus has been well studied and usually leads to the
definition of an \emph{applicative} bisimilarity
\cite{Abramsky-Ong:IaC93,Howe:IaC96,Gordon:TCS99} (or, more recently,
\emph{environmental} bisimilarity \cite{Sangiorgi-al:LICS07}). The
situation is more complex in $\lambda$-calculi extended with
\emph{control operators} for first-class continuations---so far, only
a few works have been conducted on the behavioral theory of such
calculi. A first step can be found for the $\lambda\mu$-calculus (a
calculus that mimics abortive control operators such as {\it
  call/cc}~\cite{Parigot:LPAR92}) in \cite{Bierman:MFCS98} and
\cite{David-Py:JSL01}, where it is proved that the definition of
contextual equivalence can be slightly simplified by quantifying over
evaluation contexts only; such a result is usually called a
\emph{context lemma}. In \cite{Stoevring-Lassen:POPL07}, St{\o}vring
and Lassen define an \emph{eager normal form bisimilarity} (based on
the notion of L{\'e}vy-Longo tree
equivalence)~\cite{Lassen:LICS05,Lassen:MFPS05,Lassen-Levy:LICS08}
which is sound for the $\lambda\mu$-calculus, and which becomes sound
and complete when a notion of state is added to the
$\lambda\mu$-calculus. In~\cite{Merro-Biasi:06}, Merro and Biasi
define an applicative bisimilarity which characterizes contextual
equivalence in the \emph{CPS calculus}~\cite{Thielecke:PHD}, a minimal
calculus which models the control features of functional languages
with imperative jumps. As for the $\lambda$-calculus extended with
control only, however, no sound and complete bisimilarities have been
defined.

In this article, we present a sound and complete applicative
bisimilarity for a $\lambda$-calculus extended with Danvy and
Filinski's static de\-li\-mi\-ted-control operators {\it shift} and
{\it reset} \cite{Danvy-Filinski:LFP90}. In contrast to abortive
control operators, delimited-control operators allow to delimit access
to the current continuation and to compose continuations. The
operators shift and reset were introduced as a direct-style
realization of the traditional success/failure continuation model of
backtracking otherwise expressible only in continuation-passing
style. The numerous theoretical and practical applications of shift
and reset (see, e.g.,~\cite{Biernacka-al:LMCS05} for an extensive
list) include the seminal result by Filinski showing that a
programming language endowed with shift and reset is monadically
complete~\cite{Filinski:POPL94}.

The $\lambda$-calculi with static delimited-control operators have
been an active research topic from the semantics as well as type- and
proof-theoretic point of view~(see, \eg
\cite{Biernacka-al:LMCS05,Biernacka-Biernacki:PPDP09,Ariola-al:HOSC07a}).
However, to our knowledge, no work has been carried out on the
behavioral theory of such $\lambda$-calculi. In order to fill this
void, we present a study of the behavioral theory of an untyped,
call-by-value $\lambda$-calculus extended with shift and reset
\cite{Danvy-Filinski:LFP90}, called $\lamshift$. In Section
\ref{s:lamshift}, we give the syntax and reduction semantics of
$\lamshift$, and discuss the possible observable behaviors for the
calculus. In Section \ref{s:bisimilarity}, we define an applicative
bisimilarity, based on a labelled transition semantics, and prove it
characterizes contextual equivalence, using an adaptation of Howe's
congruence proof method \cite{Howe:IaC96}. As a byproduct, we also
prove a context lemma for $\lamshift$.  In Section
\ref{s:cps-equivalence}, we study the relationship between applicative
bisimilarity and an equivalence based on translation into
continuation-passing style (CPS), a relation often used in works on
control operators and CPS. In the process, we show how applicative
bisimilarity can be used to prove equivalence of terms. Section
\ref{s:conclusion} concludes the article and gives ideas for future
work. The appendices contain the proofs missing from the body of the
article.

\section{The Language $\lamshift$}
\label{s:lamshift}

In this section, we present the syntax, reduction semantics, and
contextual equivalence of the language $\lamshift$ used throughout
this article.

\subsection{Syntax}

The language $\lamshift$ extends the call-by-value $\lambda$-calculus
with the delimited-control operators \emph{shift} and
\emph{reset}~\cite{Danvy-Filinski:LFP90}. We assume we have a set of
term variables, ranged over by $\varx$ and $\vark$. We use two
metavariables to distinguish term variables bound with a
$\lambda$-abstraction from variables bound with a shift; we believe
such distinction helps to understand examples and reduction rules. The
syntax of terms and values is given by the following grammars:
$$
\begin{array}{llll}
  \textrm{Terms:} &\quad \tm & ::= & \varx \Mid \lam \varx \tm \Mid \app \tm \tm \Mid
  \shift \vark \tm \Mid \reset \tm \\
  \textrm{Values:}&\quad \val & ::= & \lam \varx \tm
\end{array}
$$
The operator \emph{shift} ($\shift \vark \tm$) is a capture operator,
the extent of which is determined by the delimiter \emph{reset}
($\rawreset$). A $\lambda$-abstraction $\lam \varx \tm$ binds $\varx$ in
$\tm$ and a shift construct $\shift \vark \tm$ binds $\vark$ in $\tm$;
terms are equated up to $\alpha$-conversion of their bound
variables. The set of free variables of $\tm$ is written $\fv \tm$; a
term is \emph{closed} if it does not contain any free
variable. Because we work mostly with closed terms, we consider only
$\lambda$-abstractions as values.

We distinguish several kinds of contexts, defined below, which all can
be seen as terms with a hole.
$$
\begin{array}{llll}
  \textrm{Pure evaluation contexts:}&\quad \ctx & ::= & \mtctx \Mid \vctx \val \ctx \Mid \apctx \ctx
  \tm \\
  \textrm{Evaluation contexts:}&\quad \rctx & ::= & \mtctx \Mid \vctx \val \rctx \Mid \apctx \rctx
  \tm \Mid \resetctx \rctx \\
  \textrm{Contexts:}&\quad \cctx & ::= & \mtctx \Mid \lam \varx \cctx
  \Mid \vctx \tm \cctx \Mid \apctx \cctx \tm \Mid \shift \vark \cctx \Mid \resetctx
  \cctx 
\end{array}
$$
Regular contexts are ranged over by $\cctx$. The pure evaluation
contexts\footnote{This terminology comes from Kameyama (\eg in
  \cite{Kameyama-Hasegawa:ICFP03}).} (abbreviated as pure contexts),
ranged over by $\ctx$, represent delimited continuations and can be
captured by the shift operator. The call-by-value evaluation contexts,
ranged over by $\rctx$, represent arbitrary continuations and 
encode the chosen reduction strategy. Following the correspondence
between evaluation contexts of the reduction semantics and control
stacks of the abstract machine for shift and reset, established by
Biernacka et al.~\cite{Biernacka-al:LMCS05},
we interpret contexts inside-out, \ie $\mtctx$ stands for the empty
context, $\vctx \val \ctx$ represents the ``term with a hole'' $\inctx
\ctx {\app \val \hole}$, $\apctx \ctx \tm$ represents $\inctx \ctx
     {\app \hole \tm}$, $\resetctx \rctx$ represents $\inctx \rctx
     {\reset \hole}$, etc. (This choice does not affect the results
     presented in this article in any way.) Filling a context $\cctx$
     (respectively $\ctx$, $\rctx$) with a term $\tm$ produces a term,
     written $\inctx \cctx \tm$ (respectively $\inctx \ctx \tm$,
     $\inctx \rctx \tm$); the free variables of $\tm$ can be captured
     in the process. A context is \emph{closed} if it contains only
     closed terms.

\subsection{Reduction Semantics}
\label{ss:reduction}

Let us first briefly describe the intuitive semantics of shift and
reset by means of an example written in SML using Filinski's
implementation of shift and reset~\cite{Filinski:POPL94}.
\begin{example}
  \label{e:copy}
The following function copies a list~\cite{Biernacki-al:RS-05-16} (the
SML expression {\tt shift (fn k => t)} corresponds to $\shift k t$
  and {\tt reset (fn () => t)} corresponds to $\reset t$):
\begin{quote}
\begin{verbatim}
fun copy xs = 
  let fun visit nil = nil
        | visit (x::xs) = visit (shift (fn k => x :: (k xs)))
  in reset (fn () => visit xs) end
\end{verbatim}
\end{quote}
This program illustrates the main ideas of programming with shift and
reset:
\begin{itemize}
\item Reset delimits continuations. Control effects are local to {\tt
  copy}.
\item Shift captures delimited continuations. Each, but last,
  recursive call to {\tt visit} abstracts the continuation {\tt
    fn v => reset (fn () => visit v)} and binds it to {\tt k}.
\item Captured continuations are statically composed. When applied in
  the expression {\tt k xs}, the captured continuation becomes the
  current delimited continuation that is isolated from the rest of the
  program by a control delimiter---witness the reset expression in the
  captured continuation.
\end{itemize}
\end{example}

Formally, the call-by-value reduction semantics of $\lamshift$ is
defined by the following rules, where $\subst \tm \varx \val$ is the
usual capture-avoiding substitution of $\val$ for $\varx$ in $\tm$:
$$
\begin{array}{lrll}
  \RRbeta & \quad \inctx \rctx {\app {\lamp \varx \tm} \val} & \redcbv & \inctx
  \rctx {\subst \tm \varx \val} \\
  \RRshift & \quad \inctx \rctx {\reset{\inctx \ctx {\shift \vark \tm}}} &
  \redcbv & \inctx \rctx{\reset{\subst \tm \vark
    {\lam \varx {\reset {\inctx \ctx \varx}}}}} \mbox{ with } \varx \notin \fv
  \ctx\\
  \RRreset & \quad \inctx \rctx {\reset \val} & \redcbv & \inctx \rctx \val
\end{array}
$$ 
The term $\app {\lamp \varx \tm} \val$ is the usual call-by-value
redex for $\beta$-reduction (rule $\RRbeta$). The operator $\shift
\vark \tm$ captures its surrounding context $\ctx$ up to the
dynamically nearest enclosing reset, and substitutes $\lam \varx
{\reset {\inctx \ctx \varx}}$ for $\vark$ in $\tm$ (rule
$\RRshift$). If a reset is enclosing a value, then it has no purpose
as a delimiter for a potential capture, and it can be safely removed
(rule $\RRreset$). All these reductions may occur within a metalevel
context $\rctx$. The chosen call-by-value evaluation strategy is
encoded in the grammar of the evaluation contexts.

\begin{example}
  \label{e:reduction}
  Let $i = \lam \varx \varx$ and $\omega = \lam \varx {\app \varx \varx}$. We
  present the sequence of reductions initiated by $\reset {\app{\appp {(\shift
        {\vark_1}{\app i {\appp {\vark_1} i}})}{\shift {\vark_2} \omega}}{\appp
      \omega \omega}}$. The term $\shift {\vark_1}{\app i {\appp {\vark_1} i}}$
  is within the pure context $\ctx = \apctx {(\apctx \mtctx {\appp \omega
      \omega})}{\shift {\vark_2} \omega}$ (remember that we represent contexts
  inside-out), enclosed in a delimiter $\rawreset$, so $\ctx$ is captured
  according to rule $\RRshift$.
  $$\reset {\app{\appp {(\shift{\vark_1}{\app i {\appp {\vark_1} i}})}{\shift
        {\vark_2} \omega}}{\appp \omega \omega}} \redcbv \reset{\app i {\appp
      {\lamp \varx {\reset {\app{\appp \varx {\shift {\vark_2} \omega}}{\appp
              \omega \omega}}}} i}}$$
  The role of reset in $\lam \varx {\reset{\inctx \ctx \varx}}$ becomes
  clearer after reduction of the $\RRbeta$-redex $\app {\lamp \varx {\reset{\inctx \ctx
      \varx}}} i$.
  $$\reset{\app i {\appp {\lamp \varx {\reset {\app{\appp \varx {\shift {\vark_2}
                \omega}}{\appp \omega \omega}}}} i}} \redcbv \reset {\app i
    {\reset{\app {\appp i {\shift {\vark_2} \omega}}{\appp \omega \omega}}}}$$
  When the captured context $\ctx$ is reactivated, it is not \emph{merged} with
  the context $\vctx i \mtctx$, but \emph{composed} thanks to the
  reset enclosing $\ctx$. As a result, the capture triggered by $\shift
  {\vark_2} \omega$ leaves the term $i$ outside the first enclosing reset
  untouched. 
  $$\reset {\app i {\reset{\app {\appp i {\shift {\vark_2}
            \omega}}{\appp \omega \omega}}}} \redcbv \reset {\app i
    {\reset \omega}}$$ Because $\vark_2$ does not occur in $\omega$,
  the context $\vctx i {(\apctx \mtctx {\appp \omega \omega})}$ is
  discarded when captured by $\shift {\vark_2} \omega$. Finally, we
  remove the useless delimiter $\reset {\app i {\reset \omega}}
  \redcbv \reset{\app i \omega}$ with rule $\RRreset$, and we then
  $\RRbeta$-reduce and remove the last delimiter $\reset{\app i
    \omega} \redcbv \reset{\omega} \redcbv \omega$. Note that, while
  the reduction strategy is call-by-value, some function arguments are
  not evaluated, like the non-terminating term $\app \omega \omega$
  in this example.  
\end{example}

There exist terms which are not values and which cannot be reduced any
further; these are called \emph{stuck terms}.
\begin{definition}
  A closed term $\tm$ is stuck if $\tm$ is not a value and $\tm \not \redcbv$.
\end{definition}
For example, the term $\inctx \ctx {\shift \vark \tm}$ is stuck
because there is no enclosing reset; the capture of $\ctx$ by the
shift operator cannot be triggered. In fact, closed stuck terms are
easy to characterize.
\begin{lemma}
  \label{l:stuck}
  A closed term $\tm$ is stuck iff $\tm = \inctx \ctx {\shift \vark
    {\tm'}}$ for some $\ctx$, $k$, and $\tm'$.
\end{lemma}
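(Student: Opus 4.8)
The plan is to prove the two implications separately, after recording two routine observations that serve as glue. The first observation is about head constructors: by a one-step induction on the grammar, filling an evaluation context $\rctx$ yields a term whose outermost constructor is an application (if $\rctx$ ends in $\apctx \cdot \tm$ or $\vctx \val \cdot$), a reset (if it ends in $\resetctx \cdot$), or whatever fills the hole (if $\rctx = \mtctx$); the same holds for pure contexts $\ctx$, which never produce a reset. The second observation is that reduction is preserved by evaluation contexts: if $\tm_1 \redcbv$ via $\tm_1 = \inctx{\rctx_1}{r}$, then $\app{\tm_1}{\tm_2} = \inctx{\apctx{\rctx_1}{\tm_2}}{r}$, $\app{\val}{\tm_1} = \inctx{\vctx{\val}{\rctx_1}}{r}$, and $\reset{\tm_1} = \inctx{\resetctx{\rctx_1}}{r}$ all reduce, since each of these is again an evaluation context and the contraction fires under any $\rctx$. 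Both observations are immediate from the grammars and the reduction rules.

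For the direction ($\Leftarrow$), assume $\tm = \inctx \ctx {\shift \vark {\tm'}}$ is closed. That $\tm$ is not a value follows from the head-constructor observation: by induction on $\ctx$, the head of $\inctx\ctx{\shift\vark{\tm'}}$ is a shift (when $\ctx = \mtctx$) or an application (the two inductive cases), never a $\lambda$. That $\tm$ does not reduce I would prove by induction on $\ctx$, assuming for contradiction $\tm = \inctx\rctx r$ for a redex $r$ and analysing the outermost constructor of $\rctx$. The essential point is that a pure context contains no reset, so the shift is never enclosed by a reset: an $\RRshift$- or $\RRreset$-redex (both reset-headed) cannot sit at the root, and a $\beta$-redex at the root would force the function subterm $\inctx\ctx{\shift\vark{\tm'}}$ to be a value, which it is not. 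If instead $\rctx$ is nonempty, matching head constructors either forces a value where $\inctx\ctx{\shift\vark{\tm'}}$ stands (impossible) or pushes the redex into a strictly smaller pure context, contradicting the induction hypothesis.

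For the direction ($\Rightarrow$), assume $\tm$ is closed, not a value, and does not reduce; I would argue by structural induction on $\tm$. The variable case is excluded by closedness and the $\lambda$-case by $\tm$ being a value. If $\tm = \shift\vark{\tm'}$, take $\ctx = \mtctx$. If $\tm = \app{\tm_1}{\tm_2}$, the second glue observation shows each subterm in evaluation position is itself non-reducible: if $\tm_1$ is not a value it is closed, non-reducible, non-value, hence stuck, so by the induction hypothesis $\tm_1 = \inctx{\ctx_1}{\shift\vark{\tm'}}$ and $\tm = \inctx{\apctx{\ctx_1}{\tm_2}}{\shift\vark{\tm'}}$; if $\tm_1 = \val_1$ but $\tm_2$ is not a value, the induction hypothesis gives $\tm_2 = \inctx{\ctx_2}{\shift\vark{\tm'}}$ and $\tm = \inctx{\vctx{\val_1}{\ctx_2}}{\shift\vark{\tm'}}$; and if both are values then $\tm$ is a $\beta$-redex, contradicting non-reducibility. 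Finally, if $\tm = \reset{\tm''}$, then $\tm''$ being a value makes $\tm$ an $\RRreset$-redex, while $\tm''$ being stuck makes it (by the induction hypothesis) a shift-in-pure-context and hence $\tm$ an $\RRshift$-redex; either way $\tm$ reduces, so this case cannot arise under our assumptions.

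I expect the main obstacle to be not any single calculation but the careful management of these case analyses so that they are genuinely exhaustive and mutually consistent. The two glue observations must be made explicit, since the forward induction silently depends on ``a redex in evaluation position makes the whole term reduce'' and the reverse induction depends on the head-constructor classification of filled contexts. The most delicate case is the reset case of the forward direction, where recognising a stuck subterm as a shift-in-pure-context is precisely what exposes an $\RRshift$-redex; getting this right is what makes reset-headed terms never stuck, and it is the place where the absence of reset from pure contexts is used in the opposite direction from the ($\Leftarrow$) argument.
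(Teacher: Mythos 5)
Your proof is correct; note, though, that the paper contains no proof of this lemma to compare against---it is asserted as an easy characterization (and the unique-decomposition lemma that follows it is likewise left unproved), so you have in effect supplied the missing argument. Your route---structural induction on the term for the ($\Rightarrow$) direction, induction on the pure context for ($\Leftarrow$), glued by the observations that a redex in evaluation position makes the whole term reduce and that a filled nonempty evaluation context has an application or reset head constructor---is the standard way to discharge it, and your case analyses are exhaustive. In particular, you correctly isolate the two places where the structure of pure contexts matters: at the root, $\beta$-redexes need value subterms and $\RRshift$/$\RRreset$-redexes are reset-headed, while $\inctx\ctx{\shift\vark{\tm'}}$ is application- or shift-headed and never a value; and in the reset case of the forward induction, the induction hypothesis turns a stuck body into a shift in a pure context, exposing an $\RRshift$-redex. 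One wrinkle deserves attention: your composition equations, e.g. $\app{\tmzero}{\tmone} = \inctx{\apctx{\rctx_1}{\tmone}}{\redex}$ when $\tmzero = \inctx{\rctx_1}{\redex}$, silently read contexts outside-in, whereas the paper interprets them inside-out: there $\apctx{\rctx_1}{\tmone}$ denotes $\inctx{\rctx_1}{\app\hole{\tmone}}$, so $\inctx{\apctx{\rctx_1}{\tmone}}{\redex} = \inctx{\rctx_1}{\app\redex{\tmone}}$, which is not $\app{\inctx{\rctx_1}{\redex}}{\tmone}$ in general. Since the lemma only asserts the existence of \emph{some} pure context, and the paper itself remarks that the representation choice is immaterial, this is a notational slip rather than a gap; to match the paper's convention you would define extension of a context by an outermost frame (replace the $\mtctx$ at the bottom of $\rctx_1$ by $\apctx\mtctx{\tmone}$, and similarly for the other frames) and use that operation in both your glue observations and the inductive cases.
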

We call \emph{redexes} (ranged over by $\redex$) the terms of the form
$\app{\lamp \varx \tm} \val$, $\reset {\inctx \ctx {\shift \vark
    \tm}}$, and $\reset \val$. Thanks to the following
unique-decomposition property, the reduction $\redcbv$ is
deterministic.
\begin{lemma}
  For all closed terms $\tm$, either $\tm$ is a value, or it is a
  stuck term, or there exist a unique redex $\redex$ and a unique
  context $\rctx$ such that $\tm = \inctx \rctx \redex$.
\end{lemma}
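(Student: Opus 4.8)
The plan is to proceed by structural induction on the closed term $\tm$, first establishing the three-way alternative (value, stuck, or $\inctx\rctx\redex$) and then arguing that when the third case holds the witnesses $\redex$ and $\rctx$ are forced. Closedness rules out the variable case; $\tm = \lam\varx{\tm'}$ is a value; and $\tm = \shift\vark{\tm'}$ equals $\inctx\mtctx{\shift\vark{\tm'}}$ with $\mtctx$ pure, hence is stuck by Lemma~\ref{l:stuck}. Note that the evaluation-context productions never descend under a $\lambda$ or a shift, so in the remaining cases the subterms I recurse on are again closed and the induction stays within closed terms.

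For $\tm = \app{\tm_1}{\tm_2}$ I apply the induction hypothesis to $\tm_1$ first, matching the left-to-right call-by-value order built into the grammar of $\rctx$. If $\tm_1$ decomposes, I extend its evaluation context with the elementary frame $\apctx\mtctx{\tm_2}$ to decompose $\tm$; if $\tm_1$ is stuck, then by Lemma~\ref{l:stuck} it has the form $\inctx{\ctx_1}{\shift\vark{\tm'}}$, and extending the pure context $\ctx_1$ with the same frame shows $\tm$ is again a shift inside a pure context, hence stuck. When $\tm_1$ is a value $\val_1$ I recurse on $\tm_2$ and argue symmetrically with the frame $\vctx{\val_1}\mtctx$: if $\tm_2$ decomposes so does $\tm$, if $\tm_2$ is stuck so is $\tm$, and if $\tm_2$ is a value then $\tm$ is itself the $\RRbeta$-redex, decomposed by the empty context $\mtctx$. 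The one bookkeeping subtlety is that, because contexts are read inside-out, ``extending'' a subcontext means plugging the frame into its empty-context leaf; evaluation (and pure) contexts are closed under this composition, which is what makes the extended contexts legal.

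The reset case $\tm = \reset{\tm'}$ is where delimited control enters. Recurring on $\tm'$: if $\tm'$ decomposes as $\inctx{\rctx'}\redex$ then $\tm = \inctx{\resetctx{\rctx'}}\redex$, and if $\tm'$ is a value then $\tm$ is the $\RRreset$-redex under $\mtctx$. The crucial subcase is a stuck $\tm'$: by Lemma~\ref{l:stuck} it is exactly $\inctx{\ctx'}{\shift\vark{\tm''}}$ for a \emph{pure} context $\ctx'$, so $\tm = \reset{\inctx{\ctx'}{\shift\vark{\tm''}}}$ is precisely the shape of the $\RRshift$-redex; a term stuck only for want of a delimiter becomes reducible once reset supplies it. This is the single place where the distinction between pure and general contexts, together with the exact statement of Lemma~\ref{l:stuck}, is indispensable.

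For uniqueness I would note that the three alternatives are mutually exclusive: a value has a $\lambda$ at its root, so it is neither stuck nor of the form $\inctx\rctx\redex$ (a nonempty evaluation context always exposes an application or a reset), whereas a stuck term is irreducible while every $\inctx\rctx\redex$ reduces by one of the three rules. Within the decomposable case, the top constructor of $\tm$ together with whether its immediate subterms are values forces a single next step---which subterm carries the redex, or that $\tm$ is itself the redex---so no choice is ever available; uniqueness of $\redex$ and $\rctx$ then follows by the same induction. I expect no deep obstacle: the real work is the disciplined case analysis---keeping the pure-context invariant in the stuck subcases, producing general evaluation contexts in the reducible ones, and handling the inside-out context composition correctly---rather than any isolated hard step.
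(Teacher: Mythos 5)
The paper states this lemma without any proof, so there is nothing to compare against; your structural induction is the standard argument that fills this gap, and it is correct — in particular you rightly identify the crux, namely that a stuck subterm $\inctx{\ctx'}{\shift \vark {\tm''}}$ under a reset is exactly an $\RRshift$-redex (via Lemma~\ref{l:stuck}), and your mutual-exclusivity argument for uniqueness is sound. One notational slip in the reset case: under the paper's inside-out convention, $\inctx{\resetctx{\rctx'}}{\redex}$ denotes $\inctx{\rctx'}{\reset \redex}$, not the intended $\reset{\inctx{\rctx'}{\redex}}$; consistently with how you handled the application case, the decomposing context should be $\rctx'$ with the frame $\resetctx{\mtctx}$ grafted at its $\mtctx$ leaf.
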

 
Given a relation $\rel$ on terms, we write $\rtclo\rel$ for the
transitive and reflexive closure of $\rel$. We define the evaluation
relation of $\lamshift$ as follows.
\begin{definition}
  We write $\tm \evalcbv \tm'$ if $\tm \clocbv \tm'$ and $\tm' \not\redcbv$. 
\end{definition}
The result of the evaluation of a closed term, if it exists, is either
a value or a stuck term. If a term $\tm$ admits an infinite reduction
sequence, we say it \emph{diverges}, written $\tm \divcbv$. In the
rest of the article, we use extensively $\Omega = \app {\lamp \varx
  {\app \varx \varx}}{\lamp \varx {\app \varx \varx}}$ as an example
of such a term.

\subsection{Contextual Equivalence}
\label{ss:context-equiv}

In this section, we discuss the possible definitions of a Morris-style
contextual equivalence for the calculus $\lamshift$. As usual, the
idea is to express that two terms are equivalent iff they cannot be
distinguished when put in an arbitrary context. The question is then
what kind of behavior we want to observe. As in the regular
$\lambda$-calculus we could observe only termination (\ie does a term
reduce to a value or not), leading to the following relation.

\begin{definition}
  Let $\tmzero$, $\tmone$ be closed terms. We write $\tmzero \ctxequivone
  \tmone$ if for all closed $\cctx$, $\inctx \cctx \tmzero \evalcbv \valzero$
  implies $\inctx \cctx \tmone \evalcbv \valone$, and conversely for $\inctx
  \cctx \tmone$.
\end{definition}
This definition does not mention stuck terms; as a result, they can be
equated with diverging terms. For example, let $\tmzero = \app
{(\shift \vark {\app \vark {\lam \varx \varx})}} \Omega$, $\tmone =
\Omega$, and $\cctx$ be a closed context. If $\inctx \cctx \tmzero
\evalcbv \valzero$, then we can prove that for all closed $\tm$, there
exists $\val$ such that $\inctx \cctx \tm \evalcbv \val$ (roughly,
because $\tm$ is never evaluated; see Lemma~\ref{l:aone} in
Appendix~\ref{a:context-equiv}). In particular, we have $\inctx \cctx
\tmone \evalcbv \valone$. Hence, we have $\tmzero \ctxequivone
\tmone$.

A more fine-grained analysis is possible, by observing stuck terms. 
\begin{definition}
  Let $\tmzero$, $\tmone$ be closed terms. We write $\tmzero \ctxequivtwo
  \tmone$ if for all closed $\cctx$,
  \begin{itemize}
  \item $\inctx \cctx \tmzero \evalcbv \valzero$ implies $\inctx
    \cctx \tmone \evalcbv \valone$;
  \item $\inctx \cctx \tmzero \evalcbv \tmzero'$, where $\tmzero'$
    is stuck, implies $\inctx \cctx \tmone \evalcbv \tmone'$, with
    $\tmone'$ stuck as well;
  \end{itemize}
  and conversely for $\inctx \cctx \tmone$.
\end{definition}
The relation $\ctxequivtwo$ distinguishes the terms $\tmzero$ and $\tmone$
defined above. We believe $\ctxequivtwo$ is more interesting because it gives
more information on the behavior of terms; consequently, we use it as the
contextual equivalence for $\lamshift$. Henceforth, we simply write $\ctxequiv$
for $\ctxequivtwo$.

The relation $\ctxequiv$, like the other equivalences on terms defined in this
article, can be extended to open terms in the following way.
\begin{definition}
  Let $\rel$ be a relation on closed terms. The \emph{open extension}
  of $\rel$, written $\open\rel$, is defined on open terms as: we write
  $\tmzero \open\rel \tmone$ if for every substitution $\subs$ which
  closes $\tmzero$ and $\tmone$, $\tmzero \subs \rel \tmone \subs$
  holds.
\end{definition}

\begin{remark}
  Contextual equivalence can be defined directly on open terms by requiring that
  the context $\cctx$ binds the free variables of the related terms. The
  resulting relation would be equal to $\open\ctxequiv$ \cite{Gordon:TCS99}.
\end{remark}

\section{Bisimilarity for $\lamshift$}
\label{s:bisimilarity}

In this section, we define an applicative bisimilarity and prove it
equal to contextual equivalence.

\subsection{Labelled Transition System}

To define the bisimilarity for $\lamshift$, we propose a labelled
transition system (LTS), where the possible interactions of a term
with its environment are encoded in the labels. Figure \ref{f:lts}
defines a LTS $\tmzero \lts\act \tmone$ with three kinds of
transitions. An \emph{internal action} $\tm \lts\tau \tm'$ is an
evolution from $\tm$ to $\tm'$ without any help from the surrounding
context; it corresponds to a reduction step from $\tm$ to $\tm'$. The
transition $\valzero \lts \valone \tm$ expresses the fact that
$\valzero$ needs to be applied to another value $\valone$ to evolve,
reducing to $\tm$. Finally, the transition $\tm \lts \ctx \tm'$ means
that $\tm$ is stuck, and when $\tm$ is put in a context $\ctx$
enclosed in a reset, the capture can be triggered, the result of which
being $\tm'$.

\begin{figure}[t]
\begin{mathpar}
  \inferrule{ }
  {\app {\lamp \varx \tm} \val \ltstau \subst \tm \varx \val}~\LTSbeta
  \and
  \inferrule{ }
  {\reset \val \ltstau \val}~\LTSreset
  \and
  \inferrule{\tmzero \ltstau \tmzero'}
  {\app \tmzero \tmone \ltstau \app {\tmzero'} \tmone}~\LTScompl
  \and
  \inferrule{\tm \ltstau \tm'}
  {\app \val \tm \ltstau \app \val {\tm'}}~\LTScompr
  \and
  \inferrule{\tm \ltstau \tm'}
  {\reset \tm \ltstau \reset {\tm'}}~\LTScompreset
  \and
  \inferrule{\tm \lts\mtctx \tm'}
  {\reset \tm \ltstau \tm'}~~\LTScaptreset
  \and
  \inferrule{ }
  {\lam \varx \tm \lts\val \subst \tm \varx \val}~\LTSval
  \and
  \inferrule{\varx \notin \fv \ctx}
  {\shift \vark \tm \lts\ctx \reset {\subst \tm \vark {\lam \varx {\reset
          {\inctx \ctx \varx}}}}}~\LTSshift
  \and
  \inferrule{\tmzero \lts{\apctx \ctx \tmone} \tmzero'}
  {\app \tmzero \tmone \lts\ctx \tmzero'}~\LTScaptl
  \and
  \inferrule{\tm \lts{\vctx \val \ctx} \tm'}
  {\app \val \tm \lts\ctx \tm'}~\LTScaptr
\end{mathpar}
\caption{Labelled Transition System}
\label{f:lts}
\end{figure}

Most rules for internal actions (Fig. \ref{f:lts}) are
straightforward; the rules $\LTSbeta$ and $\LTSreset$ mimic the
corresponding reduction rules, and the compositional rules
$\LTScompr$, $\LTScompl$, and $\LTScompreset$ allow internal actions to happen
within any evaluation context. The rule $\LTScaptreset$ for context capture
is explained later. Rule $\LTSval$ defines the only possible
transition for values. Note that while both rules $\LTSbeta$ and
$\LTSval$ encode $\beta$-reduction, they are quite different in
nature; in the former, the term $\app{\lamp \varx \tm} \val$ can
evolve by itself, without any help from the surrounding context, while
the latter expresses the possibility for $\lam \varx \tm$ to evolve
only if a value $\val$ is provided by the environment.

The rules for context capture are built following the principles of
complementary semantics developed in \cite{Lenglet-al:CONCUR09}. The
label of the transition $\tm \lts\ctx \tm'$ contains what the
environment needs to provide (a context $\ctx$, but also an enclosing
reset, left implicit) for the stuck term $\tm$ to reduce to
$\tm'$. Hence, the transition $\tm \lts\ctx \tm'$ means that we have
$\reset {\inctx \ctx \tm} \lts\tau \tm'$ by context capture. For
example, in the rule $\LTSshift$, the result of the capture of $\ctx$
by $\shift \vark \tm$ is $\reset{\subst \tm \vark {\lam \varx {\reset
      {\inctx \ctx \varx}}}}$.

In rule $\LTScaptl$, we want to know the result of the capture of
$\ctx$ by the term $\app \tmzero \tmone$, assuming $\tmzero$ contains
an operator shift. Under this hypothesis, the capture of $\ctx$ by
$\app \tmzero \tmone$ comes from the capture of $\apctx \ctx \tmone$
by $\tmzero$. Therefore, as premise of the rule $\LTScaptl$, we check
that $\tmzero$ is able to capture $\apctx \ctx \tmone$, and the result
$\tmzero'$ of this transition is exactly the result we want for the
capture of $\ctx$ by $\app \tmzero \tmone$. The rule $\LTScaptr$
follows the same pattern. Finally, a stuck term $\tm$ enclosed in a
reset is able to perform an internal action (rule $\LTScaptreset$); we
obtain the result $\tm'$ of the transition $\reset \tm \lts\tau \tm'$
by letting $\tm$ capture the empty context, \ie by considering the
transition $\tm \lts\mtctx \tm'$.

\begin{example}
  With the same notations as in Example \ref{e:reduction}, we illustrate how
  the LTS handles capture by considering the transition from $\reset{\app
    {\appp i {\shift \vark \omega}}{\appp \omega \omega}}$. 
  \begin{mathpar}
    \inferrule*[Right=\LTScaptreset]{ 
      \inferrule*[Right=\LTScaptl]{
        \inferrule*[Right=\LTScaptr]{
          \inferrule*[Right=\LTSshift]{ }
          {\shift \vark \omega \lts{\vctx i {(\apctx \mtctx {\appp \omega
                  \omega})}} \reset \omega}}
        {
          \app i {\shift \vark \omega} \lts{\apctx \mtctx {\appp \omega
              \omega}} \reset \omega}}
      {
        \app{\appp i {\shift \vark \omega}}{\appp \omega \omega} \lts \mtctx 
        \reset \omega}}
    {\reset{\app {\appp i {\shift \vark \omega}}{\appp \omega \omega}} \lts\tau
      \reset \omega}
  \end{mathpar}
  Reading the tree from bottom to top, we see that the rules
  $\LTScaptreset$, $\LTScaptl$, and $\LTScaptr$ build the captured
  context in the label by deconstructing the initial term. Indeed, the
  rule $\LTScaptreset$ removes the outermost reset, and initiates the
  context in the label with $\mtctx$. The rules $\LTScaptl$ and
  $\LTScaptr$ then successively remove the outermost application and
  store it in the context. The process continues until a shift
  operator is found; then we know the captured context is completed,
  and the rule $\LTSshift$ computes the result of the capture. This
  result is then simply propagated from top to bottom by the other
  rules.  
\end{example}

The LTS corresponds to the reduction semantics and exhibits the observable
terms (values and stuck terms) of the language in the following way. 

\begin{lemma}
  \label{l:lts-redcbv-main}
  The following hold:
  \begin{itemize}
  \item We have $\ltstau \,=\, \redcbv$.
  \item If $\tm \lts\ctx \tm'$, then $\tm$ is a stuck term, and $\reset{\inctx
      \ctx \tm} \ltstau \tm'$.
  \item If $\tm \lts\val \tm'$, then $\tm$ is a value, and $\app \tm \val
    \ltstau \tm'$.
  \end{itemize}
\end{lemma}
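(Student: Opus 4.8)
The plan is to take the three items in a convenient order, proving the two auxiliary bullets essentially by rule induction and then deriving the equality $\ltstau = \redcbv$ from them together with the unique-decomposition property. I would dispatch the third bullet first, since it is pure inversion: the only rule concluding a transition $\tm \lts\val \tm'$ is $\LTSval$, so necessarily $\tm = \lam \varx {\tm''}$ is a value and $\tm' = \subst {\tm''} \varx \val$; then $\app \tm \val = \app {\lamp \varx {\tm''}} \val \ltstau \subst {\tm''} \varx \val = \tm'$ is exactly an instance of $\LTSbeta$.

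For the capture bullet I would first prove the forward analysis by induction on the derivation of $\tm \lts\ctx \tm'$, establishing that $\tm$ is stuck and $\reset {\inctx \ctx \tm} \redcbv \tm'$ (by an $\RRshift$ step). The base case is $\LTSshift$, where $\tm = \shift \vark {\tm_0}$ is stuck by Lemma~\ref{l:stuck} (empty pure context) and $\reset {\inctx \ctx \tm}$ is precisely an $\RRshift$-redex whose contractum is $\tm'$. In the cases $\LTScaptl$ and $\LTScaptr$ the term is an application $\app \tmzero \tmone$ or $\app \val \tmone$ whose left/right subterm carries, by the induction hypothesis, a capture transition under the enlarged label $\apctx \ctx \tmone$ resp. $\vctx \val \ctx$; that subterm is stuck, hence of the form $\inctx {\ctx_0} {\shift \vark {\tm_0}}$, so the whole application is again stuck by Lemma~\ref{l:stuck}, and---reading contexts inside--out---$\inctx {(\apctx \ctx \tmone)} \tmzero = \inctx \ctx {\app \tmzero \tmone} = \inctx \ctx \tm$ (symmetrically for $\LTScaptr$), so the reduction $\reset {\inctx \ctx \tm} \redcbv \tm'$ provided by the hypothesis is the one required. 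Once the first bullet is available, it upgrades this $\redcbv$ to $\ltstau$, completing item two.

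It remains to prove $\ltstau = \redcbv$. For $\ltstau \subseteq \redcbv$ I would induct on the derivation of $\tm \ltstau \tm'$: rules $\LTSbeta$ and $\LTSreset$ are the base contractions of $\RRbeta$ and $\RRreset$ (with $\rctx = \mtctx$); the compositional rules $\LTScompl$, $\LTScompr$, $\LTScompreset$ follow because $\redcbv$ is closed under the evaluation-context frames $\app \hole \tmone$, $\app \val \hole$, and $\reset \hole$ (each such frame composed with an evaluation context is again an evaluation context); and $\LTScaptreset$ is handled by the capture analysis above, since its premise $\tm \lts\mtctx \tm'$ yields $\reset \tm = \reset {\inctx \mtctx \tm} \redcbv \tm'$. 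For the converse $\redcbv \subseteq \ltstau$ I would use the unique decomposition $\tm = \inctx \rctx \redex$: each base redex contracts by an $\ltstau$ step ($\LTSbeta$, $\LTSreset$, and---for $\reset {\inctx \ctx {\shift \vark {\tm_0}}}$---by $\LTScaptreset$ applied to a capture transition $\inctx \ctx {\shift \vark {\tm_0}} \lts\mtctx \tm'$), and this step is then propagated through $\rctx$ by iterating $\LTScompl$/$\LTScompr$/$\LTScompreset$, one evaluation frame at a time.

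The one genuinely delicate ingredient, which I would isolate as a separate sublemma, is the construction of the capture transition for an arbitrary pure context: $\inctx \ctx {\shift \vark {\tm_0}} \lts\mtctx \reset {\subst {\tm_0} \vark {\lam \varx {\reset {\inctx \ctx \varx}}}}$. I would prove it by induction on $\ctx$, peeling its outermost frame with $\LTScaptl$/$\LTScaptr$ and generalizing the empty label $\mtctx$ to an arbitrary pure context that accumulates the peeled frames; the base case $\ctx = \mtctx$ is exactly $\LTSshift$. The subtlety is to match this label accumulation, under the inside--out reading of contexts, with the captured context appearing in the $\RRshift$ contractum, which amounts to the associativity of context composition. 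This bookkeeping is where I expect the main effort, the remaining arguments being routine rule inductions together with the fact that reduction is preserved under evaluation contexts.
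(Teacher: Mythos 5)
Your proposal is correct and follows essentially the same route as the paper's proof: a rule induction over capture transitions for the soundness direction (stuckness plus the corresponding $\RRshift$ reduction), an induction over pure and evaluation contexts for the converse inclusion $\redcbv \,\subseteq\, \ltstau$, and plain inversion for the value bullet. The only difference is packaging: the paper factors the forward analysis into an explicit decomposition lemma (Lemma~\ref{l:decompose-shift}) and merely asserts the converse constructions (``one can check \ldots by induction on $\rctx$''), whereas you inline the forward rule induction into the second bullet and instead isolate as a named sublemma the construction $\inctx \ctx {\shift \vark \tms} \lts\mtctx \reset{\subst \tms \vark {\lam \varx {\reset{\inctx \ctx \varx}}}}$ with its label-generalizing induction---precisely the step the paper leaves implicit, so your write-up is, if anything, more explicit where the paper is terse.
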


\subsection{Applicative Bisimilarity}

We now define the notion of applicative bisimilarity for $\lamshift$. We write
$\cloltstau$ for the reflexive and transitive closure of $\ltstau$. We define
the weak delay\footnote{where internal steps are allowed before, but not after a
  visible action} transition $\ltswk\act$ as $\cloltstau$ if $\act=\tau$ and as
$\cloltstau \lts\act$ otherwise. The definition of the (weak delay) bisimilarity
is then straightforward.
\begin{definition}
  A relation $\rel$ on closed terms is an applicative simulation if $\tmzero
  \rel \tmone$ implies that for all $\tmzero \lts\act \tmzero'$, there exists
  $\tmone'$ such that $\tmone \ltswk\act \tmone'$ and $\tmzero' \rel \tmone'$.

  A relation $\rel$ on closed terms is an applicative bisimulation if $\rel$
  and $\inv\rel$ are simulations. Applicative bisimilarity $\bisim$ is the
  largest applicative bisimulation.
\end{definition}
In words, two terms are equivalent if any transition from one is matched by a
weak transition with the same label from the other. As in the $\lambda$-calculus
\cite{Abramsky-Ong:IaC93,Gordon:TCS99}, it is not mandatory to test the
internal steps when proving that two terms are bisimilar, because of the
following result.
\begin{lemma}
  \label{l:bisim-eval}
  If $\tm \ltstau \tm'$ (respectively $\tm \evalcbv \tm'$) then $\tm \bisim
  \tm'$.
\end{lemma}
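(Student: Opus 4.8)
The plan is to note that both assertions are instances of a single inclusion, $\cloltstau \subseteq \bisim$. Indeed $\tm \ltstau \tm'$ gives $\tm \cloltstau \tm'$ directly, and $\tm \evalcbv \tm'$ unfolds to $\tm \clocbv \tm'$ with $\tm' \not\redcbv$; since $\ltstau \,=\, \redcbv$ by Lemma~\ref{l:lts-redcbv-main}, we have $\clocbv \,=\, \cloltstau$, so again $\tm \cloltstau \tm'$. It therefore suffices to exhibit a bisimulation containing $\cloltstau$. I would take $\rel$ to be the symmetric closure of $\cloltstau$, that is, $\tm_1 \rel \tm_2$ iff $\tm_1 \cloltstau \tm_2$ or $\tm_2 \cloltstau \tm_1$. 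Since $\rel$ is symmetric by construction, showing it is a bisimulation reduces to showing it is a simulation, which I do by cases on the disjunction defining $\tm_1 \rel \tm_2$.

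The argument rests on two facts supplied by Lemma~\ref{l:lts-redcbv-main}: first, $\ltstau$ coincides with $\redcbv$, which is deterministic by unique decomposition, so each term has at most one internal transition; second, a visible transition $\tm \lts\ctx \tm'$ (resp.\ $\tm \lts\val \tm'$) emanates only from a stuck term (resp.\ a value), \ie from a normal form, which hence has no internal transition. In particular, internal and visible transitions are mutually exclusive, which keeps the case analysis short. Recall also that $\ltswk\tau$ is reflexive, since $\cloltstau$ includes zero steps.

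Suppose $\tm_1 \rel \tm_2$ and $\tm_1 \lts\act \tm_1'$, and consider first the case $\tm_1 \cloltstau \tm_2$. If $\act$ is visible, then $\tm_1$ is a normal form, so $\tm_1 = \tm_2$ and $\tm_2$ matches the move verbatim with $\tm_1' \rel \tm_1'$. If $\act = \tau$ and $\tm_1 = \tm_2$, then $\tm_2 \ltswk\tau \tm_1'$ and we are done. If $\act = \tau$ and $\tm_1 \neq \tm_2$, determinism forces the first step of $\tm_1 \cloltstau \tm_2$ to be exactly $\tm_1 \lts\tau \tm_1'$, so $\tm_1' \cloltstau \tm_2$, and $\tm_2$ matches by staying put, the resulting pair $\tm_1' \rel \tm_2$ holding by $\tm_1' \cloltstau \tm_2$. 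Now consider the case $\tm_2 \cloltstau \tm_1$. Then $\tm_2 \cloltstau \tm_1 \lts\act \tm_1'$ already realises $\tm_2 \ltswk\act \tm_1'$ (it is $\cloltstau$ when $\act = \tau$ and $\cloltstau \lts\act$ when $\act$ is visible), so $\tm_1'$ matches itself, with $\tm_1' \rel \tm_1'$ by reflexivity. Hence $\rel$ is a bisimulation, $\rel \subseteq \bisim$, and both claims follow.

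The only delicate point is the internal case $\act = \tau$ with $\tm_1 \neq \tm_2$ in the forward direction: there I crucially use that $\ltstau$ is deterministic, so that the move of $\tm_1$ necessarily lies on the reduction path already committed to by $\tm_1 \cloltstau \tm_2$; otherwise ``staying put'' would not yield a related pair. Everything else is immediate, so I expect this determinism step to be the main (and essentially only) obstacle, the visible cases being dispatched by the mutual exclusivity of internal and visible actions.
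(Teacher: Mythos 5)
Your proof is correct and rests on exactly the same ingredients as the paper's: the paper simply observes that $\{ (\tm, \tm')\,,\, \tm \ltstau \tm' \}$ is an applicative bisimulation, which is justified by the same two facts you isolate (determinism of $\ltstau \,=\, \redcbv$ and the fact that visible actions emanate only from values or stuck terms, which have no internal transitions). Your packaging—the symmetric closure of $\cloltstau$—is a slightly more self-contained variant: it already contains the identity pairs needed to close the matching (which the paper's one-step relation, strictly speaking, omits) and it yields the $\evalcbv$ case directly instead of appealing to reflexivity and transitivity of $\bisim$.
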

Lemma \ref{l:bisim-eval} holds because $\{ (\tm, \tm')\,,\,\tm \ltstau \tm' \}$
is an applicative bisimulation. Consequently, applicative bisimulation can be
defined in terms of big-step transitions as follows.

\begin{definition}
  A relation $\rel$ on closed terms is a big-step applicative simulation if
  $\tmzero \rel \tmone$ implies that for all $\tmzero \ltswk\act \tmzero'$ with
  $\act \neq \tau$, there exists $\tmone'$ such that $\tmone \ltswk\act \tmone'$
  and $\tmzero' \rel \tmone'$. 

  A relation $\rel$ on closed terms is a big-step applicative bisimulation if
  $\rel$ and $\inv\rel$ are big-step applicative simulations. Big-step
  applicative bisimilarity $\apbisim$ is the largest big-step applicative
  bisimulation.
\end{definition}
Henceforth, we drop the adjective ``applicative'' and refer to the two kinds of
relations simply as ``bisimulation'' and ``big-step bisimulation''.
\begin{lemma}
  \label{l:apbisim-bisim}
  We have $\bisim\, =\, \apbisim$.
\end{lemma}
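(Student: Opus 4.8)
The plan is to prove the two inclusions $\bisim \,\subseteq\, \apbisim$ and $\apbisim \,\subseteq\, \bisim$ separately, in each case by exhibiting the larger relation as a bisimulation of the appropriate kind and then appealing to maximality. Throughout I rely on two structural facts about the LTS, both read off from Lemma~\ref{l:lts-redcbv-main}: first, $\ltstau$ coincides with the deterministic reduction $\redcbv$, so every term has at most one internal successor and the $\tau$-reduction sequence issuing from a term is \emph{linear}; second, the terms capable of an internal action (the reducible ones) are exactly those that can perform \emph{no} visible action, since a transition $\lts\val$ forces its source to be a value and $\lts\ctx$ forces it to be stuck. Hence from any term the $\tau$-transitions and the visible transitions are mutually exclusive.

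For $\bisim \subseteq \apbisim$ I would show that $\bisim$ is itself a big-step bisimulation. Given $\tmzero \bisim \tmone$ and a visible weak transition $\tmzero \ltswk\act \tmzero'$ with $\act \neq \tau$, I decompose it as $\tmzero \cloltstau \tmzero'' \lts\act \tmzero'$. By induction on the length of the leading $\tau$-sequence, applying the small-step simulation clause of $\bisim$ at each step (each $\tmone$-response being a weak transition $\cloltstau$), I obtain $\tmone \cloltstau \tmone''$ with $\tmzero'' \bisim \tmone''$. The final visible step is matched by the same clause, yielding $\tmone'' \ltswk\act \tmone'$ with $\tmzero' \bisim \tmone'$; concatenating the two $\tau$-prefixes gives $\tmone \ltswk\act \tmone'$, as required. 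By symmetry $\inv{\bisim}$ is a big-step simulation as well, so $\bisim$ is contained in $\apbisim$.

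For the converse $\apbisim \subseteq \bisim$ I would show that $\apbisim$ is a small-step bisimulation. A visible challenge $\tmzero \lts\act \tmzero'$ (with $\act \neq \tau$) is a special case of a weak visible transition, hence matched directly by the big-step clause of $\apbisim$. The delicate case is an internal challenge $\tmzero \ltstau \tmzero'$, for which the big-step definition offers no direct information. Here I would first establish the big-step analogue of Lemma~\ref{l:bisim-eval}, namely that $\tm \ltstau \tm'$ implies $\tm \apbisim \tm'$, by checking that $\cloltstau$ is a big-step bisimulation: this is exactly where determinism and the $\tau$/visible exclusivity are used, since along the linear reduction path any term reaching a visible action passes through every intermediate $\tau$-successor, so a visible weak transition from $\tm$ and one from $\tm'$ target the same term, matched by the reflexivity of $\cloltstau$. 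Granting this, and recalling that $\apbisim$ is an equivalence (symmetry and transitivity of the largest big-step bisimulation being standard), from $\tmzero \ltstau \tmzero'$ I get $\tmzero \apbisim \tmzero'$, whence $\tmzero' \apbisim \tmone$; the empty response $\tmone \ltswk\tau \tmone$ then discharges the internal challenge. Thus $\apbisim$ and its inverse are small-step simulations, giving $\apbisim \subseteq \bisim$ and the claimed equality.

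I expect the main obstacle to be precisely this internal-challenge case of the second inclusion: unlike in the plain $\lambda$-calculus, one must be careful to justify the auxiliary fact that $\tau$-steps preserve big-step bisimilarity, and its proof genuinely depends on the determinism of $\redcbv$ together with the observation that a reducible term admits no visible transition. The remaining steps are routine transition-chasing.
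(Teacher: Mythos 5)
Your proof is correct and follows essentially the same route as the paper's: you show that $\bisim$ is a big-step bisimulation and that $\apbisim$ is a (small-step) bisimulation, the internal-step case being discharged by the auxiliary fact that $\ltstau$ preserves $\apbisim$ --- precisely the ``variant of Lemma~\ref{l:bisim-eval} involving $\apbisim$'' that the paper invokes. Your determinism-based justification of that auxiliary fact (via $\cloltstau$ being a big-step bisimulation) merely fills in a detail the paper leaves implicit.
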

The proof is by showing that $\bisim$ is a big step bisimulation, and
that $\apbisim$ is a bisimulation (using a variant of Lemma
\ref{l:bisim-eval} involving $\apbisim$). As a result, if $\rel$ is a
big-step bisimulation, then $\rel \, \subseteq \, \apbisim \,
\subseteq \, \bisim$. We work with both styles (small-step and
big-step), depending on which one is easier to use in a given proof.

\begin{example}
  Assuming we add lists and recursion to the calculus, we informally
  prove that the function {\tt copy} defined in Example \ref{e:copy}
  is bisimilar to its effect-free variant, defined below.
  \begin{quote}
\begin{verbatim}
fun copy2 nil = nil
  | copy2 (x::xs) = x::(copy2 xs)
\end{verbatim}
  \end{quote}
  To this end, we define the relations (where we let $l$ range over lists, and
  $e$ over their elements)
  \begin{eqnarray*}
    \rel_1 & = & \{ 
    (\reset {e_1:: \reset {e_2 :: \ldots \reset {e_n :: \reset {{\tt visit}
            ~l}}}},\,
    e_1::(e_2:: \ldots e_n::({\tt copy2}~l))) \}\\
    \rel_2 & = & \{(\reset {e_1:: \reset {e_2 :: \ldots \reset {e_n :: \reset l}}},\,           
    e_1::(e_2:: \ldots e_n::l)) \}
  \end{eqnarray*}
  and we prove that $\rel_1 \cup \rel_2 \cup \{ (l,\, l)\}$ is a
  bisimulation. First, let $\tmzero \rel_1 \tmone$. If $l$ is
  empty, then both {\tt visit}~$l$ and {\tt copy2}~$l$ reduce to
  the empty list, and we obtain two terms related by
  $\rel_2$. Otherwise, we have $l=e_{n+1}::l'$, $\reset {{\tt
      visit}~l}$ reduces to $\reset {e_{n+1}::\reset{{\tt
        visit}~l'}}$, ${\tt copy2}~l$ reduces to $e_{n+1}::({\tt
    copy2}~l')$, and therefore $\tmzero$ and $\tmone$ reduce to terms
  that are still in $\rel_1$. Now, consider $\tmzero \rel_2
  \tmone$; the transition from $\tmzero$ removes the delimiter
  surrounding $l$, giving a term related by $\rel_2$ to $\tmone$ if
  there are still some delimiters left, or equal to $\tmone$ if all
  the delimiters are removed. Finally, two identical lists are clearly
  bisimilar.
\end{example}

\subsection{Soundness}

To prove soundness of $\bisim$ w.r.t. contextual equivalence, we show
that $\bisim$ is a congruence using \emph{Howe's method}, a well-known
congruence proof method initially developed for the $\lambda$-calculus
\cite{Howe:IaC96,Gordon:TCS99}. We briefly sketch the method and explain
how we apply it to $\bisim$; the complete proof can be found in
Appendix \ref{a:howe}.

The idea of the method is as follows: first, prove some basic properties of
\emph{Howe's closure} $\clohbisim$, a relation which contains $\bisim$ and is a
congruence by construction. Then, prove a simulation-like property for
$\clohbisim$. From this result, prove that $\clohbisim$ and $\bisim$ coincide on
closed terms. Because $\clohbisim$ is a congruence, it shows that $\bisim$ is a
congruence as well. The definition of $\clohbisim$ relies on the notion of
\emph{compatible refinement}; given a relation $\rel$ on open terms, the
compatible refinement $\comp\rel$ relates two terms iff they have the same
outermost operator and their immediate subterms are related by $\rel$. Formally,
it is inductively defined by the following rules.
\begin{mathpar}
  \inferrule{ }
  {\varx \comp\rel \varx}
  \and
  \hspace{-0.1em}\inferrule{\tmzero \rel \tmone}
  {\lam \varx \tmzero \comp\rel \lam \varx \tmone}
  \and
  \inferrule{\tmzero \rel \tmone \\ \tmzero' \rel \tmone'}
  {\app \tmzero {\tmzero'} \comp\rel \app \tmone {\tmone'}}
  \and
  \inferrule{\tmzero \rel \tmone}
  {\shift \vark \tmzero \comp\rel \shift \vark \tmone}
  \and
  \inferrule{\tmzero \rel \tmone}
  {\reset \tmzero \comp\rel \reset\tmone}
\end{mathpar}
Howe's closure $\clohbisim$ is inductively defined as the smallest
congruence containing $\open\bisim$ and closed under right composition
with $\open\bisim$.

\begin{definition}
  Howe's closure $\clohbisim$ is the smallest relation satisfying: 
  \begin{mathpar}
    \inferrule{\tmzero \open\bisim \tmone}
    {\tmzero \clohbisim \tmone}
    \and
    \inferrule{\tmzero \clohbisim\open\bisim \tmone}
    {\tmzero \clohbisim \tmone}
    \and
    \inferrule{\tmzero \comp\clohbisim \tmone}
    {\tmzero \clohbisim \tmone}
  \end{mathpar}
\end{definition}
By construction, $\clohbisim$ is a congruence (by the third rule of the definition), and composing on the
right with $\open\bisim$ gives some transitivity properties to
$\clohbisim$. In particular, it helps in proving the following
classical results (see \cite{Gordon:TCS99} for the proofs).
\begin{lemma}[Basic properties of $\clohbisim$]
  \label{l:properties}
  The following hold:
  \begin{itemize}
  \item For all $\tmzero$, $\tmone$, $\valzero$, and $\valone$, $\tmzero
    \clohbisim \tmone$ and $\valzero \clohbisim \valone$ implies $\subst \tmzero
    \varx {\valzero} \clohbisim \subst \tmone \varx {\valone}$.
  \item The relation $\rtclo{(\clohbisim)}$ is symmetric.
  \end{itemize}
\end{lemma}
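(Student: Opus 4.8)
The plan is to prove both properties by induction on the structure of the derivation of Howe's closure, relying on three elementary facts about the ingredients of $\clohbisim$. First, $\clohbisim$ is reflexive, since $\bisim$ (and hence $\open\bisim$) contains the identity and $\open\bisim \subseteq \clohbisim$. Second, $\open\bisim$ is symmetric, because $\bisim$ is (being the largest bisimulation). Third, $\open\bisim$ is substitutive \emph{on one side}: if $\tmzero \open\bisim \tmone$ then $\subst{\tmzero}{\varx}{\val} \open\bisim \subst{\tmone}{\varx}{\val}$ for every value $\val$; this is immediate from the definition of the open extension, since precomposing a closing substitution $\subs$ with $\varx \mapsto \val\subs$ turns the two required instances into instances already provided by $\tmzero \open\bisim \tmone$. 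I stress that $\open\bisim$ is \emph{not} known to be substitutive with two different (bisimilar) values on the two sides---that would essentially be the congruence property we are ultimately after---so the whole point is to bridge this gap using the structure of $\clohbisim$.

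For the substitutivity of $\clohbisim$ (first item), I would first dispatch the \emph{diagonal} instance: for every term $\tm$, $\valzero \clohbisim \valone$ implies $\subst{\tm}{\varx}{\valzero} \clohbisim \subst{\tm}{\varx}{\valone}$. This is a routine structural induction on $\tm$, using the hypothesis at the occurrences of $\varx$, reflexivity at the other variables, and one compatible-refinement step to reassemble each construct. The general statement then follows by induction on the derivation of $\tmzero \clohbisim \tmone$. In the compatible-refinement case $\tmzero \comp\clohbisim \tmone$, I push the substitution into the immediate subterms, apply the induction hypothesis to each, and reassemble with a final compatible-refinement step; the variable cases use $\valzero \clohbisim \valone$ (for $\varx$) or reflexivity (for the other variables). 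In the right-composition case $\tmzero \clohbisim t \open\bisim \tmone$, the induction hypothesis gives $\subst{\tmzero}{\varx}{\valzero} \clohbisim \subst{t}{\varx}{\valone}$, one-sided substitutivity of $\open\bisim$ gives $\subst{t}{\varx}{\valone} \open\bisim \subst{\tmone}{\varx}{\valone}$, and a last application of the right-composition rule yields the goal. The base case $\tmzero \open\bisim \tmone$ is handled identically, by inserting the diagonal instance $\subst{\tmzero}{\varx}{\valzero} \clohbisim \subst{\tmzero}{\varx}{\valone}$ and composing it on the right with $\subst{\tmzero}{\varx}{\valone} \open\bisim \subst{\tmone}{\varx}{\valone}$.

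For the symmetry of $\rtclo{(\clohbisim)}$ (second item), it suffices to show $\inv{\clohbisim} \subseteq \rtclo{(\clohbisim)}$, that is, $\tmzero \clohbisim \tmone$ implies $\tmone \mathrel{\rtclo{(\clohbisim)}} \tmzero$: applying $\rtclo{(\cdot)}$ and using $\rtclo{(\inv{\clohbisim})} = \inv{(\rtclo{(\clohbisim)})}$ gives $\inv{(\rtclo{(\clohbisim)})} \subseteq \rtclo{(\clohbisim)}$, and the converse inclusion follows by taking inverses. I would prove the claim by induction on the derivation of $\tmzero \clohbisim \tmone$. The base case uses symmetry of $\open\bisim$: from $\tmzero \open\bisim \tmone$ we get $\tmone \open\bisim \tmzero$, hence $\tmone \clohbisim \tmzero$. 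In the compatible-refinement case the induction hypothesis flips each immediate subterm, giving $\tmone \comp{(\rtclo{(\clohbisim)})} \tmzero$; since $\clohbisim$ is a congruence, rewriting one subterm at a time shows $\comp{(\rtclo{(\clohbisim)})} \subseteq \rtclo{(\comp{\clohbisim})} \subseteq \rtclo{(\clohbisim)}$, closing the case. In the right-composition case $\tmzero \clohbisim t \open\bisim \tmone$, the induction hypothesis gives $t \mathrel{\rtclo{(\clohbisim)}} \tmzero$ and symmetry of $\open\bisim$ gives $\tmone \clohbisim t$, so $\tmone \mathrel{\rtclo{(\clohbisim)}} t \mathrel{\rtclo{(\clohbisim)}} \tmzero$ and transitivity concludes.

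The only genuinely delicate point is the base case of the substitutivity argument: the naive attempt to substitute $\valzero$ on the left and $\valone$ on the right of $\tmzero \open\bisim \tmone$ in one move is exactly the two-sided substitutivity of $\open\bisim$, which is unavailable before congruence is known. The device that unblocks it---first travel along the diagonal with $\clohbisim$ (which \emph{is} reflexive and compatible), then absorb the difference between $\tmzero$ and $\tmone$ through the right-composition rule with the one-sidedly substitutive $\open\bisim$---is the crux of Howe's method, and is precisely what makes closure under right composition with $\open\bisim$ (the second defining rule of $\clohbisim$) indispensable rather than merely convenient.
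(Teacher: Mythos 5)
Correct, and essentially the paper's own route: the paper does not spell out a proof of Lemma~\ref{l:properties} but defers to Gordon~\cite{Gordon:TCS99}, and your argument---induction on the derivation of $\clohbisim$, with the diagonal substitution instance $\subst \tm \varx \valzero \clohbisim \subst \tm \varx \valone$ absorbed through the right-composition rule for the first item, and the inclusion $\inv{\clohbisim} \,\subseteq\, \rtclo{(\clohbisim)}$ for the second---is precisely that standard proof of Howe's method. Your local adaptations (compatible-refinement cases for shift and reset treated like any other operator, one-sided substitutivity of $\open\bisim$ read off from the definition of the open extension, and reflexivity of $\clohbisim$ feeding the one-subterm-at-a-time rewriting) are all sound.
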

The first item states that $\clohbisim$ is substitutive. This property
helps in establishing the simulation-like property of $\clohbisim$
(second step of the method). Let $\clohbisimc$ be the restriction of
$\clohbisim$ to closed terms. We cannot prove directly that
$\clohbisimc$ is a bisimulation, so we prove a stronger result
instead. We extend $\clohbisim$ to labels, by defining $\ctx
\clohbisim \ctx'$ as the smallest congruence extending $\clohbisim$
with the relation $\mtctx \clohbisim \mtctx$, and by adding the
relation $\tau \clohbisim \tau$.
\begin{lemma}[Simulation-like property]
  \label{l:sim-property-main}
  If $\tmzero \clohbisimc \tmone$ and $\tmzero \lts\act \tmzero'$, then for all
  $\act \clohbisimc \act'$, there exists $\tmone'$ such that $\tmone
  \ltswk{\act'} \tmone'$ and $\tmzero' \clohbisimc \tmone'$.
\end{lemma}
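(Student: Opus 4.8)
The plan is to proceed by induction on the derivation of the transition $\tmzero \lts\act \tmzero'$, with a case analysis on the last rule of Figure~\ref{f:lts}, and in each case to exploit a \emph{decomposition} property of Howe's closure: whenever $\tmzero \clohbisimc \tmone$ and $\tmzero$ is an application (respectively a $\lambda$-abstraction, a shift, or a reset), then $\tmone$ is $\bisim$-related to a term with the same outermost operator whose immediate subterms are $\clohbisim$-related to those of $\tmzero$. Such a decomposition follows from the three defining rules of $\clohbisim$ by a straightforward induction, and it lets me expose in $\tmone$ the structure matching the LTS rule governing $\tmzero$. Beyond this decomposition, the recurring ingredients are the substitutivity of $\clohbisim$ (Lemma~\ref{l:properties}), the fact that $\bisim$ is itself a bisimulation, and the right-composition rule $\clohbisim\,\open\bisim \,\subseteq\, \clohbisim$ used to recombine the pieces at the end of each case.

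Consider first the transitions that do not involve context capture. For $\LTSbeta$, from $\app{\lamp\varx\tm}\val \clohbisimc \tmone$ I decompose to obtain $\lamp\varx{\tm'}$ and a value $\val'$ with $\tm \clohbisim \tm'$, $\val \clohbisim \val'$, and $\app{\lamp\varx{\tm'}}{\val'} \bisim \tmone$; firing $\LTSbeta$ on the latter and matching through $\bisim$ produces $\tmone'$, while substitutivity gives $\subst\tm\varx\val \clohbisim \subst{\tm'}\varx{\val'} \bisim \tmone'$, hence $\subst\tm\varx\val \clohbisimc \tmone'$. Rules $\LTSreset$ and $\LTSval$ are analogous, the related label $\val \clohbisim \val'$ supplying the argument fed to the decomposed abstraction in the case of $\LTSval$. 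For the compatible rules $\LTScompl$, $\LTScompr$, and $\LTScompreset$ I apply the induction hypothesis to the sub-transition at the label $\tau \clohbisim \tau$ and re-embed the resulting weak internal transition into the surrounding application or reset, using that $\cloltstau$ is preserved by these constructs.

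The delicate cases are the capture transitions $\LTSshift$, $\LTScaptl$, $\LTScaptr$, and $\LTScaptreset$, where the label itself carries a context that must be tracked. I would first establish a \emph{context-plugging} lemma for the extension of $\clohbisim$ to labels: if $\ctx \clohbisim \ctx'$ and $\tm \clohbisim \tm'$ then $\inctx\ctx\tm \clohbisim \inctx{\ctx'}{\tm'}$, which is immediate from $\clohbisim$ on contexts being the least congruence extending the term relation with $\mtctx \clohbisim \mtctx$. For $\LTScaptl$, the sub-transition $\tmzero_1 \lts{\apctx\ctx{\tmzero_2}} \tmzero'$ is matched, by the induction hypothesis, at the related label $\apctx{\ctx'}{u_2}$, where $u_2$ is the decomposition partner of $\tmzero_2$; the resulting weak transition of $u_1$ re-embeds via $\LTScompl$ and $\LTScaptl$ into $\app{u_1}{u_2} \ltswk{\ctx'} u_1'$, and $\LTScaptr$, $\LTScaptreset$ are treated symmetrically. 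I expect the main obstacle to be $\LTSshift$: after decomposing $\shift\vark\tm \clohbisimc \tmone$ into $\tm \clohbisim u$ with $\shift\vark{u} \bisim \tmone$ and firing $\LTSshift$ at $\ctx'$, I must show that the reconstructed terms $\reset{\subst\tm\vark{\lamp\varx{\reset{\inctx\ctx\varx}}}}$ and $\reset{\subst{u}\vark{\lamp\varx{\reset{\inctx{\ctx'}\varx}}}}$ are $\clohbisim$-related. This is precisely where context plugging and substitutivity must combine: plugging with the hole filled by $\varx$ (using $\varx \comp\clohbisim \varx$) yields $\reset{\inctx\ctx\varx} \clohbisim \reset{\inctx{\ctx'}\varx}$, so the two captured continuations are $\clohbisim$-related, and substitutivity then propagates this through the substitution for $\vark$ and the enclosing reset. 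Once this is settled, matching through $\bisim$ and absorbing with the right-composition rule closes the case as before.
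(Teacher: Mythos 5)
Your proof strategy is genuinely different from the paper's: you induct on the derivation of the transition $\tmzero \lts\act \tmzero'$ and invoke a decomposition property of $\clohbisim$ in each LTS case, whereas the paper inducts on the size of the derivation of $\tmzero \clohbisimc \tmone$, confining the case analysis on LTS rules to the compatible-refinement case (and handling the right-composition rule via the closing-substitution Lemma~\ref{l:same-size}). This alternative structuring is workable in principle, and your treatment of $\LTSshift$ (context plugging, \ie the paper's Lemma~\ref{l:inctx}, combined with substitutivity), of $\LTScaptl$, and of the compositional rules is sound, modulo the bookkeeping that decomposition partners may be open terms. However, there is a genuine gap in the cases where \emph{values} occur as immediate subterms, namely $\LTSbeta$, $\LTSreset$, and $\LTScaptr$.

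The decomposition that actually follows by induction on the three defining rules of $\clohbisim$ is only one level deep: from $\app {\lamp \varx \tm} \val \clohbisimc \tmone$ you obtain terms $u_1$, $u_2$ with $\lamp \varx \tm \clohbisim u_1$, $\val \clohbisim u_2$, and $\app {u_1}{u_2} \open\bisim \tmone$, but $u_1$ need \emph{not} be a $\lambda$-abstraction and $u_2$ need \emph{not} be a value. The culprit is the right-composition rule $\clohbisim\open\bisim \,\subseteq\, \clohbisim$: bisimilarity relates values to non-value terms (\eg $\lam \varx \tm \open\bisim \reset{\lam \varx \tm}$ by Lemma~\ref{l:bisim-eval}), hence $\lam \varx \tm \clohbisim \reset {\lam \varx \tm}$, so a Howe partner of a value may be an arbitrary term that merely \emph{evaluates} to a value. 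You cannot repair this by decomposing $\lamp \varx \tm \clohbisim u_1$ once more and swapping $u_1$ for the resulting abstraction, because transporting that swap across $\app {u_1}{u_2} \open\bisim \tmone$ requires congruence of $\bisim$ with respect to application---precisely what this lemma is meant to establish. The missing ingredient, which the paper isolates as Lemma~\ref{l:cloh-val}, is semantic rather than syntactic: if $\valzero \clohbisimc \tmone$ then $\tmone \ltswk\tau \valone$ for some value $\valone$ with $\valzero \clohbisimc \valone$; its proof plays the bisimulation game and cannot be obtained by ``a straightforward induction'' on the rules of $\clohbisim$ alone. With it, your $\LTSbeta$ case goes through (the extra internal steps are absorbed by the weak transition $\ltswk{\act'}$), and note it is also needed before substitutivity can be invoked at all, since Lemma~\ref{l:properties} substitutes \emph{values}. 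Finally, $\LTScaptr$ is not symmetric to $\LTScaptl$ as you claim: in $\LTScaptl$ the term stored in the label is arbitrary, so a Howe-related label is immediate, whereas in $\LTScaptr$ the label $\vctx \val \ctx$ must contain a value by the grammar of pure contexts, so the partner of $\val$ must first be evaluated to a value via Lemma~\ref{l:cloh-val} before the matching label can even be formed.
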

Using Lemma \ref{l:sim-property-main} and the fact that $\rtclo{(\clohbisimc)}$
is symmetric (by the second item of Lemma~\ref{l:properties}), we can prove that
$\rtclo{(\clohbisimc)}$ is a bisimulation. Therefore, we have
$\rtclo{(\clohbisimc)} \,\subseteq\, \bisim$, and because $\bisim \,\subseteq\,
\clohbisimc \,\subseteq\, \rtclo{(\clohbisimc)}$ holds by construction, we can
deduce $\bisim \,=\, \clohbisimc$. Because $\clohbisimc$ is a congruence, we have
the following result.
\begin{theorem}
  The relation $\bisim$ is a congruence.
\end{theorem}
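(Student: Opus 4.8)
The plan is to complete Howe's method: all the structural ingredients are now in place, and what remains is to assemble them. Since $\clohbisim$ is by construction a congruence containing $\open\bisim$, it suffices to prove that $\bisim$ and $\clohbisimc$ coincide on closed terms; the congruence of $\bisim$ will then follow from that of $\clohbisim$. The inclusion $\bisim \subseteq \clohbisimc$ is immediate from the first rule defining $\clohbisim$ together with reflexivity of $\bisim$, so the crux is the reverse inclusion, which I would obtain by showing that $\rtclo{(\clohbisimc)}$ is a bisimulation and is therefore contained in the largest one, $\bisim$.

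First I would show that $\rtclo{(\clohbisimc)}$ is a simulation. Given $\tmzero \rtclo{(\clohbisimc)} \tmone$, there is a finite chain of $\clohbisimc$-steps connecting them, and a transition $\tmzero \lts\act \tmzero'$ must be propagated along this chain. At each link I apply the simulation-like property (Lemma~\ref{l:sim-property-main}) with the reflexive instance $\act \clohbisimc \act$ of the label relation, which preserves the label and keeps the results related by $\clohbisimc$. The subtlety is that Lemma~\ref{l:sim-property-main} matches a single transition by a \emph{weak} transition $\ltswk\act$, so crossing the next link requires matching a whole weak transition rather than a single step. I would therefore first lift the simulation-like property to weak transitions---iterating it over the internal steps in $\cloltstau$ and then over the visible step---before chaining it along the links of the transitive closure. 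This bookkeeping over $\tau$-steps is the one genuinely delicate point of the assembly.

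Once $\rtclo{(\clohbisimc)}$ is known to be a simulation, its being a bisimulation is immediate: by the second item of Lemma~\ref{l:properties} the relation $\rtclo{(\clohbisimc)}$ is symmetric, so it coincides with its inverse and the latter is a simulation for free. Hence $\rtclo{(\clohbisimc)} \subseteq \bisim$. Combined with $\bisim \subseteq \clohbisimc \subseteq \rtclo{(\clohbisimc)}$, the chain of inclusions collapses, yielding $\bisim = \clohbisimc$.

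Finally I would lift this closed-term coincidence to the congruence statement. Using substitutivity (first item of Lemma~\ref{l:properties}), any open Howe-related pair $\tmzero \clohbisim \tmone$ remains related under every closing substitution $\subs$, so $\tmzero\subs \clohbisimc \tmone\subs$, and since $\clohbisimc = \bisim$ this gives $\tmzero\subs \bisim \tmone\subs$ for all $\subs$, i.e. $\tmzero \open\bisim \tmone$; together with the reverse inclusion $\open\bisim \subseteq \clohbisim$ this yields $\clohbisim = \open\bisim$. As $\clohbisim$ is a congruence by construction, so is $\open\bisim$, which is exactly the claim. The real mathematical content of the whole development lives in Lemma~\ref{l:sim-property-main}---in particular its cases for the context-capture labels $\lts\ctx$---but within the proof of the theorem itself the main obstacle is the weak-transition lifting and the chaining along $\rtclo{(\clohbisimc)}$ described above.
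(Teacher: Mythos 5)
Your proof is correct and takes essentially the same route as the paper: Howe's method assembled from Lemma~\ref{l:sim-property-main} and Lemma~\ref{l:properties}, showing that $\rtclo{(\clohbisimc)}$ is a bisimulation by chaining the simulation-like property along the transitive closure, deducing $\bisim \,=\, \clohbisimc$, and transferring congruence from Howe's closure. The only difference is presentational: you make explicit the lifting of Lemma~\ref{l:sim-property-main} to weak transitions (and the final passage through open terms via substitutivity), steps the paper's appendix proof compresses into ``easy by induction.''
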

As a corollary, $\bisim$ is sound w.r.t. contextual equivalence.
\begin{theorem}
  \label{t:soundness-main}
  We have $\bisim \,\subseteq\, \ctxequiv$.
\end{theorem}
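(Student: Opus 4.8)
The plan is to obtain soundness as an essentially immediate consequence of the congruence property of $\bisim$ just established, combined with an \emph{adequacy} property asserting that bisimilar closed terms exhibit the same kind of observable. Concretely, I would first isolate and prove the following adequacy lemma: if $\tm \bisim \tm'$, then (a) $\tm \evalcbv \valzero$ for some value $\valzero$ implies $\tm' \evalcbv \valone$ for some value $\valone$, and (b) $\tm \evalcbv \tmzero'$ with $\tmzero'$ stuck implies $\tm' \evalcbv \tmone'$ with $\tmone'$ stuck. Both clauses are proved by playing the bisimulation game on a transition that \emph{witnesses} the observable, using the LTS/reduction correspondence of Lemma~\ref{l:lts-redcbv-main}.

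For clause (a): since $\ltstau\,=\,\redcbv$, the reduction $\tm \evalcbv \valzero$ gives $\tm \cloltstau \valzero$, and the value $\valzero = \lam \varx {\cdots}$ can fire $\valzero \lts{\valtwo} \tmzero''$ for any chosen closed value $\valtwo$ by rule $\LTSval$; hence $\tm \ltswk{\valtwo} \tmzero''$. Matching this against $\tm \bisim \tm'$ yields $\tm' \ltswk{\valtwo} \tmone''$, which unfolds to $\tm' \cloltstau \valone \lts{\valtwo} \tmone''$ for some intermediate term $\valone$. Since by Lemma~\ref{l:lts-redcbv-main} only a value can perform a $\valtwo$-labelled transition, $\valone$ is a value, and being irreducible it follows that $\tm' \evalcbv \valone$. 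For clause (b): by Lemma~\ref{l:stuck}, $\tmzero' = \inctx \ctx {\shift \vark {\tm''}}$, so $\reset{\tmzero'}$ is a $\RRshift$-redex; by Lemma~\ref{l:lts-redcbv-main} the internal action $\reset{\tmzero'} \ltstau \tmzero''$ can only be derived by rule $\LTScaptreset$, which requires a transition $\tmzero' \lts{\mtctx} \tmzero''$. Composing with $\tm \cloltstau \tmzero'$ gives $\tm \ltswk{\mtctx} \tmzero''$, and the matching transition $\tm' \ltswk{\mtctx} \tmone''$ unfolds to $\tm' \cloltstau \tmone' \lts{\mtctx} \tmone''$; since only a stuck term can perform a $\ctx$-labelled transition (Lemma~\ref{l:lts-redcbv-main}), $\tmone'$ is stuck, hence irreducible, so $\tm' \evalcbv \tmone'$.

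It then remains to assemble the theorem. Let $\tmzero \bisim \tmone$ and let $\cctx$ be any closed context. Because $\bisim$ is a congruence (the preceding theorem), we have $\inctx \cctx \tmzero \bisim \inctx \cctx \tmone$. Applying adequacy to this pair discharges both clauses in the definition of $\ctxequiv$: evaluation to a value on the left forces evaluation to a value on the right, and evaluation to a stuck term on the left forces evaluation to a stuck term on the right. Since $\bisim$ is symmetric by definition, the same reasoning applied to $\inctx \cctx \tmone \bisim \inctx \cctx \tmzero$ supplies the converse direction, so $\tmzero \ctxequiv \tmone$, whence $\bisim \,\subseteq\, \ctxequiv$.

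I expect the only delicate point to be the adequacy step, and within it the two facts that (i) every closed stuck term admits a $\lts{\mtctx}$ transition and (ii) the weak matching transitions out of $\tm'$ genuinely force $\tm'$ to converge to an observable of the \emph{same} kind rather than merely to some term. Both rest entirely on Lemma~\ref{l:lts-redcbv-main}: the identity $\ltstau\,=\,\redcbv$ lets me turn internal transition sequences back into reductions, while the clauses characterizing $\val$-labels as fired only by values and $\ctx$-labels as fired only by stuck terms pin down the type of the limit term. Given that the hard work of congruence was already carried out via Howe's method, no further machinery is needed here; the argument is the standard reduction of soundness to congruence plus adequacy, specialized to the two observables (values and stuck terms) of $\lamshift$.
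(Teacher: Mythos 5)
Your proposal is correct and follows essentially the same route as the paper: congruence (from Howe's method) reduces soundness to the observation that bisimilar closed terms produce the same kind of observable, which both you and the paper establish by playing the bisimulation game on a witnessing transition ($\lts\val$ for values, $\lts\ctx$ for stuck terms) and invoking Lemma~\ref{l:lts-redcbv-main} to pin down the shape of the matching term. The only difference is presentational: you factor the observation out as an explicit adequacy lemma and spell out details (e.g., that a stuck term always admits an $\lts\mtctx$ transition) that the paper's terse proof leaves implicit.
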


\subsection{Completeness and Context Lemma}

In this section, we prove that $\bisim$ is complete w.r.t. $\ctxequiv$. To this
end, we use an auxiliary relation $\rctxequiv$, defined below, which refines
contextual equivalence by testing terms with evaluation contexts only. While
proving completeness, we also prove $\rctxequiv \,=\, \ctxequiv$, which means that
testing with evaluation contexts is as discriminative as testing with any
contexts. Such a simplification result is similar to Milner's context lemma
\cite{Milner:TCS77}.
\begin{definition}
  Let $\tmzero$, $\tmone$ be closed terms. We write $\tmzero \rctxequiv
  \tmone$ if for all closed $\rctx$,
  \begin{itemize}
  \item $\inctx \rctx \tmzero \evalcbv \valzero$ implies $\inctx \rctx \tmone
    \evalcbv \valone$; 
  \item $\inctx \rctx \tmzero \evalcbv \tmzero'$, where $\tmzero'$ is stuck,
    implies $\inctx \rctx \tmone \evalcbv \tmone'$, with $\tmone'$ stuck as
    well;
  \end{itemize}
  and conversely for $\inctx \rctx \tmone$.
\end{definition}
Clearly we have $\ctxequiv \,\subseteq\, \rctxequiv$ by definition. The relation
$\bisim$ is complete w.r.t. $\rctxequiv$.
\begin{theorem}
  \label{t:completeness-main}
  We have $\rctxequiv \,\subseteq\, \bisim$.
\end{theorem}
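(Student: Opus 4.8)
The plan is to prove that $\rctxequiv$ is itself a big-step bisimulation. Since $\bisim \,=\, \apbisim$ is the largest big-step bisimulation (Lemma~\ref{l:apbisim-bisim}), this gives $\rctxequiv \,\subseteq\, \bisim$ at once. The relation $\rctxequiv$ is symmetric by construction (and, routinely, an equivalence relation), so it suffices to check one big-step simulation clause: whenever $\tmzero \rctxequiv \tmone$ and $\tmzero \ltswk\act \tmzero'$ with $\act \neq \tau$, I must produce $\tmone'$ with $\tmone \ltswk\act \tmone'$ and $\tmzero' \rctxequiv \tmone'$. By Lemma~\ref{l:lts-redcbv-main} the only non-$\tau$ labels are a value $\val$ or a pure context $\ctx$, so there are two cases.

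Before treating them I would record two auxiliary facts. First, since $\ltstau \,=\, \redcbv$ and the compositional rules $\LTScompl$, $\LTScompr$, $\LTScompreset$ propagate internal actions through the construct-by-construct shape of evaluation contexts, reduction commutes with evaluation contexts: $\tm \redcbv \tm'$ implies $\inctx\rctx\tm \redcbv \inctx\rctx{\tm'}$, and the composition $\inctx{\rctx_1}{\rctx_2}$ of two evaluation contexts is again an evaluation context. Second, by determinism (the unique-decomposition lemma), if $\tm \cloltstau \tm_1$ then $\tm$ and $\tm_1$ have exactly the same evaluation outcome, so in particular $\tm \rctxequiv \tm_1$. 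Together these let me replace a term by any of its reducts, inside an arbitrary evaluation context, without changing the observations defining $\rctxequiv$.

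For the case $\act = \val$, the transition $\tmzero \ltswk\val \tmzero'$ factors as $\tmzero \cloltstau \valzero \lts\val \tmzero'$ with $\valzero$ a value, \ie $\tmzero \evalcbv \valzero$. Testing with the empty context gives $\tmone \evalcbv \valone$ for some value $\valone$, and $\valone \lts\val \tmone'$ since $\valone$ is a $\lambda$-abstraction; hence $\tmone \ltswk\val \tmone'$ with the matching label. It remains to show $\tmzero' \rctxequiv \tmone'$. Fix any closed evaluation context $\rctx$ and let $\rctx'' = \inctx\rctx{\apctx\mtctx\val}$. Using the auxiliary facts, $\inctx{\rctx''}\tmzero \cloltstau \inctx\rctx{\app\valzero\val} \redcbv \inctx\rctx{\tmzero'}$ and likewise $\inctx{\rctx''}\tmone \cloltstau \inctx\rctx{\tmone'}$; since $\tmzero \rctxequiv \tmone$ equates the observations at $\rctx''$, the terms $\inctx\rctx{\tmzero'}$ and $\inctx\rctx{\tmone'}$ have the same outcome. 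As $\rctx$ is arbitrary, $\tmzero' \rctxequiv \tmone'$.

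For the case $\act = \ctx$, the transition $\tmzero \ltswk\ctx \tmzero'$ factors as $\tmzero \cloltstau \tmzero'' \lts\ctx \tmzero'$ with $\tmzero''$ stuck, so $\tmzero \evalcbv \tmzero''$ and $\reset{\inctx\ctx{\tmzero''}} \ltstau \tmzero'$. Testing with the empty context yields $\tmone \evalcbv \tmone''$ with $\tmone''$ stuck; every stuck term performs a $\lts\ctx$ transition for the given $\ctx$ (Lemma~\ref{l:stuck} together with $\LTScaptl$, $\LTScaptr$, $\LTSshift$), so $\tmone'' \lts\ctx \tmone'$ and $\tmone \ltswk\ctx \tmone'$. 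To relate the results, fix $\rctx$ and take the test context $\inctx\rctx{\resetctx\ctx}$, which is an evaluation context by the composition fact; filling it with $\tmzero$ reduces to $\inctx\rctx{\tmzero'}$ and with $\tmone$ to $\inctx\rctx{\tmone'}$, and $\tmzero \rctxequiv \tmone$ again transfers the observation, giving $\tmzero' \rctxequiv \tmone'$. This closes the simulation clause, whence $\rctxequiv \,\subseteq\, \bisim$; combined with $\ctxequiv \,\subseteq\, \rctxequiv$ this also yields the context lemma $\ctxequiv \,=\, \rctxequiv$. I expect the capture case to be the main obstacle: one must verify that the stuck term reached by $\tmone$ can fire the \emph{same} labelled context transition $\ctx$, and—more importantly—the whole argument hinges on choosing evaluation test contexts ($\apctx\mtctx\val$ and $\resetctx\ctx$) that realise a single LTS step as a reduction inside an evaluation context. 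This is precisely why the proof succeeds for $\rctxequiv$ and would not directly work for the full contextual equivalence.
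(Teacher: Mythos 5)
Your proof is correct, and it shares the paper's skeleton: both arguments show that $\rctxequiv$ is a big-step bisimulation (symmetry reduces this to one simulation clause), split on the two non-$\tau$ labels, obtain the matching evaluation of $\tmone$ by testing with the empty context, and relate the residuals through the test contexts built from $\app \hole \val$ and $\reset{\inctx \ctx \hole}$. Where you genuinely diverge is in the final step, establishing $\tmzero' \rctxequiv \tmone'$. The paper argues $\tmzero' \bisim \app \tmzero \val \rctxequiv \app \tmone \val \bisim \tmone'$ using Lemma~\ref{l:bisim-eval}, then converts the outer occurrences of $\bisim$ into $\rctxequiv$ via soundness (Theorem~\ref{t:soundness-main}) and concludes by transitivity of $\rctxequiv$; completeness thus depends on soundness, and hence on Howe's method. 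You instead argue purely operationally: since $\redcbv$ is deterministic and compatible with evaluation contexts, $\inctx \rctx {\app \tmzero \val}$ reduces to $\inctx \rctx {\tmzero'}$ and therefore has the same observable outcome, likewise on the $\tmone$ side, so the observation guaranteed by $\tmzero \rctxequiv \tmone$ at the composed test context transfers directly to the residuals (and similarly with $\inctx \rctx {\reset{\inctx \ctx \tmzero}}$ in the capture case). What your route buys is a completeness proof independent of soundness and of the congruence of $\bisim$, giving a cleaner dependency structure; what the paper's route buys is brevity, since once soundness is available it avoids re-proving the operational facts (determinism under contexts, invariance of outcomes along reduction) that your argument makes explicit. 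Both proofs rely on the fact that composing a pure context, a reset, and an evaluation context yields an evaluation context, which you rightly state rather than leave implicit. One small caveat: your closing remark that the context lemma follows ``combined with $\ctxequiv \subseteq \rctxequiv$'' still needs soundness, since $\rctxequiv \subseteq \ctxequiv$ is obtained as $\rctxequiv \subseteq \bisim \subseteq \ctxequiv$.
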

The proof of Theorem \ref{t:completeness-main} is the same as in
$\lambda$-calculus \cite{Gordon:TCS99}; we prove that $\rctxequiv$ is a
big-step bisimulation, using Lemmas \ref{l:lts-redcbv-main}, \ref{l:bisim-eval},
and Theorem \ref{t:soundness-main}. The complete proof can be found in Appendix
\ref{a:completeness}. We can now prove that all the relations defined so far coincide.

\begin{theorem}
  We have $\ctxequiv \,=\, \rctxequiv \,=\, \bisim$.
\end{theorem}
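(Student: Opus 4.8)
The plan is to obtain the three-way equality by assembling a cycle of inclusions, each of which is already available from the preceding development. Writing the relations in a circular chain, I would aim to establish
$$
\ctxequiv \,\subseteq\, \rctxequiv \,\subseteq\, \bisim \,\subseteq\, \ctxequiv,
$$
from which it follows immediately that every inclusion in the cycle is in fact an equality, and hence that the three relations coincide.

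First I would record the inclusion $\ctxequiv \,\subseteq\, \rctxequiv$, which holds purely by definition: every closed evaluation context $\rctx$ is in particular a closed context $\cctx$, so if two closed terms cannot be distinguished (with respect to convergence to a value or to a stuck term) by \emph{any} context, then a fortiori they cannot be distinguished by the evaluation contexts alone. This is exactly the remark made just before Theorem~\ref{t:completeness-main}. Next I would invoke Theorem~\ref{t:completeness-main}, giving $\rctxequiv \,\subseteq\, \bisim$, and then Theorem~\ref{t:soundness-main}, giving $\bisim \,\subseteq\, \ctxequiv$. Composing these three inclusions closes the loop and forces $\ctxequiv \,=\, \rctxequiv \,=\, \bisim$.

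I do not expect any genuine obstacle at this stage: the theorem is essentially a corollary, since all of the real work has already been carried out. The nontrivial content lies in the soundness direction $\bisim \,\subseteq\, \ctxequiv$, proved via Howe's method to show that $\bisim$ is a congruence, and in the completeness direction $\rctxequiv \,\subseteq\, \bisim$, proved by exhibiting $\rctxequiv$ as a big-step bisimulation. The only thing worth double-checking here is that the two contextual relations involved in the cycle use the \emph{same} observations (namely, both convergence to a value and convergence to a stuck term), so that the definitional inclusion $\ctxequiv \,\subseteq\, \rctxequiv$ really is observation-for-observation; this is indeed the case, as both $\ctxequiv$ (i.e.\ $\ctxequivtwo$) and $\rctxequiv$ test the two clauses for values and stuck terms. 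With that verified, the chain goes through and no further argument is needed.
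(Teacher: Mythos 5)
Your proposal is correct and is exactly the paper's own argument: the cycle $\ctxequiv \,\subseteq\, \rctxequiv$ (by definition), $\rctxequiv \,\subseteq\, \bisim$ (Theorem~\ref{t:completeness-main}), and $\bisim \,\subseteq\, \ctxequiv$ (Theorem~\ref{t:soundness-main}), forcing all three relations to coincide. Your extra check that both contextual relations observe the same outcomes (values and stuck terms) is a sensible sanity check but adds nothing beyond what the paper already relies on.
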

Indeed, we have $\rctxequiv \,\subseteq\, \bisim$ (Theorem
\ref{t:completeness-main}), $\bisim \,\subseteq\, \ctxequiv$ (Theorem
\ref{t:soundness-main}), and $\ctxequiv \,\subseteq\, \rctxequiv$ (by
definition). 

\section{Relation to CPS Equivalence}
\label{s:cps-equivalence}
In this section we study the relationship between our bisimilarity
(and thus contextual equivalence) and an equivalence relation based on
translating terms with shift and reset into continuation-passing style
(CPS). Such an equivalence has been characterized in terms of
direct-style equations by Kameyama and Hasegawa who developed an
axiomatization of shift and reset~\cite{Kameyama-Hasegawa:ICFP03}. We
show that all but one of their axioms are validated by the
bisimilarity of this article, which also provides several examples of
use of the bisimilarity. We also pinpoint where the two relations
differ.

\subsection{Axiomatization of Delimited Continuations}

The operators shift and reset have been originally defined by a
translation into continu\-ation-passing
style~\cite{Danvy-Filinski:LFP90} that we present in
Fig.~\ref{f:cps-translation}. Translated terms expect two
continuations: the delimited continuation representing the rest of the
computation up to the dynamically nearest enclosing delimiter and the
metacontinuation representing the rest of the computation beyond this
delimiter.

It is natural to relate any other theory of shift and reset to their
definitional CPS translation. For example, the reduction rules $\tm
\redcbv \tm'$ given in Section \ref{ss:reduction} are sound w.r.t. the
CPS because CPS translating $\tm$ and $\tm'$ yields $\beta
\eta$-convertible terms in the $\lambda$-calculus. More generally, the
CPS translation for shift and reset induces the following notion of
equivalence on terms:

\begin{figure}[t]
  $$
  \begin{array}{rcl}
    \cps{\varx} &=& \lam {\vark_1 \vark_2}{\app {\app {\vark_1} \varx} \vark_2} \\
    \cps{\lam \varx \tm} &=& \lam {\vark_1 \vark_2}{\app{\app {\vark_1}{\lamp
          \varx {\cps \tm}}}{\vark_2}} \\
    \cps{\app \tmzero \tmone} &=& \lam {\vark_1 \vark_2}{\app{\app {\cps \tmzero}{\lamp
          {\varx_0 \vark'_2}{\app{\app {\cps \tmone}{\lamp{\varx_1
                  \vark''_2}{\app{\app{\app
                    {\varx_0}{\varx_1}}{\vark_1}}{\vark''_2}}}}{\vark'_2}}}}{\vark_2}} \\
    \cps{\reset \tm} &=& \lam {\vark_1 \vark_2}{\app{\app {\cps \tm}
        \kinit}{\lamp \varx {\app {\app {\vark_1} \varx}{\vark_2}}}} \\
    \cps{\shift \vark \tm} &=& \lam {\vark_1 \vark_2}{\app{\app {\subst {\cps
            \tm} \vark {\lamp{\varx_1 \vark'_1 \vark'_2}{\app{\app
                {\vark_1}{\varx_1}}{\lamp {\varx_2}{\app{\app
                    {\vark'_1}{\varx_2}}{\vark'_2}}}}}} \kinit}{\vark_2}}\\
    \mbox{with } \kinit &=& \lam {\varx \vark_2}{\app {\vark_2} \varx}
  \end{array}
  $$
\caption{CPS translation}
\label{f:cps-translation}
\end{figure}

\begin{definition}
  Terms $\tm$ and $\tm'$ are CPS equivalent if their CPS
  translations are $\beta\eta$-convertible.
\end{definition}

In order to relate the bisimilarity of this article and the CPS
equivalence, we use Kameyama and Hasegawa's axioms
\cite{Kameyama-Hasegawa:ICFP03}, which characterize the CPS
equivalence in a sound and complete way: two terms are CPS equivalent
iff one can derive their equality using the equations of
Fig. \ref{f:axioms}. Kameyama and Hasegawa's axioms relate not only
closed, but arbitrary terms and they assume variables as values.

\begin{figure}[t]
$$
\begin{array}{rcllrcll}
  \app{\lamp \varx \tm} \val &=& \subst \tm \varx \val & \quad \AXbeta 
  & \app {\lamp \varx {\inctx \ctx \varx}} \tm &=& \inctx \ctx \tm \mbox{ if } x
  \notin \fv \ctx & \quad \AXbetaomega \\
  \reset {\inctx \ctx {\shift \vark \tm}} &=& \reset {\subst \tm \vark {\lam
      \varx {\reset {\inctx \ctx \varx}}}} & \quad \AXresetshift
  & \quad  \reset {\app {\lamp \varx \tmzero}{\reset \tmone}} &=& \app {\lamp \varx
    {\reset \tmzero}}{\reset \tmone} & \quad \AXresetlift  \\
  \reset \val &=& \val & \quad \AXresetval 
  & \shift \vark {\reset \tm} &=& \shift \vark \tm & \quad \AXshiftreset \\
  \lam \varx {\app \val \varx} &=& \val \mbox{ if } \varx \notin \fv \val &
  \quad \AXetav 
  & \shift \vark {\app \vark \tm} &=& \tm \mbox{ if } \vark \notin \fv
  \tm & \quad \AXshiftelim \\
\end{array}
$$
\caption{Axiomatization of $\lamshift$}
\label{f:axioms}
\end{figure}

\subsection{Kameyama and Hasegawa's Axioms through Bisimilarity}
\label{ss:example}

We show that closed terms related by all the axioms except for
$\AXshiftelim$ are bisimilar. In the following, we write $\Id$ for the
bisimulation $\{ (\tm, \tm) \}$.

\begin{proposition}
  We have $\app{\lamp \varx \tm} \val \bisim \subst \tm \varx \val$, $\reset
  {\inctx \ctx {\shift \vark \tm}} \bisim \reset {\subst \tm \vark {\lam \varx
      {\reset {\inctx \ctx \varx}}}}$, and $\reset \val \bisim \val$.
\end{proposition}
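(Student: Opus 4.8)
The plan is to observe that each of the three equations is exactly a one-step reduction and then to invoke Lemma~\ref{l:bisim-eval}, which guarantees that $\tm \ltstau \tm'$ implies $\tm \bisim \tm'$. Taking the metalevel context to be empty, the three equations are precisely the instances of the reduction rules $\RRbeta$, $\RRshift$, and $\RRreset$; by the correspondence $\ltstau \,=\, \redcbv$ from Lemma~\ref{l:lts-redcbv-main}, each left-hand side performs an internal transition to the corresponding right-hand side, and the bisimilarities follow at once.

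In more detail, for the $\beta$-equation rule $\LTSbeta$ gives $\app{\lamp \varx \tm} \val \ltstau \subst \tm \varx \val$ directly, and for the reset-of-value equation rule $\LTSreset$ gives $\reset \val \ltstau \val$; Lemma~\ref{l:bisim-eval} then applies in each case. The shift equation is the instance of $\RRshift$ with $\rctx = \mtctx$, so $\reset{\inctx \ctx {\shift \vark \tm}} \redcbv \reset{\subst \tm \vark {\lam \varx {\reset{\inctx \ctx \varx}}}}$, and again Lemma~\ref{l:bisim-eval} concludes. Equivalently, one can exhibit this internal transition directly in the LTS by firing rule $\LTSshift$ at the shift operator, propagating the capture outward through $\ctx$ with rules $\LTScaptl$ and $\LTScaptr$, and stripping the outermost reset with rule $\LTScaptreset$.

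There is essentially no obstacle here, since Lemma~\ref{l:bisim-eval} does all the work. The only point deserving attention is the shift case: one must identify the whole term as a single $\RRshift$-redex sitting in the empty metalevel context, and ensure the freshness side condition $\varx \notin \fv \ctx$ of rule $\LTSshift$ is satisfied, which is always achievable up to $\alpha$-conversion.
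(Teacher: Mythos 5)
Your proposal is correct and follows exactly the paper's argument: each of the three equations is a single internal ($\tau$) transition, so Lemma~\ref{l:bisim-eval} immediately yields bisimilarity. The extra detail you give on tracing the shift capture through the LTS rules is fine but not needed beyond the one-line appeal to that lemma.
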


\begin{proof}
  These are direct consequences of the fact that $\lts\tau \,\subseteq\, \bisim$
  (Lemma \ref{l:bisim-eval}).
  \qed
\end{proof}

\begin{proposition}
  If $\varx \notin \fv \val$, then $\lam \varx {\app \val \varx} \bisim \val$.
\end{proposition}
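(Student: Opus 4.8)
The plan is to exhibit an explicit bisimulation and to exploit the rigidity of the transition behaviour of values. Both $\lam \varx {\app \val \varx}$ and $\val$ are values, so by inspection of Figure~\ref{f:lts} the only transitions either term can perform are the value transitions given by rule $\LTSval$: neither admits an internal $\tau$-action (a bare $\lambda$-abstraction matches none of the outermost shapes required by the $\tau$-rules, which all demand an application or a reset at the top) nor a capture action (they are values, not stuck terms). Hence it suffices to analyse, for an arbitrary (closed) value $\valone$ supplied by the environment, the single $\valone$-labelled transition each side offers.

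Computing these transitions, rule $\LTSval$ together with the side condition $\varx \notin \fv \val$ gives $\lam \varx {\app \val \varx} \lts{\valone} \subst{(\app \val \varx)}{\varx}{\valone} = \app \val \valone$, while $\val \lts{\valone} \tm''$ for some $\tm''$. By Lemma~\ref{l:lts-redcbv-main} this last transition means $\app \val \valone \ltstau \tm''$, which is precisely the $\beta$-step of rule $\LTSbeta$; therefore, by Lemma~\ref{l:bisim-eval}, the two residuals $\app \val \valone$ and $\tm''$ are related by $\bisim$. This is the crux of the argument: after the single externally observable step, the two sides coincide up to one internal reduction.

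I would accordingly take $\rel = \{(\lam \varx {\app \val \varx},\, \val)\} \cup {\bisim}$ as the candidate relation and verify that $\rel$ and $\inv\rel$ are simulations. For the pairs already in $\bisim$ there is nothing to prove, since $\bisim$ is itself a bisimulation and $\bisim \,\subseteq\, \rel$. For the distinguished pair the matching described above closes the square in both directions: from $\lam \varx {\app \val \varx} \lts{\valone} \app \val \valone$ we answer with $\val \ltswk{\valone} \tm''$ and land in $(\app \val \valone,\, \tm'') \in {\bisim} \,\subseteq\, \rel$, and symmetrically from $\val \lts{\valone} \tm''$ we answer with $\lam \varx {\app \val \varx} \ltswk{\valone} \app \val \valone$, again landing in $\bisim$ (which is symmetric). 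Hence $\rel$ is a bisimulation and $\lam \varx {\app \val \varx} \bisim \val$.

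Since the transition structure is so constrained—one admissible label, one reduct on each side—there is no genuine difficulty here; the only two points that need care are that the weak delay transition $\ltswk{\valone}$ allows the internal steps only \emph{before} the visible action, which is exactly the order in which they occur (a value performs its $\valone$-move immediately, after zero $\tau$-steps), and that closing the diagram requires $\bisim$, rather than the identity, to sit inside $\rel$—precisely what Lemma~\ref{l:bisim-eval} supplies.
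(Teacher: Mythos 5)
Your proof is correct and follows essentially the same route as the paper's: both exhibit the relation $\{(\lam \varx {\app \val \varx},\, \val)\} \cup {\bisim}$, match the single $\LTSval$-transition on each side, and close the diagram by observing that the residuals differ by one $\beta$-step, so $\ltstau \,\subseteq\, \bisim$ (Lemma~\ref{l:bisim-eval}) finishes the argument. The only cosmetic difference is that the paper writes $\val$ explicitly as $\lam y \tm$ and computes the $\beta$-step directly, whereas you keep $\val$ abstract and invoke Lemma~\ref{l:lts-redcbv-main}.
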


\begin{proof}
  We prove that $\rel\; = \{(\lam \varx {\app {\lamp y \tm} \varx}, \lam y \tm),
  \varx \notin \fv \tm \} \cup \bisim$ is a bisimulation. To this end, we have
  to check that $\lam \varx {\app {\lamp y \tm} \varx} \lts\valzero \app {\lamp
    y \tm} \valzero$ is matched by $\lam y \tm \lts\valzero \subst \tm y
  \valzero$, \ie that $\app {\lamp y \tm} \valzero \rel \subst \tm y \valzero$
  holds for all $\valzero$. We have $\app {\lamp y \tm} \valzero \lts\tau \subst
  \tm y \valzero$, and because $\lts\tau \,\subseteq\, \bisim \,\subseteq\, \rel$, we
  have the required result.  \qed
\end{proof}

\begin{proposition}
  \label{p:reset-reset}
  We have $\shift \vark {\reset \tm} \bisim \shift \vark \tm$. 
\end{proposition}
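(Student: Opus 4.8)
First I note that both $\shift \vark {\reset \tm}$ and $\shift \vark \tm$ are closed stuck terms, so by Lemma~\ref{l:lts-redcbv-main} their only transitions are context captures, and the only rule applicable to a term whose root is a shift is $\LTSshift$. Writing $w = \subst \tm \vark {\lam \varx {\reset{\inctx \ctx \varx}}}$, for every pure context $\ctx$ with $\varx \notin \fv \ctx$ (taken closed, so that the results are closed) we obtain $\shift \vark {\reset \tm} \lts\ctx \reset{\reset w}$ and $\shift \vark \tm \lts\ctx \reset w$; the extra reset on the left is simply the delimiter guarding the body of the shift, which survives the substitution. Thus both terms offer exactly the same labels $\ctx$, with respective results $\reset{\reset w}$ and $\reset w$. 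If these are bisimilar for every $\ctx$, then $\{(\shift \vark {\reset \tm}, \shift \vark \tm)\} \cup \bisim$ is plainly a bisimulation (every capture of one side is matched by the same-labelled capture of the other, landing in $\bisim$). So it suffices to prove the auxiliary claim that $\reset{\reset s} \bisim \reset s$ for every closed term $s$.

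This double-reset claim is the heart of the argument, and I would establish it with the candidate relation $\rel = \{(\reset{\reset s}, \reset s) : s \text{ closed}\} \cup \bisim$. The key structural fact is that, for closed $s$, the term $\reset s$ is never stuck: by Lemma~\ref{l:stuck} a closed stuck term decomposes as $\inctx \ctx {\shift \vark {\tm'}}$, whose outermost constructor is a shift or an application, never a reset (pure contexts contain no reset). Consequently the only rule that can fire at the outer reset of $\reset{\reset s}$ is $\LTScompreset$ --- neither $\LTSreset$ nor $\LTScaptreset$ applies, since $\reset s$ is neither a value nor stuck. Hence every move of $\reset{\reset s}$ has the shape $\reset{\reset s} \ltstau \reset u$ coming from some transition $\reset s \ltstau u$, and conversely; the moves of the two components are therefore in exact correspondence, and it only remains to check that $(\reset u, u) \in \rel$ for each $u$ with $\reset s \ltstau u$.

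This final point follows by inspecting which rule derives $\reset s \ltstau u$. If it is $\LTSreset$, then $s$ is a value $\val$ and $u = \val$; here $(\reset \val, \val) \in \bisim \,\subseteq\, \rel$, since $\reset \val \ltstau \val$ and $\ltstau \,\subseteq\, \bisim$ by Lemma~\ref{l:bisim-eval}. If it is $\LTScompreset$, then $u = \reset{s'}$ for some $s \ltstau s'$, and $(\reset{\reset{s'}}, \reset{s'}) \in \rel$ by definition. If it is $\LTScaptreset$, then $u$ is the result of a capture and hence, by rule $\LTSshift$ at its root, of the form $\reset{w'}$; again $(\reset{\reset{w'}}, \reset{w'}) \in \rel$. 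In every case $(\reset u, u) \in \rel$, so $\rel$ is a bisimulation and $\reset{\reset s} \bisim \reset s$. Instantiating $s$ with $w$ matches the two capture results obtained above and yields $\shift \vark {\reset \tm} \bisim \shift \vark \tm$.

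I expect the main obstacle to be exactly the last case: one must be certain that the result of any capture transition is itself guarded by a reset, so that firing $\LTScaptreset$ under the outer delimiter keeps us inside $\rel$ rather than escaping it. This is where the asymmetry between the two resets is absorbed. The remaining cases are a routine traversal of the LTS rules, and the reduction of the proposition to the double-reset claim is immediate once the two $\LTSshift$ transitions are lined up.
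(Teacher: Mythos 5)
Your proof is correct and takes essentially the same route as the paper: both line up the $\LTSshift$ captures so that the problem reduces to showing $\reset{\reset w} \bisim \reset w$, and both rest on the same structural fact that a reset-rooted term can never be stuck, hence never performs a $\lts\ctx$ transition. The only difference is technical rather than conceptual: the paper verifies a single big-step bisimulation $\{ (\shift \vark {\reset \tm}, \shift \vark \tm) \} \cup \{ (\reset {\reset \tm}, \reset \tm) \} \cup \Id$, matching eventual $\ltswk\val$ transitions in one go, whereas you run a small-step, rule-by-rule analysis of $\ltstau$ under the outer delimiter and absorb the base cases into $\bisim$.
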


\begin{proof}
  Let $\rel\; = \{ (\reset {\reset \tm}, \reset \tm) \}$. We prove that $\{
  (\shift \vark {\reset \tm}, \shift \vark \tm) \} \cup \rel \cup \Id$ is a
  big-step bisimulation. The transition $\shift \vark {\reset \tm} \lts\ctx
  \reset {\reset {\subst \tm \vark {\lam \varx {\reset{\inctx \ctx \varx}}}}}$
  is matched by $\shift \vark \tm \lts\ctx \reset {\subst \tm \vark {\lam \varx
      {\reset{\inctx \ctx \varx}}}}$, and conversely. Let $\reset {\reset \tm}
  \rel \reset \tm$. It is straightforward to check that $\reset {\reset \tm}
  \ltswk \valzero \val$ iff $\tm \ltswk \valzero \val$ iff $\reset \tm
  \ltswk\valzero \val$. Therefore, any $\ltswk\val$ transition from $\reset
  {\reset \tm}$ is matched by $\reset \tm$, and conversely. If $\reset\tm
  \lts\tau \tm'$, then $\tm'$ is a value or $\tm' = \reset{\tm''}$ for some $\tm''$;
  consequently, neither $\reset \tm$ nor $\reset{\reset \tm}$ can perform a $\lts\ctx$
  transition. \qed
\end{proof}

\begin{proposition}
  We have $\reset {\app {\lamp \varx \tmzero}{\reset \tmone}} \bisim \app
  {\lamp \varx {\reset \tmzero}}{\reset \tmone}$.
\end{proposition}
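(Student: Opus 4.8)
The plan is to exhibit a big-step bisimulation containing the pair and invoke Lemma~\ref{l:apbisim-bisim}, by which any big-step bisimulation is contained in $\bisim$. The guiding operational observation is that both terms must first evaluate the shared argument $\reset \tm$, and that the outer reset on the left keeps the two sides in lock-step: as soon as $\reset \tm$ produces a value $\val$, the left term reduces by $\LTSreset$ and then $\LTSbeta$ to $\reset{\subst \tmzero \varx \val}$, while the right term reduces by $\LTSreset$ on the argument and then $\LTSbeta$ to $\subst{\reset \tmzero} \varx \val = \reset{\subst \tmzero \varx \val}$, \ie to the very same term. Concretely, I would take
$$\rel \;=\; \{\, (\reset{\app{\lamp \varx \tmzero}{\reset \tm}},\; \app{\lamp \varx {\reset \tmzero}}{\reset \tm}) \mid \tm \text{ closed}\,\} \cup \Id$$
and prove that $\rel$ is a big-step bisimulation.

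First I would check that, for each such pair, neither component admits a visible transition. The left component is a reset: by inspection of Figure~\ref{f:lts} it can only fire an internal action (via $\LTSreset$, $\LTScompreset$, or $\LTScaptreset$), since no rule gives a reset a $\lts\val$ or a $\lts\ctx$ label. The right component is an application $\app \val {\reset \tm}$ whose function $\val = \lamp \varx {\reset \tmzero}$ is a value---so $\LTScaptl$ cannot apply, a value having only a $\lts\val$ transition---and whose argument is a reset---so $\LTScaptr$ cannot apply, a reset having no $\lts\ctx$ transition---hence it too fires only the internal rule $\LTScompr$. Thus both components do only internal steps. Next I would use determinism of reduction (Lemma~\ref{l:lts-redcbv-main}, $\ltstau\,=\,\redcbv$) to observe that both follow the same forced internal schedule. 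Being a reset, the left component can never become stuck, so it either diverges---in which case so does the right component, both have no visible transitions, and the clause holds vacuously---or the evaluation of $\reset \tm$ terminates on some value $\val$, and both components reduce deterministically to the common term $\reset{\subst \tmzero \varx \val}$, after which they coincide and are related by $\Id$. Consequently any visible transition $\tmzero' \ltswk\val \tmzero''$ from one component factors through this common reduct and is matched by the identical $\ltswk\val$ transition of the other, landing in $\Id$; the $\Id$ pairs are matched trivially. This shows $\rel$ is a big-step bisimulation, whence $\reset{\app{\lamp \varx \tmzero}{\reset \tm}} \bisim \app{\lamp \varx {\reset \tmzero}}{\reset \tm}$.

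The main obstacle is not any single computation but the exhaustiveness of the case analysis on the behavior of $\reset \tm$: I must verify that every internal transition of the argument---an ordinary reduction step, a $\LTSreset$ when $\tm$ is already a value, or a capture via $\LTScaptreset$ when $\tm$ is stuck---keeps the two components in $\rel$, and in particular that a capture occurring \emph{inside} $\reset \tm$ yields again a term of the form $\reset{\tm'}$ rather than escaping the delimiter. This last point is exactly what guarantees that no premature visible transition can appear before the common reduct is reached. Working with the big-step formulation is what makes the argument clean, since internal steps are absorbed and only the shared final value must be matched, sparing us from having to name and relate every intermediate configuration.
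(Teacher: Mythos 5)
Your proof is correct and takes essentially the same route as the paper's: both exhibit a big-step bisimulation consisting of the pair in question together with the identity relation $\Id$, and both hinge on the observation that a visible transition is possible only once the shared argument $\reset \tmone$ evaluates to some value $\val$, after which the two sides converge to the common term $\reset{\subst \tmzero \varx \val}$ and the identity takes over. The only differences are cosmetic: you close the relation under arbitrary closed arguments $\tm$ and spell out the internal-step case analysis (including the fact that a capture inside $\reset \tm$ cannot escape the delimiter), whereas the paper keeps just the single pair, which already suffices under the big-step formulation.
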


\begin{proof}
  We prove that $\{ (\reset {\app {\lamp \varx \tmzero}{\reset \tmone}}, \app
  {\lamp \varx {\reset \tmzero}}{\reset \tmone}) \} \cup \Id$ is a big-step
  bisimulation. A transition $\reset {\app {\lamp \varx \tmzero}{\reset \tmone}}
  \ltswk\act \tm'$ (with $\act \neq \tau$) is possible only if $\reset \tmone$
  evaluates to some value $\val$. In this case, we have $\reset {\app {\lamp
      \varx \tmzero}{\reset \tmone}} \ltswk\tau \reset {\app {\lamp \varx
      \tmzero} \val} \lts\tau \reset {\subst \tmzero \varx \val}$ and $\app
  {\lamp \varx {\reset \tmzero}}{\reset \tmone} \ltswk\tau \reset {\subst
    \tmzero \varx \val}$. From this, it is easy to see that $\reset {\app {\lamp
      \varx \tmzero}{\reset \tmone}} \ltswk\act \tm'$ (with $\act \neq \tau$)
  implies $\app {\lamp \varx {\reset \tmzero}}{\reset \tmone} \ltswk\act \tm'$,
  and conversely.  \qed
\end{proof}

\begin{proposition}
  \label{p:omega}
  If $\varx \notin \fv \ctx$, then $\app {\lamp \varx {\inctx \ctx \varx}} \tm
  \bisim \inctx \ctx \tm$.
\end{proposition}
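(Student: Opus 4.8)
The plan is to prove the statement by exhibiting a big-step bisimulation (working with big-step bisimilarity via Lemma~\ref{l:apbisim-bisim}, so that internal steps need not be matched, cf.\ Lemma~\ref{l:bisim-eval}), and to drive the whole analysis by the evaluation of the argument $\tm$. Write $\val_0 = \lam \varx {\inctx \ctx \varx}$, so that the two terms are $\app {\val_0} \tm$ and $\inctx \ctx \tm$. Since $\app {\val_0} \tm$ has its argument in evaluation position and $\ctx$ is an evaluation context, both terms reduce in lockstep with $\tm$; formally I would first record the routine fact, proved by induction on $\ctx$ from rules $\LTScompl$ and $\LTScompr$, that $\tm \ltstau \tm''$ implies both $\inctx \ctx \tm \ltstau \inctx \ctx \tm''$ and $\app {\val_0} \tm \ltstau \app {\val_0} \tm''$, and recall that $\ltstau$ is deterministic (Lemma~\ref{l:lts-redcbv-main} together with unique decomposition). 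It then suffices to match the visible transitions available once $\tm$ has been fully evaluated.

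Two of the three cases for $\tm$ are easy. If $\tm \divcbv$, then both $\app {\val_0} \tm$ and $\inctx \ctx \tm$ diverge, since in each the hole sits in evaluation position and no enclosing reset is present; hence neither side offers a visible transition and there is nothing to match. If $\tm \evalcbv \val$ for some value $\val$, then $\app {\val_0} \tm \cloltstau \app {\val_0} \val \ltstau \inctx \ctx \val$ by $\LTSbeta$---this is exactly where $\varx \notin \fv \ctx$ is used, so that $\subst {(\inctx \ctx \varx)} \varx \val = \inctx \ctx \val$---while $\inctx \ctx \tm \cloltstau \inctx \ctx \val$. The two sides reach a common reduct, so by determinism every visible transition of one is literally a visible transition of the other, matched through $\Id$.

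The remaining case is the crux. If $\tm \evalcbv \tm_s$ with $\tm_s$ stuck, then Lemma~\ref{l:stuck} gives $\tm_s = \inctx {\ctx'} {\shift \vark {\tm'}}$, and both $\app {\val_0} \tm \cloltstau \app {\val_0} {\tm_s}$ and $\inctx \ctx \tm \cloltstau \inctx \ctx {\tm_s}$ are stuck; their only visible transitions are captures $\lts{\ctx_0}$ for an arbitrary environment context $\ctx_0$. Unfolding the capture rules ($\LTScaptr$ on the left, the structure of $\ctx$ on the right) down to $\LTSshift$, the two reducts are $\reset {\subst {\tm'} \vark {V_L}}$ and $\reset {\subst {\tm'} \vark {V_R}}$, where $V_L = \lam z {\reset {\inctx {\ctx_0} {\app {\val_0} {\inctx {\ctx'} z}}}}$ and $V_R = \lam z {\reset {\inctx {\ctx_0} {\inctx \ctx {\inctx {\ctx'} z}}}}$. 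The only difference between $V_L$ and $V_R$ is that the body of $V_L$ contains $\app {\val_0} {\inctx {\ctx'} z}$ exactly where that of $V_R$ contains $\inctx \ctx {\inctx {\ctx'} z}$---which is again an instance of the very equation being proved. Since $\bisim$ is a congruence and hence substitutive (Lemma~\ref{l:properties}, via $\bisim \,=\, \clohbisimc$), it suffices to establish $V_L \bisim V_R$, from which $\subst {\tm'} \vark {V_L} \bisim \subst {\tm'} \vark {V_R}$ and then $\reset {\subst {\tm'} \vark {V_L}} \bisim \reset {\subst {\tm'} \vark {V_R}}$ follow; and applying $V_L$ and $V_R$ to an arbitrary value reduces $V_L \bisim V_R$, again by congruence (the two bodies being a common context $\reset {\inctx {\ctx_0} {\hole}}$ around the two sides), to the base equivalence for the closed argument $\inctx {\ctx'} \val$.

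I expect this self-referential structure to be the main obstacle: the capture buries a fresh copy of the equation inside the value substituted for $\vark$, so no relation consisting only of the top-level pairs $(\app {\val_0} \tm, \inctx \ctx \tm)$ is closed under transitions. To make the coinduction legitimate I would take the candidate relation to be the closure of these base pairs under evaluation contexts and under substitution into arbitrary terms---concretely, the congruence closure on closed terms of $\{(\app {\lamp \varx {\inctx \ctx \varx}} \tm, \inctx \ctx \tm) : \varx \notin \fv \ctx\}$, together with $\Id$---and verify that this closure is a big-step bisimulation. The divergence and value cases above then apply hole-by-hole, while the capture case is absorbed precisely because each reduct is a common context filled with subterms related again by base pairs; the congruence of $\bisim$ established earlier is what finally guarantees that all these context-closed pairs are contained in $\bisim$, yielding $\app {\lamp \varx {\inctx \ctx \varx}} \tm \bisim \inctx \ctx \tm$.
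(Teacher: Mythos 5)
Your overall strategy is, in spirit, the paper's own: the three-way case analysis on the evaluation of $\tm$ (divergence, value, stuck term), the computation of the capture reducts $\reset{\subst{\tm'}\vark{V_L}}$ and $\reset{\subst{\tm'}\vark{V_R}}$, and the diagnosis that a capture buries a fresh instance of the equation inside the substituted continuation---so that no relation consisting only of top-level pairs is transition-closed---all match the proof in Appendix~\ref{a:omega}. The paper resolves the self-reference exactly as you anticipate, by enlarging the candidate relation, but it uses a tailored closure rather than the full congruence closure: its relation is $\rel_1 \cup \rel_2$, where $\rel_1$ contains the base pairs placed in a \emph{common evaluation context} $\rctx$ and closed under substitutions sending each $\vark_i$ to one of the two shapes of captured continuations ($\lam y {\reset{\inctx{\ctx_1^i}{\inctx \ctxzero{\inctx {\ctx_2^i} y}}}}$ versus $\lam y {\reset{\inctx{\ctx_1^i}{\app {\lamp \varx {\inctx \ctxzero \varx}}{\inctx {\ctx_2^i} y}}}}$), and $\rel_2$ contains a common term closed under those same two substitutions.

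The genuine gap lies in your last step. First, your verification covers only the base pairs; for the closure pairs---where the difference sits behind a $\lambda$, inside a continuation value created by an earlier capture---you merely assert that the cases apply ``hole-by-hole''. But this is precisely where the coinduction has to close: the paper's $\rel_2$ analysis must handle, e.g., the case $\tm = \inctx \rctx {\app {\vark_i} \val}$, in which applying a substituted continuation re-activates a buried difference and regenerates a base pair (sending $\rel_2$ back into $\rel_1$), as well as $\beta$-redexes $\app{\lamp z {\tm'}}{\vark_i}$ that copy the related continuation values into new positions, and captures that swallow further pieces of context into them. None of these is subsumed by your three cases, and they constitute the bulk of the paper's proof. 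Second, your closing claim that ``the congruence of $\bisim$ established earlier is what finally guarantees that all these context-closed pairs are contained in $\bisim$'' is circular: congruence of $\bisim$ can propagate bisimilarity of the base pairs to their contextual closure only once the base pairs are known to be bisimilar, which is exactly what is being proved. If the candidate relation is verified to be a bisimulation, congruence of $\bisim$ is not needed for the conclusion; if it is not verified, congruence cannot stand in for the missing work. (And if what you intend is a bisimulation-up-to-context argument, note that the paper explicitly states that the soundness of up-to-context techniques for applicative bisimilarity is an open problem, so that route is unavailable.) To repair the proof, replace the full congruence closure by the tailored relation $\rel_1 \cup \rel_2$ above and carry out the transition analysis for $\rel_2$ as well as for $\rel_1$.
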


\begin{proof}[Sketch]
  The complete proof, quite technical, can be found in
  Appendix~\ref{a:omega}. Let $\ctxzero$ be such that $\fv \ctxzero =
  \emptyset$. Given two families of contexts $(\ctx_1^i)_i$, $(\ctx_2^i)_i$, we
  write $\subs^\ctxzero_i$ (resp. $\subs^{\lam \varx {\inctx \ctxzero
      \varx}}_i$) the substitution mapping $\vark_i$ to $\lam y
  {\reset{\inctx{\ctx_1^i}{\inctx \ctxzero{\inctx {\ctx_2^i} y}}}}$ (resp.
  $\lam y {\reset{\inctx{\ctx_1^i}{\app {\lamp \varx {\inctx \ctxzero
            \varx}}{\inctx {\ctx_2^i} y}}}}$). We define
  \begin{eqnarray*}
    \rel_1 &=& \{ (\inctx \rctx {\app {\lamp \varx {\inctx \ctxzero
          \varx}} \tm} \subs^{\lam \varx {\inctx \ctxzero \varx}}_0 \ldots
    \subs^{\lam \varx {\inctx \ctxzero \varx}}_n, \inctx \rctx {\inctx \ctxzero
      \tm} \subs^\ctxzero_0 \ldots \subs^\ctxzero_n), \\ 
    && \hspace{19em} \fv{\tm, \rctx} \,\subseteq\, \{ \vark_0 \ldots \vark_n\} \} \\
    \rel_2 &=& \{ (\tm \subs^{\lam \varx {\inctx \ctxzero \varx}}_0 \ldots
    \subs^{\lam \varx {\inctx \ctxzero \varx}}_n, \tm \subs^\ctxzero_0 \ldots
    \subs^\ctxzero_n), \fv \tm \,\subseteq\, \{ \vark_0 \ldots \vark_n\} \}
  \end{eqnarray*}
  and we prove that $\rel_1 \cup \rel_2$ is a bisimulation. The relation
  $\rel_1$ contains the terms related by Proposition \ref{p:omega}. The
  transitions from terms in $\rel_1$ give terms in $\rel_1$, except if a capture
  happens in $\tm$; in this case, we obtain terms in $\rel_2$. Similarly, most
  transitions from terms in $\rel_2$ give terms in $\rel_2$, except if a term
  $\lam y {\reset{\inctx{\ctx_1^i}{\inctx \ctxzero{\inctx {\ctx_2^i} y}}}}$
  (resp.  $\lam y {\reset{\inctx{\ctx_1^i}{\app {\lamp \varx {\inctx \ctxzero
            \varx}}{\inctx {\ctx_2^i} y}}}}$) is applied to a value (\ie if $\tm
  = \inctx \rctx {\app {\vark_i} \val}$). In this case, the $\beta$-reduction
  generates terms in $\rel_1$. \qed
\end{proof}

\subsection{Bisimilarity and CPS Equivalence}
\label{ss:discussion-CPS}

In Section \ref{ss:example}, we have considered all the axioms of
Fig. \ref{f:axioms}, except $\AXshiftelim$. The terms $\shift \vark {\app
  \vark \tm}$ (with $\vark \notin \fv \tm$) and $\tm$ are not bisimilar in
general, as we can see in the following result.
\begin{proposition}
  We have $\shift \vark {\app \vark \val} \not\bisim \val$.
\end{proposition}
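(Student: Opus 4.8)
The crux is that $\shift\vark{\app\vark\val}$ and $\val$ sit in two \emph{different observable categories}: the former is a stuck term, the latter is a value, and the LTS of Figure~\ref{f:lts} assigns these two categories disjoint sets of action labels. My plan is to exploit this label mismatch directly, by exhibiting a single transition of $\shift\vark{\app\vark\val}$ that $\val$ cannot match even weakly.

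First I would record that $\shift\vark{\app\vark\val}$ is a closed stuck term: by Lemma~\ref{l:stuck} it has the form $\inctx\mtctx{\shift\vark{\app\vark\val}}$. Consequently, by Lemma~\ref{l:lts-redcbv-main} (which identifies $\ltstau$ with $\redcbv$), it admits no internal action, and being a non-value it admits no value-application transition. Its only transitions are context captures: by rule $\LTSshift$, taking $\ctx=\mtctx$, we obtain $\shift\vark{\app\vark\val}\lts\mtctx\reset{\app{\lamp\varx{\reset\varx}}\val}$, where I use $\vark\notin\fv\val$ so that the substitution leaves $\val$ unchanged and $\inctx\mtctx\varx=\varx$.

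On the other side, $\val$ is a value, so $\val\not\redcbv$; by Lemma~\ref{l:lts-redcbv-main} this means $\val$ has no $\ltstau$ transition at all, hence its set of $\cloltstau$-reducts is exactly $\{\val\}$. The same lemma states that every $\lts\ctx$ transition issues from a stuck term, and a value is not stuck. Combining these two facts, $\val$ admits no weak transition $\val\ltswk\mtctx\tm'$ for any $\tm'$. Therefore, in any applicative simulation containing the pair $(\shift\vark{\app\vark\val},\val)$, the transition $\shift\vark{\app\vark\val}\lts\mtctx\reset{\app{\lamp\varx{\reset\varx}}\val}$ would have to be matched by some $\val\ltswk\mtctx\tm'$, which is impossible; so no bisimulation relates the two terms, i.e. $\shift\vark{\app\vark\val}\not\bisim\val$.

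There is essentially no deep obstacle here—the argument is a pure label-mismatch—so the only point requiring care is to rule out the lone conceivable escape, namely that $\val$ might first perform some internal $\ltstau$ steps, turn into a stuck term, and only then capture the context $\mtctx$. This is precisely what the appeal to Lemma~\ref{l:lts-redcbv-main} closes: since $\val$ is already a value it cannot move at all under $\cloltstau$, so it never reaches a configuration from which a $\lts\mtctx$ action is available. This also makes transparent why axiom $\AXshiftelim$ fails for $\bisim$, in contrast to all the other axioms: bisimilarity, like $\ctxequiv$, observes stuckness, and $\shift\vark{\app\vark\val}$ is observably stuck whereas $\val$ is observably a value.
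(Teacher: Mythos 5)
Your proof is correct and takes essentially the same approach as the paper: the paper's argument is exactly that the $\lts\ctx$ transition of the stuck term $\shift \vark {\app \vark \val}$ cannot be matched (even weakly) by the value $\val$, since a value has no $\ltstau$ steps and, not being stuck, no capture transitions. You have merely spelled out the details that the paper leaves implicit, citing the appropriate lemmas.
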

The $\lts\ctx$ transition from $\shift \vark {\app \vark \val}$ cannot
be matched by $\val$. In terms of contextual equivalence, it is not
possible to equate a stuck term and a value (it is also forbidden by
the relation $\ctxequivone$ of Section \ref{ss:context-equiv}). The
CPS equivalence cannot distinguish between stuck terms and values,
because the CPS translation turns all $\lamshift$ terms into
$\lambda$-calculus terms of the form $\lam {\vark_1\vark_2}{\tm}$,
where $\vark_1$ is the continuation up to the first enclosing reset,
and $\vark_2$ is the continuation beyond this reset. Therefore, the
CPS translation (and CPS equivalence) assumes that there is always an
enclosing reset, while contextual equivalence does not. To be in
accordance with CPS, the contextual equivalence should be changed, so
that it tests terms only in contexts with an outermost delimiter. We
conjecture that the CPS equivalence is included in such a modified
contextual equivalence. Note that stuck terms can no longer be
observed in such modified relation, because a term within a reset
cannot become stuck (see the proof of Proposition
\ref{p:reset-reset}). Therefore, the bisimilarity of this article is
too discriminative w.r.t. to this modified equivalence, and a new
complete bisimilarity has to be found.

Conversely, there exist bisimilar terms that are not CPS equivalent:
\begin{proposition}
\begin{enumerate}
\item  
\label{p:ce-cps-one}
We have $\Omega \bisim \app \Omega \Omega$, but $\Omega$ and $\app \Omega
  \Omega$ are not CPS equivalent.
\item 
\label{p:ce-cps-two}
Let $\Theta = \app \theta \theta$, where $\theta = \lam {\varx
  y}{\app y {\lamp z {\app {\app {\app \varx \varx} y} z}}}$, and
  $\Delta = \lam \varx {\app{\delta_\varx}{\delta_\varx}}$, where
  $\delta_\varx = \lam y {\app \varx {\lamp z {\app {\app y y}
      z}}}$. We have $\Theta \bisim \Delta$, but $\Theta$ and $\Delta$
  are not CPS equivalent.
\end{enumerate}
\end{proposition}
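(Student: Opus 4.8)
The plan is to prove the bisimilarity and the CPS-inequivalence statements separately, treating both items of the proposition in parallel, since they share the same overall shape: a pair of terms that are operationally indistinguishable but syntactically out of reach of $\beta\eta$.

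For the bisimilarity in item~\ref{p:ce-cps-one}, I would observe that $\Omega$ and $\app \Omega \Omega$ are purely divergent: by $\LTSbeta$ the only transition of $\Omega$ is $\Omega \ltstau \Omega$, and by $\LTScompl$ the only transition of $\app \Omega \Omega$ is $\app \Omega \Omega \ltstau \app \Omega \Omega$; neither term offers a $\lts\val$ or a $\lts\ctx$ transition. It then suffices to check that the singleton $\{(\Omega, \app \Omega \Omega)\}$ is a bisimulation, each self-loop being matched by the zero-step weak transition $\cloltstau$ of the other side, which keeps the pair in the relation. There is no difficulty here.

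For the bisimilarity in item~\ref{p:ce-cps-two}, $\Theta$ and $\Delta$ are two call-by-value fixed-point combinators, and I would first record their unfoldings. We have $\Theta \ltstau \lam y {\app y {\lam z {\app {\app \Theta y} z}}}$, hence $\Theta \ltswk\val \app \val {\lam z {\app {\app \Theta \val} z}}$ for every value $\val$; on the other side $\Delta \lts\val \app {\delta_\val}{\delta_\val} \ltstau \app \val {\lam z {\app {\app {\delta_\val}{\delta_\val}} z}}$. Thus each term, once applied to an arbitrary $\val$, hands $\val$ a continuation, and the two continuations $\lam z {\app {\app \Theta \val} z}$ and $\lam z {\app {\app {\delta_\val}{\delta_\val}} z}$ stand in the very same relationship as $\Theta$ and $\Delta$ after one further application. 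The subtlety is that $\val$ is chosen by the environment and may plug these continuations into an arbitrary context before the recursion is unwound again. I would therefore take as candidate a relation closed under arbitrary contexts over the family of pairs $\{(\app \Theta \val,\; \app {\delta_\val}{\delta_\val})\}$, and verify the big-step bisimulation clauses, using the already-established fact that $\bisim$ is a congruence to absorb the common context surrounding each recursive call. The \emph{main obstacle} is precisely this bookkeeping: arranging the candidate so that it is closed both under the recursive unfolding, which regenerates the base pairs underneath a fresh context, and under the transitions triggered by the environment's value.

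For the CPS inequivalence in both items, recall that two terms are CPS equivalent exactly when their CPS images are $\beta\eta$-convertible. In each item the two images have the same B\"ohm tree --- for item~\ref{p:ce-cps-one} it is $\lam {\vark_1 \vark_2} {\bot}$ over an unsolvable body, and for item~\ref{p:ce-cps-two} it is the common infinite unfolding of a call-by-value fixed-point operator, exactly as for Curry's and Turing's combinators --- so no inequivalence can be read off from the semantics, and I would instead appeal to the Church--Rosser theorem: $\beta\eta$-convertible terms must share a common reduct. The plan is to compute $\cps \Omega$ and $\cps {\app \Omega \Omega}$ (respectively $\cps \Theta$ and $\cps \Delta$), describe the shape of all their $\beta\eta$-reducts, and exhibit a syntactic invariant that is preserved by $\beta\eta$-reduction and separates the two reduct sets, so that no common reduct can exist. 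I expect this reduct analysis to be the genuinely technical part of the proof: one must show that the extra application in $\app \Omega \Omega$ (item~\ref{p:ce-cps-one}) and the differing self-application patterns $\app {\app \varx \varx} y$ versus $\app y y$ (item~\ref{p:ce-cps-two}) survive, in a suitably normalized sense, along every reduction of the corresponding CPS terms.
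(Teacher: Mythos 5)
The paper never proves this proposition: it is stated bare, followed only by the one-sentence remark that contextual equivalence merges all diverging terms while CPS equivalence ``is more discriminating'' and cannot equate Turing's and Curry's combinators. So your proposal cannot be matched against a reference proof and must stand on its own. Your first bisimilarity is indeed complete and correct: $\Omega$ and $\app \Omega \Omega$ each have exactly one transition, a $\tau$-self-loop, so the singleton $\{(\Omega, \app \Omega \Omega)\}$ is a bisimulation, each loop being matched by the empty weak transition.

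The genuine gap is in your argument for $\Theta \bisim \Delta$. Your plan to ``use the already-established fact that $\bisim$ is a congruence to absorb the common context surrounding each recursive call'' is, inside a coinductive verification, exactly bisimulation \emph{up to context}: after an unfolding you hold two terms that are related only by your candidate $\rel$, not (yet) by $\bisim$, underneath a common context, and congruence of $\bisim$ says nothing about such pairs. The paper itself points out (Section~\ref{s:conclusion}) that the validity of applicative bisimulation up to context is an open problem already in the plain $\lambda$-calculus, so this step is unavailable. If instead you intend the candidate to literally be the context closure of the family $\{(\app \Theta \val,\, \app {\delta_\val}{\delta_\val})\}$, then that relation does not satisfy the (big-step) clauses as it stands, because the two sides run out of step: one unfolding costs three $\tau$-steps on the $\Theta$ side and two on the $\Delta$ side, so the residuals the clauses demand --- for instance $\app \val {\lamp z {\app {\app \Theta \val} z}}$ against $\app {\delta_\val}{\delta_\val}$ after matching the $\lts\val$ transition of $\Theta$ --- are not context-instances of the base pairs but only reduce to such instances. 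The relation must additionally be closed under these deterministic reducts, and the verification then requires an alignment argument for transitions of arbitrary common contexts filled with the two families of values. All of this is doable, and the paper's own proof of Proposition~\ref{p:omega} (Appendix~\ref{a:omega}), with its substitution-indexed relations $\rel_1, \rel_2$ and explicit enumeration of transitions, is the right template; but none of it is carried out in your proposal, so the claim $\Theta \bisim \Delta$ is not established.

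On the CPS side, your route (confluence of $\beta\eta$ plus a reduction-preserved invariant separating the two reduct sets) is the standard and essentially the only elementary one, and you are right that semantic methods cannot help since the images are equated by B\"ohm trees. But the invariant is the entire mathematical content of the claim and you do not exhibit it. For item~\ref{p:ce-cps-two} it amounts to the classical inconvertibility of Curry's and Turing's fixed-point combinators transported to their CPS images --- a fact known to require delicate reduction analysis --- and note that it cannot be inferred from inconvertibility of the source terms, since a CPS translation may identify strictly more terms; the analysis has to be performed on the images themselves. As it stands, this half of your proposal is a plan rather than a proof.
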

Contextual equivalence puts all diverging terms in one equivalence
class, while CPS equivalence is more discriminating. Furthermore, as
is usual with equational theories for $\lambda$-calculi, CPS
equivalence is not strong enough to equate Turing's and Curry's
(call-by-value) fixed point combinators.

\section{Conclusion}
\label{s:conclusion}

In this article, we propose a first study of the behavioral theory of
a $\lambda$-calculus with delimited-control operators. We discuss
various definitions of contextual equivalence, and we define an
LTS-based applicative bisimilarity which is sound and complete
w.r.t. the chosen contextual equivalence. Finally, we point out some
differences between bisimilarity and CPS equivalence. We believe this
work can be pursued in the following directions.

\paragraph{Up-to techniques.} Up-to techniques
\cite{Sangiorgi-Walker:01,Lassen:98,Sangiorgi-al:LICS07} have been
introduced to simplify the use of bisimulation in proofs of program
equivalences. The idea is to prove terms equivalences using relations
that are usually not bisimulations, but are included in bisimulations.
The validity of some applicative bisimilarities up-to context remains
an open problem in the $\lambda$-calculus~\cite{Lassen:98};
nevertheless, we want to see if some up-to techniques can be applied
to the bisimulations of this article.

\paragraph{Other forms of bisimilarity.} Applicative bisimilarity is simpler to
prove than contextual equivalence, but its definition still involves
some quantification over values and pure contexts in
labels. \emph{Normal bisimilarities} are easier to use because their
definitions do not feature such quantification. Lassen has developed a
notion of normal bisimilarity, sound in various $\lambda$-calculi
\cite{Lassen:LICS05,Lassen:MFPS05,Lassen-Levy:LICS08}, and also
complete in the $\lambda \mu$-calculus with state
\cite{Stoevring-Lassen:POPL07}. It would be interesting to see if this
equivalence can be defined in a calculus with delimited control, and
if it is complete in this setting. Another kind of equivalence worth
exploring is \emph{environmental bisimilarity}
\cite{Sangiorgi-al:LICS07}.

\paragraph{Other calculi with control.} Defining an applicative bisimilarity for
the call-by-name variant of $\lamshift$ and for the hierarchy of
delimited-control operators \cite{Danvy-Filinski:LFP90} should be
straightforward. We plan to investigate applicative bisimilarities for
a typed $\lamshift$ as well \cite{Biernacka-Biernacki:PPDP09}. The
problem seems more complex in calculi with abortive operators, such as
call/cc. Because there is no delimiter for capture, these languages
are not compositional (\ie $\tm \redcbv \tm'$ does not imply $\inctx
\ctx \tm \redcbv \inctx \ctx {\tm'}$), which makes the definition of a
compositional LTS more difficult. 

\paragraph*{Acknowledgments:}
We thank Ma{\l}gorzata Biernacka, Daniel Hirschkoff, Damien Pous, and the
anonymous referees for many helpful comments on the presentation of this work.

\bibliographystyle{abbrv}

\newpage
\appendix

\section{Contextual Equivalence}
\label{a:context-equiv}

We prove the result admitted in Section \ref{ss:context-equiv}.
\begin{lemma}
\label{l:aone}  
  Let $\tmzero = \app {(\shift \vark {\lam \varx \varx})} \Omega$. Let $\tm$
  be such that $\varx$ is the only free variable of $\tm$. If $\subst \tm \varx
  \tmzero \evalcbv \valzero$, then there exists $\val$ such that $\valzero =
  \subst \val \varx \tmzero$, and for all $\tmone$, we have $\subst \tm \varx
  \tmone \evalcbv \subst \val \varx \tmone$.
\end{lemma}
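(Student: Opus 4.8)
The plan is to treat every occurrence of $\varx$ in $\tm$ as an \emph{inert placeholder} standing for the stuck term $\tmzero$, and to show that along the (terminating) reduction of $\subst \tm \varx \tmzero$ no placeholder is ever the subject of a reduction step; the reduction then proceeds uniformly in the placeholder, so it survives replacing $\tmzero$ by an arbitrary closed $\tmone$. First I would record the two properties of $\tmzero$ that make this work. By Lemma~\ref{l:stuck}, $\tmzero$ is stuck and is not a value, so it can be neither the operator of a $\beta$-redex nor an evaluated argument. The crucial point is that \emph{reactivating} $\tmzero$ diverges: for every pure context $\ctx$,
\[
  \reset{\inctx \ctx \tmzero} \;\clocbv\; \reset{\reset{\inctx \ctx {\app {\lamp \varx \varx} \Omega}}},
\]
and since the redex of the right-hand term is $\Omega$ and $\Omega \divcbv$, we obtain $\reset{\inctx \ctx \tmzero} \divcbv$ (reactivation reifies the surrounding continuation and eventually applies it to $\Omega$).

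The heart of the proof is a one-step dichotomy for \emph{skeletons}, i.e. terms $p$ with $\fv p \subseteq \{\varx\}$: if $\subst p \varx \tmzero$ is closed and reduces, then either there is a skeleton $p'$ with $\subst p \varx \tmzero \redcbv \subst{p'}\varx\tmzero$ and $\subst p \varx \tmone \redcbv \subst{p'}\varx\tmone$ for every closed $\tmone$, or $\subst p \varx \tmzero \divcbv$. To prove it I would use the unique-decomposition lemma to write $\subst p \varx \tmzero = \inctx \rctx \redex$. A placeholder is reached at the evaluation focus only when the evaluation descends into it (because it is the active operator, the argument of an application with a value operator, or a reset's body), and this always exposes the shift inside $\tmzero$: if that shift has no enclosing reset the term is stuck, contradicting the assumption that it reduces; otherwise the step reactivates a copy of $\tmzero$ and the term diverges by the property above. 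In the remaining case no placeholder is at the focus, so every subterm the evaluation has already passed --- and must therefore be a value --- is a genuine value of $p$ (since $\tmzero$ is not a value). Hence $\redex$, the captured pure context, and the enclosing evaluation context are all inherited from $p$ and are left unchanged by substituting any closed $\tmone$ for $\varx$; the same redex fires and yields the generic case (commuting the continuation-substitution with $\subst{\cdot}\varx\tmone$ uses $\vark \neq \varx$ and closedness of $\tmone$).

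Finally I would run an induction along the reduction $\subst \tm \varx \tmzero = \subst{p_0}\varx\tmzero \redcbv \cdots \redcbv \subst{p_n}\varx\tmzero = \valzero$ (with $p_0 = \tm$), keeping the invariant that every term equals $\subst{p_i}\varx\tmzero$ for a skeleton $p_i$ and that $\subst \tm \varx \tmone \clocbv \subst{p_i}\varx\tmone$ for every closed $\tmone$. As $\redcbv$ is deterministic and $\subst{p_i}\varx\tmzero \clocbv \valzero$ with $\valzero$ a value, no $\subst{p_i}\varx\tmzero$ diverges, so the dichotomy always supplies its generic case and the invariant propagates. At the end $\valzero = \subst{p_n}\varx\tmzero$ is a value; as $\tmzero$'s outermost constructor is an application, $p_n$ must itself be an abstraction, so $\val := p_n$ is a value with $\valzero = \subst \val \varx \tmzero$, and $\subst \tm \varx \tmone \clocbv \subst \val \varx \tmone$ with $\subst \val \varx \tmone$ an abstraction, hence a normal form; thus $\subst \tm \varx \tmone \evalcbv \subst \val \varx \tmone$. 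The main obstacle is the dichotomy: one must argue carefully from the unique decomposition that a placeholder can only ever be reached at an application or reset node, that substituting a non-value for $\varx$ neither creates nor destroys the redex that fires, and that reactivation genuinely diverges rather than getting stuck or terminating.
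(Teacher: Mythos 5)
Your proof is correct and takes essentially the same approach as the paper's own: an induction along the terminating reduction sequence in which every step is shown to be generic in the placeholder, the only alternative---activation of a copy of $\tmzero$ by a context capture---being ruled out because it reifies the surrounding continuation, applies it to $\Omega$, and hence diverges, contradicting evaluation to a value (the paper packages this as induction on the number of steps with a case analysis on the first step, you as a dichotomy lemma plus a forward invariant, but the content is the same). One remark: your reactivation computation, like the paper's own proof, is for $\tmzero = \app {(\shift \vark {\app \vark {\lamp \varx \varx}})} \Omega$ as defined in Section~\ref{ss:context-equiv}, not for the $\tmzero$ literally displayed in the appendix statement, where the application of $\vark$ has evidently been dropped by mistake; for that literal term the capture discards the captured context and returns $\lamp \varx \varx$, so the divergence property---and indeed the lemma itself (take $\tm = \reset \varx$)---fails.
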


\begin{proof}
  We proceed by induction on the number $n$ of reduction steps in
  $\subst \tm \varx \tmzero \evalcbv \valzero$. If $n=0$, then $\tm$
  is a value, and the result is obvious. For $n>0$, we have $\subst
  \tm \varx \tmzero \redcbv \tm' \evalcbv \valzero$. By case analysis
  on the reduction $\subst \tm \varx \tmzero \redcbv \tm'$, we can see
  that $\tm'$ can be written $\subst{\tm''} \varx \tmzero$. In
  particular, the reduction $\inctx \rctx {\reset {\tmzero}} \redcbv
  \inctx \rctx {\reset {\app {\lamp \varx {\reset {\app \varx
            \Omega}}}{\lamp \varx \varx}}}$ is not possible, because
  otherwise $\tm'$ would diverge. By induction hypothesis, there
  exists $\val$ such that $\valzero = \subst \val \varx \tmzero$, and
  for all $\tmone$, we have $\subst {\tm''} \varx \tmone \evalcbv
  \subst \val \varx \tmone$. Therefore, $\subst {\tm} \varx \tmone
  \evalcbv \subst \val \varx \tmone$ holds, as wished.  \qed
\end{proof}

\section{Bisimilarity}

\subsection{Labelled Transition System}

\begin{lemma}
  \label{l:decompose-shift}
  If $\tm \lts\ctx \tm'$, then there exist $\ctx'$, $\vark$, and $\tms$
  such that $\tm = \inctx {\ctx'}{\shift \vark \tms}$ and $\tm' =
  \reset{\subst \tms \vark {\lam \varx {\reset {\inctx {\ctx}{\inctx
            {\ctx'} \varx}}}}}$.
\end{lemma}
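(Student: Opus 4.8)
The plan is to proceed by induction on the derivation of the transition $\tm \lts\ctx \tm'$. Only three rules can conclude a transition whose label is a pure context, namely $\LTSshift$, $\LTScaptl$, and $\LTScaptr$, so these are exactly the cases to treat. The base case is $\LTSshift$: here $\tm = \shift \vark \tms$ and $\tm' = \reset{\subst \tms \vark {\lam \varx {\reset {\inctx \ctx \varx}}}}$, and taking $\ctx' = \mtctx$ immediately gives $\tm = \inctx{\mtctx}{\shift \vark \tms}$ and, since $\inctx{\ctx}{\inctx{\mtctx}{\varx}} = \inctx \ctx \varx$, the required shape for $\tm'$.

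The two inductive cases are where the work lies. In case $\LTScaptl$ we have $\tm = \app \tmzero \tmone$ with premise $\tmzero \lts{\apctx \ctx \tmone} \tmzero'$ and $\tm' = \tmzero'$. Applying the induction hypothesis to this premise yields $\ctx''$, $\vark$, $\tms$ with $\tmzero = \inctx{\ctx''}{\shift \vark \tms}$ and $\tmzero' = \reset{\subst \tms \vark {\lam \varx {\reset{\inctx{(\apctx \ctx \tmone)}{\inctx{\ctx''}{\varx}}}}}}$. Using the inside-out reading $\inctx{(\apctx \ctx \tmone)}{s} = \inctx \ctx {\app s \tmone}$, the body of the substitution rewrites to $\inctx{\ctx}{\app{(\inctx{\ctx''}{\varx})}{\tmone}}$. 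Case $\LTScaptr$ is symmetric, with $\tm = \app \val \tmone$, premise $\tmone \lts{\vctx \val \ctx} \tmone'$, and body rewriting to $\inctx{\ctx}{\app \val {\inctx{\ctx''}{\varx}}}$.

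To close both cases I would establish the auxiliary fact that pure contexts are closed under composition: for every pure context $\ctx''$ and term $\tmone$ (resp. value $\val$) there is a pure context $\ctx'$ with $\inctx{\ctx'}{s} = \app{(\inctx{\ctx''}{s})}{\tmone}$ (resp. $\inctx{\ctx'}{s} = \app \val {\inctx{\ctx''}{s}}$) for every $s$; this is proved by a routine induction on $\ctx''$, pushing the new frame down to the hole. With such $\ctx'$ in hand, instantiating $s = \shift \vark \tms$ gives $\inctx{\ctx'}{\shift \vark \tms} = \app \tmzero \tmone = \tm$ (resp. $\app \val \tmone = \tm$), while instantiating $s = \varx$ turns the rewritten substitution body into $\inctx \ctx {\inctx{\ctx'}{\varx}}$, so that $\tm' = \reset{\subst \tms \vark {\lam \varx {\reset{\inctx \ctx {\inctx{\ctx'}{\varx}}}}}}$, exactly the claimed form. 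I expect the inside-out bookkeeping in this composition step---keeping straight that the new application frame must be attached at the outermost, not the innermost, position---to be the only delicate point; everything else is mechanical.
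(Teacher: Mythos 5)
Your proof is correct and follows essentially the same route as the paper's: induction on the derivation of $\tm \lts\ctx \tm'$, with $\LTSshift$ as the base case and $\LTScaptl$/$\LTScaptr$ handled by applying the induction hypothesis and unfolding the inside-out reading of context filling. The only difference is presentational: you isolate the closure of pure contexts under adding an outermost application frame as an explicit auxiliary lemma proved by induction on the context, whereas the paper simply writes down the composed context and leaves that fact implicit.
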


\begin{proof}
  By induction on $\tm \lts\ctx \tm'$. Suppose we have $\shift \vark \tms
  \lts\ctx \subst \tms \vark {\lam \varx {\reset {\inctx \ctx \varx}}}$; the
  result is obvious. Suppose we have $\tm = \app \val \tmzero$ and $\tmzero
  \lts{\vctx \val \ctx} \tm'$. By induction hypothesis there exists $\ctx'$, $\tms$ such
  that $\tmzero = \inctx {\ctx'}{\shift \vark \tms}$ and $\tm' = \reset{\subst
    \tms \vark {\lam \varx {\reset {\inctx {(\vctx \val \ctx)}{\inctx {\ctx'}
            \varx}}}}}$. Therefore we have $\app \val \tmzero = \app \val
  {\inctx {\ctx'}{\shift \vark \tms}}$, and $\tm' = \reset {\subst \tms \vark
    {\lam \varx {\reset {\inctx \ctx {(\vctx \val {\inctx {\ctx'} \varx})}}}}}$,
  as required. The other cases are treated similarly.  \qed
\end{proof}

\begin{lemma}[Lemma \ref{l:lts-redcbv-main} in the article]
  \label{l:lts-redcbv}
  The following holds:
  \begin{itemize}
  \item We have $\ltstau = \redcbv$.
  \item If $\tm \lts\ctx \tm'$, then $\tm$ is a stuck term, and $\reset{\inctx
      \ctx \tm} \ltstau \tm'$.
  \item If $\tm \lts\val \tm'$, then $\tm$ is a value, and $\app \tm \val
    \ltstau \tm'$.
  \end{itemize}
\end{lemma}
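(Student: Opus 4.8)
The plan is to prove Lemma~\ref{l:lts-redcbv} by establishing the three items in turn, each by a structural induction on the derivation of the relevant transition, using the auxiliary decomposition Lemma~\ref{l:decompose-shift} to handle the capture transitions.

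For the first item, $\ltstau\,=\,\redcbv$, I would prove the two inclusions separately. For $\ltstau\,\subseteq\,\redcbv$, I proceed by induction on the derivation of $\tm\ltstau\tm'$. The base rules $\LTSbeta$ and $\LTSreset$ correspond directly to the reduction rules $\RRbeta$ and $\RRreset$ with the empty metalevel context, so these are immediate. The compatible rules $\LTScompl$, $\LTScompr$, and $\LTScompreset$ follow from the induction hypothesis together with the fact that $\redcbv$ is closed under the evaluation contexts $\apctx \ctx \tm$, $\vctx \val \ctx$, and $\resetctx \ctx$ respectively (these contexts are all of the form $\rctx$, so a reduction inside them is again a reduction). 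The interesting case is $\LTScaptreset$: here $\reset\tm\ltstau\tm'$ is derived from $\tm\lts\mtctx\tm'$, and I invoke the second item of the lemma (the capture case with empty context) to conclude $\reset{\inctx\mtctx\tm}=\reset\tm\ltstau\tm'$ corresponds to a $\RRshift$ reduction. For the converse $\redcbv\,\subseteq\,\ltstau$, I use the unique-decomposition property: every reduction is $\inctx\rctx\redex\redcbv\inctx\rctx{\tm''}$ for a redex $\redex$, and I show by induction on the structure of $\rctx$ that the matching internal transition exists, the base cases on $\redex$ being handled by $\LTSbeta$, $\LTSreset$, and (for the $\reset{\inctx\ctx{\shift\vark\tm}}$ redex) by the capture machinery of rules $\LTScaptreset$, $\LTScaptl$, $\LTScaptr$, $\LTSshift$.

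For the second item, I argue by induction on the derivation of $\tm\lts\ctx\tm'$, but the cleanest route is to lean on Lemma~\ref{l:decompose-shift}, which already tells me that $\tm=\inctx{\ctx'}{\shift\vark\tms}$ and $\tm'=\reset{\subst\tms\vark{\lam\varx{\reset{\inctx\ctx{\inctx{\ctx'}\varx}}}}}$. From the shape $\tm=\inctx{\ctx'}{\shift\vark\tms}$ and Lemma~\ref{l:stuck}, $\tm$ is stuck. It then remains to check that $\reset{\inctx\ctx\tm}\ltstau\tm'$. Since $\inctx\ctx\tm=\inctx\ctx{\inctx{\ctx'}{\shift\vark\tms}}=\inctx{(\inctx\ctx{\ctx'})}{\shift\vark\tms}$, the redex $\reset{\inctx{(\inctx\ctx{\ctx'})}{\shift\vark\tms}}$ reduces by $\RRshift$ to exactly $\tm'$ (matching the composite context $\inctx\ctx{\ctx'}$ against the label computed by Lemma~\ref{l:decompose-shift}), and by the first item this reduction is realized as an internal transition. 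The third item is the simplest: $\tm\lts\val\tm'$ can only be derived by $\LTSval$, so $\tm=\lam\varx\tms$ is a value and $\tm'=\subst\tms\varx\val$; then $\app\tm\val=\app{\lamp\varx\tms}\val\ltstau\subst\tms\varx\val=\tm'$ is a single $\LTSbeta$ step.

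The main obstacle is the mutual dependence between the first two items: the $\LTScaptreset$ case of $\ltstau\,\subseteq\,\redcbv$ needs the capture item, while verifying the capture item uses the first item to turn a $\RRshift$ reduction back into an internal transition. I would break this circularity by ordering the argument carefully, for instance by first proving a self-contained statement about capture transitions (essentially item two, obtained purely from Lemma~\ref{l:decompose-shift} and the structure of $\RRshift$, without reference to $\ltstau\,\subseteq\,\redcbv$), and only then proving the equality of item one with that result already in hand. The bookkeeping of how the label context $\ctx$ accumulates through the $\LTScaptl$/$\LTScaptr$ rules — and matching it against the inside-out context representation when forming the $\RRshift$ redex — is the one genuinely fiddly piece, but it is exactly what Lemma~\ref{l:decompose-shift} was set up to discharge.
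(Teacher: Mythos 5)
Your overall route coincides with the paper's: Lemma~\ref{l:decompose-shift} is the key tool, the second item is read off from it together with Lemma~\ref{l:stuck}, the third item is immediate from rule $\LTSval$, and the first item is proven by two inclusions whose only interesting cases concern context capture. The compositional cases of $\ltstau \,\subseteq\, \redcbv$ and the use of unique decomposition for the converse are exactly as in the paper.

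The one genuine wrinkle is the circularity you yourself flag, and your proposed repair does not quite eliminate it. The conclusion of the second item, $\reset{\inctx \ctx \tm} \ltstau \tm'$, is a statement about the LTS, so it cannot be obtained ``purely from Lemma~\ref{l:decompose-shift} and the structure of $\RRshift$'': the rule $\RRshift$ hands you a \emph{reduction} $\reset{\inctx \ctx \tm} \redcbv \tm'$, and turning that back into an internal transition is exactly the inclusion $\redcbv \,\subseteq\, \ltstau$ that you were trying to avoid. The paper escapes the circle because neither fact is derived from the other through the reduction semantics: both rest on a direct derivation-building induction on the LTS side. Concretely, from the decomposition $\tm = \inctx {\ctx'}{\shift \vark \tms}$ one shows, by induction on the (inside-out) context, that the rules $\LTSshift$, $\LTScaptr$, $\LTScaptl$ assemble a transition $\inctx \ctx {\inctx {\ctx'}{\shift \vark \tms}} \lts\mtctx \tm'$, whence $\LTScaptreset$ yields $\reset{\inctx \ctx \tm} \ltstau \tm'$; and this same construction, extended through an enclosing evaluation context $\rctx$ (the paper's ``induction on $\rctx$''), is what discharges the $\RRshift$ case of $\redcbv \,\subseteq\, \ltstau$. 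Once the auxiliary fact is phrased this way---as an LTS-derivation construction rather than an appeal to either item---your ordering works and the proof goes through exactly as in the paper. Symmetrically, in the $\LTScaptreset$ case of $\ltstau \,\subseteq\, \redcbv$ you should invoke Lemma~\ref{l:decompose-shift} directly (it gives the shapes of $\tm$ and $\tm'$, hence an $\RRshift$ step), not ``the second item'', which again says nothing about $\redcbv$.
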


\begin{proof}
  For the first result, the only difficult transition to check is the capture by
  shift. If $\reset \tm \lts\tau \tm'$ with $\tm \lts\mtctx \tm'$, then by Lemma
  \ref{l:decompose-shift}, there exists $\ctx$, $\tms$ such that $\tm = \inctx
  \ctx {\shift \vark \tms}$ and $\tm' = \reset{\subst \tms \vark {\lam \varx
      {\reset {\inctx \ctx \varx}}}}$. We have $\reset {\inctx \ctx {\shift
      \vark \tms}} \redcbv \reset{\subst \tms \vark {\lam \varx {\reset {\inctx
          \ctx \varx}}}}$, as wished. Conversely, if $\inctx \rctx{\reset {\inctx \ctx
    {\shift \vark \tms}}} \redcbv \inctx \rctx {\reset{\subst \tms \vark {\lam \varx {\reset
        {\inctx \ctx \varx}}}}}$, then one can check that $\inctx \rctx {\reset {\inctx \ctx
    {\shift \vark \tms}}} \lts\tau \inctx \rctx {\reset{\subst \tms \vark {\lam \varx {\reset
        {\inctx \ctx \varx}}}}}$ holds by induction on $\rctx$.

  For the second result, by Lemma \ref{l:decompose-shift}, there exist
  $\ctx'$, $\vark$ and $\tms$ such that $\tm = \inctx {\ctx'}{\shift
    \vark \tms}$ and $\tm' = \reset{\subst \tms \vark {\lam \varx
      {\reset {\inctx {\ctx}{\inctx {\ctx'} \varx}}}}}$. The term
  $\tm$ is stuck by Lemma \ref{l:stuck}. We can also easily check that
  $\reset{\inctx \ctx \tm} = \reset{\inctx \ctx {\inctx {\ctx'}{\shift
        \vark \tms}}} \lts\tau \reset{\subst \tms \vark {\lam \varx
      {\reset {\inctx {\ctx}{\inctx {\ctx'} \varx}}}}}$ holds.

  For the last result, by looking at the LTS, it is easy to see that $\tm$ must
  be a value $\lam \varx \tms$, and $\tm' = \subst \tms \varx \val$, and $\app
  \tm \val \lts\tau \tm'$.
  \qed
\end{proof}

\subsection{Howe's method}
\label{a:howe}

\begin{lemma}
  \label{l:same-size}
  If $\tmzero \clohbisim \tmone$, then there exists a substitution $\subs$ which
  closes $\tmzero$ and $\tmone$ such that $\tmzero \subs \clohbisimc \tmone
  \subs$, and the size of the derivation of $\tmzero \subs \clohbisimc \tmone
  \subs$ is equal to the size of the derivation of $\tmzero \clohbisim \tmone$.
\end{lemma}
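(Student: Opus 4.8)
The plan is to prove a stronger, universally quantified statement by induction on the derivation of $\tmzero \clohbisim \tmone$, and then recover the lemma by specialization. The strengthened claim is: for \emph{every} substitution $\subs$ mapping variables to closed values, $\tmzero \subs \clohbisim \tmone \subs$ holds via a derivation of the same size as the given one for $\tmzero \clohbisim \tmone$. Here $\subs$ need not close $\tmzero$ and $\tmone$, so the conclusion lives in $\clohbisim$ rather than $\clohbisimc$; this extra generality is exactly what lets the induction pass the binder cases. Granting the strengthened claim, the lemma follows by taking $\subs$ to send every variable in $\fv\tmzero \cup \fv\tmone$ to a fixed closed value such as $\lam\varx\varx$: then $\tmzero\subs$ and $\tmone\subs$ are closed, so $\tmzero\subs \clohbisim \tmone\subs$ is in fact $\tmzero\subs \clohbisimc \tmone\subs$, with the derivation size preserved.

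For the induction I would do a case analysis on the last rule. For the base rule $\tmzero \open\bisim \tmone$, I would first observe that closed-value substitution preserves the open extension: if $\tmzero \open\bisim \tmone$ then $\tmzero\subs \open\bisim \tmone\subs$, since any substitution $\subs'$ closing $\tmzero\subs, \tmone\subs$ composes with $\subs$ into a closing substitution of $\tmzero, \tmone$, and because $\subs$ ranges over closed values its images are untouched by $\subs'$. Re-applying the base rule keeps the size at one. The right-composition rule $\tmzero \clohbisim \tms \open\bisim \tmone$ combines the induction hypothesis on $\tmzero \clohbisim \tms$ with the same preservation fact for $\tms \open\bisim \tmone$, re-applying the rule for one more step. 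The compatible-refinement cases for application and reset are immediate, as these operators bind nothing: substitution distributes over the immediate subterms, the induction hypothesis applies to each, and reassembling preserves the sum of subderivation sizes plus one.

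The delicate cases are the binders $\lam\varx\tmzero'$ and $\shift\vark\tmzero'$ together with the variable leaf. For $\lam\varx\tmzero' \comp\clohbisim \lam\varx\tmone'$ arising from $\tmzero' \clohbisim \tmone'$, I would $\alpha$-rename so that $\varx$ is disjoint from the domain and range of $\subs$, giving $(\lam\varx\tmzero')\subs = \lam\varx{(\tmzero'\subs)}$, and then apply the induction hypothesis to the \emph{body} $\tmzero' \clohbisim \tmone'$ with this same $\subs$ (which leaves $\varx$ free), obtaining $\tmzero'\subs \clohbisim \tmone'\subs$ of equal size before re-abstracting. This is exactly where quantifying over arbitrary, non-closing substitutions is indispensable: a naive existential induction would force the closing substitution to instantiate the bound $\varx$ inside the body, leaving nothing to re-abstract. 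The $\shift$ case is identical. The main obstacle to flag is the variable leaf combined with the size bookkeeping: when $\varx$ is in the domain of $\subs$, $\varx\subs$ is a closed value $\val$, and I would re-derive $\val \clohbisim \val$ from reflexivity of $\bisim$ through the base rule rather than through compatible refinement. The rule used therefore changes, so equality of sizes only holds under a size measure that counts applications of the three $\clohbisim$ rules uniformly, so that a variable leaf and a base-rule leaf both have size one; I would make this convention explicit, since the sole purpose of the lemma---to feed a subsequent size-indexed induction establishing the simulation-like property (Lemma~\ref{l:sim-property-main})---hinges on it.
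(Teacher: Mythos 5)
Your proposal is correct, and it follows the paper's overall strategy---induction on the derivation of $\tmzero \clohbisim \tmone$---but it proves a genuinely stronger statement, and that strengthening matters. The paper's own proof is essentially one line: in the base case $\tmzero \open\bisim \tmone$ it picks an arbitrary closing substitution $\subs$, notes that $\tmzero\subs \open\bisim \tmone\subs$, and declares the remaining cases ``easy using induction.'' Taken literally, that existential induction hypothesis (for each premise, \emph{some} closing substitution exists) is too weak: under a binder, $\lam \varx {\tmzero'} \comp\clohbisim \lam \varx {\tmone'}$ requires a size-preserving derivation for the still-open bodies $\tmzero'\subs \clohbisim \tmone'\subs$ with $\varx$ free, and the two-premise application case requires a single substitution that works uniformly for both premises; neither is delivered by separate existential statements about the subderivations. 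Your universally quantified claim---every substitution by closed values preserves $\clohbisim$ together with derivation size, with the conclusion in $\clohbisim$ rather than $\clohbisimc$---is exactly the invariant that survives these cases, and specializing to the substitution sending all free variables to $\lam \varx \varx$ recovers the lemma as stated; in effect you have written out the proof the paper's sketch implicitly relies on. Your two flagged subtleties are also real: the closed-value restriction on $\subs$ is what makes the composition argument for preservation of $\open\bisim$ sound, and at a variable leaf the last rule of the derivation changes from compatible refinement to the base rule applied to $\val \open\bisim \val$ (reflexivity of $\bisim$), so sizes agree only under a measure that counts applications of the three rules of $\clohbisim$ uniformly---which is indeed the reading needed for the lemma's sole purpose, the size-indexed induction in the proof of Lemma~\ref{l:sim-property-main}.
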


\begin{proof}
  By induction on $\tmzero \clohbisim \tmone$. Suppose we have $\tmzero
  \open\bisim \tmone$. Let $\subs$ be a substitution which closes $\tmzero$ and
  $\tmone$; we have $\tmzero\subs \open\bisim \tmone\subs$. The remaining cases
  are easy using induction.  \qed
\end{proof}

\begin{lemma}
  \label{l:cloh-act}
  Let $\tm$ be a closed term. If $\tm \lts\act \tm'$, then for all $\act
  \clohbisimc \act'$, there exists $\tm''$ such that $\tm \lts{\act'} \tm''$ and
  $\tm' \clohbisimc \tm''$.
\end{lemma}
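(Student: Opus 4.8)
The plan is to argue by case analysis on the shape of the label $\act$, exploiting the crucial fact that the source term $\tm$ is \emph{fixed}: the matching transition $\tm \lts{\act'} \tm''$ is obtained by re-running the very derivation of $\tm \lts\act \tm'$ with $\act'$ threaded through in place of $\act$, so that $\tm''$ differs from $\tm'$ only in the sub-phrases dictated by the label. Since the label relation $\clohbisimc$ keeps $\tau$, value labels, and context labels in separate syntactic classes, the three cases are exhaustive. The case $\act = \tau$ is immediate: then $\act' = \tau$, we take $\tm'' = \tm'$, and $\tm' \clohbisimc \tm'$ holds by reflexivity of $\clohbisim$ (which contains the reflexive relation $\open\bisim$).

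For a value label, inspection of the LTS shows that the only rule producing a $\lts\val$ transition is $\LTSval$, whence $\tm = \lam \varx \tms$ and $\tm' = \subst \tms \varx \val$. Since $\act' = \val'$ is again a value label, $\val'$ is a value with $\val \clohbisim \val'$, so $\lam \varx \tms \lts{\val'} \subst \tms \varx {\val'}$ is a legitimate $\LTSval$ transition and we take $\tm'' = \subst \tms \varx {\val'}$. The relation $\subst \tms \varx \val \clohbisimc \subst \tms \varx {\val'}$ then follows from the substitutivity of $\clohbisim$ (first item of Lemma~\ref{l:properties}), applied with $\tms \clohbisim \tms$ (reflexivity) and $\val \clohbisim \val'$; closedness of all terms involved turns $\clohbisim$ into $\clohbisimc$.

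For a context label, Lemma~\ref{l:decompose-shift} yields $\tm = \inctx{\ctx_0}{\shift \vark \tms}$ and $\tm' = \reset{\subst \tms \vark {\lam \varx {\reset{\inctx \ctx {\inctx{\ctx_0} \varx}}}}}$; the same decomposition with $\ctx'$ substituted for the outer label gives the candidate $\tm'' = \reset{\subst \tms \vark {\lam \varx {\reset{\inctx{\ctx'}{\inctx{\ctx_0} \varx}}}}}$, reached by a transition $\tm \lts{\ctx'} \tm''$ of the same derivation shape (the side condition $\varx \notin \fv{\ctx'}$ holds since labels in $\clohbisimc$ are closed). It then remains to establish $\tm' \clohbisimc \tm''$, which I would build up in stages: from $\ctx \clohbisimc \ctx'$ and the fact that filling Howe-related contexts with the same term yields Howe-related terms I obtain $\inctx \ctx {\inctx{\ctx_0} \varx} \clohbisimc \inctx{\ctx'}{\inctx{\ctx_0} \varx}$; two steps of compatible refinement (for $\reset$, then for $\lam \varx$) relate the two captured continuations as closed values; substitutivity (Lemma~\ref{l:properties}) relates the two bodies $\subst \tms \vark {\cdots}$; and a final compatible refinement for $\reset$ yields $\tm' \clohbisimc \tm''$.

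The main obstacle is this last case, and specifically the interface between the two \emph{congruence} notions at play: the extension of Howe's closure to \emph{contexts} (which relates $\ctx$ and $\ctx'$) and \emph{compatible refinement} on terms. I must verify, by a routine but careful induction on the context, that filling Howe-related pure contexts with identical terms produces Howe-related terms, respecting the inside-out reading of contexts; and I must check that the value substituted for $\vark$ really is a closed value, so that the value-indexed substitutivity lemma applies. Once these interfacing points are settled, the remaining manipulations are mechanical, and the $\tau$ and value cases come essentially for free.
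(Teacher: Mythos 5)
Your proof is correct, but it is packaged differently from the paper's. The paper proves this lemma by a direct induction on the derivation of $\tm \lts\act \tm'$: the $\tau$ and value cases come out exactly as you treat them, while the context-label rules are handled locally by threading the induction hypothesis through the label --- e.g.\ for $\LTScaptl$, from $\ctx \clohbisimc \ctx'$ one gets $\apctx \ctx \tmone \clohbisimc \apctx {\ctx'} \tmone$ by congruence of the label relation and reflexivity, applies the induction hypothesis to the premise $\tmzero \lts{\apctx \ctx \tmone} \tmzero'$, and closes with the same rule. You instead do a case analysis on the label and, in the context case, factor the argument through Lemma~\ref{l:decompose-shift}, obtaining the global shape $\tm = \inctx{\ctx_0}{\shift \vark \tms}$ and reconstructing the matching transition by ``re-running'' the derivation with $\ctx'$ in place of $\ctx$. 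What this buys is a single explicit description of both $\tm'$ and $\tm''$, with the Howe-closure bookkeeping (Lemma~\ref{l:inctx}, compatible refinement, substitutivity from Lemma~\ref{l:properties}) appearing once instead of being distributed over the rules $\LTSshift$, $\LTScaptl$, $\LTScaptr$; these are precisely the ingredients the paper deploys in the $\LTSshift$ case of Lemma~\ref{l:sim-property}. The price is that your ``same derivation shape'' step is itself a small induction (on $\ctx_0$, or equivalently on the original derivation) that you leave implicit; it is sound --- $\LTScaptl$ and $\LTScaptr$ have no side conditions, and the freshness condition of $\LTSshift$ holds because $\ctx'$ is closed (or by $\alpha$-renaming) --- but discharging it formally essentially reproduces the paper's induction. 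So the two proofs rest on the same key lemmas and the same mathematical content, organized around a different decomposition.
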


\begin{proof}
  By induction on $\tm \lts\act \tm'$.
  \qed
\end{proof}

\begin{lemma}
  \label{l:beta}
  If $\lam \varx \tmzero \clohbisimc \lam \varx \tmone$ and $\valzero
  \clohbisimc \valone$ then $\subst \tmzero \varx \valzero \clohbisimc \subst
  \tmone \varx \valone$. 
\end{lemma}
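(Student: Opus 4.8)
The plan is to reduce Lemma~\ref{l:beta} to the substitutivity of Howe's closure (the first item of Lemma~\ref{l:properties}), which substitutes related values for a \emph{free} variable. Since here $\varx$ is bound in $\lam \varx \tmzero$ and $\lam \varx \tmone$, the crux is a \emph{decomposition} step: from $\lam \varx \tmzero \clohbisimc \lam \varx \tmone$ I want to extract $\tmzero \clohbisim \tmone$. Once this is available, I simply apply substitutivity to $\tmzero \clohbisim \tmone$ and $\valzero \clohbisim \valone$ to obtain $\subst \tmzero \varx \valzero \clohbisim \subst \tmone \varx \valone$; both sides are closed because $\lam \varx \tmzero$, $\lam \varx \tmone$, $\valzero$, and $\valone$ are, so this is in fact a $\clohbisimc$ statement and we are done.

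The obstacle in the decomposition is the second defining rule of $\clohbisim$ (right composition with $\open\bisim$): a naive induction on the derivation of $\lam \varx \tmzero \clohbisim \lam \varx \tmone$ breaks in this case, because the intermediate term need not be a $\lambda$-abstraction, so one cannot peel off the outermost operator. To get around this I would first establish a \emph{normal-form} property of Howe's closure, namely that $\tmzero \clohbisim \tmone$ holds iff there exists $\tms$ with $\tmzero \comp\clohbisim \tms$ and $\tms \open\bisim \tmone$; in other words, every use of right composition with $\open\bisim$ can be pushed to the very bottom of the derivation, leaving a single compatible-refinement step on top. The right-to-left direction is immediate from the defining rules (rule for $\comp\clohbisim$ followed by right composition). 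The left-to-right direction is by induction on the derivation, using that $\comp\clohbisim$ is reflexive (because $\clohbisim \supseteq \open\bisim$ is reflexive) to handle the $\open\bisim$ and compatible-refinement cases, and using transitivity of $\open\bisim$ to collapse two consecutive compositions in the right-composition case.

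With the normal form in hand the decomposition is routine. Applying it to $\lam \varx \tmzero \clohbisimc \lam \varx \tmone$ yields some $\tms$ with $\lam \varx \tmzero \comp\clohbisim \tms$ and $\tms \open\bisim \lam \varx \tmone$. The compatible-refinement step forces $\tms = \lam \varx \tms'$ with $\tmzero \clohbisim \tms'$, since $\comp\clohbisim$ relates a $\lambda$-headed term only to another $\lambda$-abstraction. It then remains to read $\tms' \open\bisim \tmone$ off $\lam \varx \tms' \open\bisim \lam \varx \tmone$: for any closing substitution and any closed value $\val$, the bisimulation game on the resulting closed abstractions is forced, since the only transition of a $\lambda$-abstraction is the $\lts\val$ transition of rule $\LTSval$ and abstractions are irreducible, so the two bodies instantiated at $\val$ are bisimilar; as this holds for all such substitutions and all $\val$ (and substitutions map variables to closed values), it is precisely $\tms' \open\bisim \tmone$. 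Finally, right composition of $\tmzero \clohbisim \tms'$ with $\tms' \open\bisim \tmone$ delivers $\tmzero \clohbisim \tmone$, as required.

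The step I expect to be most delicate is the normal-form lemma, since it is what legitimises the inductive peeling of the outermost $\lambda$ in the presence of right composition with $\open\bisim$; the extraction of $\tms' \open\bisim \tmone$ from the related abstractions is comparatively easy, relying only on the determinism of the $\LTSval$ transition and on the fact that abstractions perform no internal action.
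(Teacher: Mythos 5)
Your proof is correct, but it is organised quite differently from the paper's. The paper proves Lemma \ref{l:beta} directly by induction on the derivation of $\lam \varx \tmzero \clohbisimc \lam \varx \tmone$: in the base case $\lam \varx \tmzero \bisim \lam \varx \tmone$ it plays the bisimulation game on the transition $\lam \varx \tmzero \lts\valzero \subst \tmzero \varx \valzero$, using Lemma \ref{l:cloh-act} to switch the label from $\valzero$ to $\valone$ and the determinism of rule $\LTSval$ to identify the matching term; the compatible-refinement case is exactly your final step (substitutivity, Lemma \ref{l:properties}); and the right-composition case is dispatched with ``induction and a similar reasoning''. You instead prove once and for all a factorization (normal-form) lemma for Howe's closure --- every derivation can be rearranged into a single compatible-refinement step followed by a single composition with $\open\bisim$ --- after which the lemma reduces to peeling off the outermost $\lambda$, extracting $\tms' \open\bisim \tmone$ from $\lam \varx {\tms'} \open\bisim \lam \varx \tmone$ by an elementary LTS argument (abstractions have no $\tau$-transitions and a unique $\lts\val$ transition), and one application of substitutivity followed by right composition. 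This factorization is a standard alternative characterization of Howe's closure (in many presentations it is taken as the definition), and your proof of it --- reflexivity of $\comp\clohbisim$ for the first and third rules, transitivity of $\open\bisim$ for the second --- is sound. As for what each approach buys: the paper's route introduces no new structural lemma and reuses Lemma \ref{l:cloh-act}, which is needed anyway for Lemma \ref{l:sim-property}; your route is more robust precisely where the paper is terse, since in the right-composition case $\lam \varx \tmzero \clohbisim \tm_2 \open\bisim \lam \varx \tmone$ the intermediate term $\tm_2$ need not be a $\lambda$-abstraction, so the paper's induction hypothesis does not apply to the subderivation as stated, and repairing this requires collapsing consecutive $\open\bisim$ steps by transitivity --- which is exactly the content of your normal-form lemma. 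The same lemma would also streamline Lemma \ref{l:cloh-val} and the other ``similar reasoning'' steps in the appendix.
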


\begin{proof}
  By induction on $\lam \varx \tmzero \clohbisimc \lam \varx \tmone$.

  Suppose $\lam \varx \tmzero \bisim \lam \varx \tmone$. We have $\lam \varx
  \tmzero \lts\valzero \subst \tmzero \varx \valzero$, so by Lemma
  \ref{l:cloh-act}, there exists $\tmzero'$ such that $\lam \varx \tmzero
  \lts\valone \tmzero'$ and $\subst \tmzero \varx \valzero \clohbisim
  \tmzero'$. By bisimilarity, there exists $\tmone'$ such that $\lam \varx
  \tmone \lts\valone \tmone'$ and $\tmzero' \bisim \tmone'$. The only possible
  outcome is $\tmone' = \subst \tmone \varx \valone$, therefore we have $\subst
  \tmzero \varx \valzero \clohbisim \bisim \subst \tmone \varx \valone$. Because
  the considered terms are closed, we have the required result.

  Suppose $\lam \varx \tmzero \clohbisim \bisim \lam \varx \tmone$. The result
  can easily be proved using induction and a similar reasoning as in the first
  case.

  Suppose $\lam \varx \tmzero \comp\clohbisim \lam \varx \tmone$. Then we have
  $\tmzero \clohbisim \tmone$; therefore, by Lemma \ref{l:properties}, we have
  $\subst \tmzero \varx \valzero  \clohbisim \subst \tmone \varx \valone$, as
  required. 
  \qed
\end{proof}

\begin{lemma}
  \label{l:cloh-val}
  If $\valzero \clohbisimc \tmone$, then there exists $\valone$ such that
  $\tmone \ltswk\tau \valone$ and $\valzero \clohbisimc \valone$.
\end{lemma}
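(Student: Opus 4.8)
The plan is to prove the statement by induction on the derivation of $\valzero \clohbisimc \tmone$, using that $\valzero$, being a value, is a $\lambda$-abstraction $\lam \varx \tm$. The whole argument rests on one auxiliary observation: if $\val$ is a value, $\tm$ is closed, and $\val \bisim \tm$, then $\tm \ltswk\tau \val'$ for some value $\val'$ with $\val \bisim \val'$. To see this, recall that the only transitions a value can perform are value-labelled ones; simulating one of them through $\val \bisim \tm$ forces $\tm$ to reach, by internal steps, a term that offers the same value-labelled transition. By the third item of Lemma~\ref{l:lts-redcbv} the source of such a transition is a value, so $\tm \cloltstau \val'$ for some value $\val'$; and $\tm \cloltstau \val'$ yields $\tm \bisim \val'$ by Lemma~\ref{l:bisim-eval}, hence $\val \bisim \val'$ by transitivity. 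Since $\ltswk\tau$ is exactly $\cloltstau$, this delivers the shape demanded by the statement.

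With this observation, the base case is immediate. If $\valzero \clohbisimc \tmone$ comes from $\valzero \open\bisim \tmone$, then (both terms being closed) $\valzero \bisim \tmone$, and the observation yields a value $\valone$ with $\tmone \ltswk\tau \valone$ and $\valzero \bisim \valone$; since $\bisim \,\subseteq\, \clohbisimc$ by construction, this $\valone$ is the one we need. The compatible-refinement case is equally direct: among the rules defining $\comp\clohbisim$, only the one for $\lambda$-abstractions can have $\valzero = \lam \varx \tm$ on the left, forcing $\tmone$ to be a $\lambda$-abstraction as well, hence already a value; we take $\valone = \tmone$ with the empty reduction sequence, and $\valzero \clohbisimc \tmone$ holds by the third defining rule of $\clohbisim$.

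The remaining case, where $\valzero \clohbisimc \tmone$ is derived from $\valzero \clohbisimc \tm' \open\bisim \tmone$ for some intermediate closed $\tm'$, is the one requiring care and is the main obstacle. First I would apply the induction hypothesis to $\valzero \clohbisimc \tm'$, obtaining a value $\val'$ with $\tm' \ltswk\tau \val'$ and $\valzero \clohbisimc \val'$. Since $\tm' \bisim \tmone$ (closed terms) and $\tm' \cloltstau \val'$, Lemma~\ref{l:bisim-eval} gives $\val' \bisim \tmone$ with $\val'$ a value, so the auxiliary observation applies and produces a value $\valone$ with $\tmone \ltswk\tau \valone$ and $\val' \bisim \valone$. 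It remains to transport the Howe relation across this last reduction: from $\valzero \clohbisimc \val'$ and $\val' \bisim \valone$, the closure of $\clohbisim$ under right composition with $\open\bisim$ (the second defining rule) gives $\valzero \clohbisimc \valone$, as required. The delicate point is precisely this bookkeeping — each time $\tmone$ (or $\tm'$) is driven to a value by internal steps, one must re-establish the Howe relation to $\valzero$, which is secured by combining the right-composition closure of $\clohbisim$ with the fact (Lemma~\ref{l:bisim-eval}) that internal reduction preserves $\bisim$.
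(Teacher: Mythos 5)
Your overall strategy --- induction on the derivation of $\valzero \clohbisimc \tmone$, with the three cases given by the defining rules of Howe's closure --- is the same as the paper's, and two of your three cases are sound. Your base case is in fact simpler than the paper's: instead of playing the bisimulation game at every argument value and reassembling the two abstractions via compatible refinement, you use transitivity of $\bisim$ together with Lemma~\ref{l:bisim-eval} and the inclusion $\bisim \,\subseteq\, \clohbisimc$; this is correct (and it sidesteps the implicit appeal to determinism of reduction that the paper's version needs in order to obtain a single $\valone$ working for all argument values). The compatible-refinement case is also fine.

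The gap is in the right-composition case, precisely at the words ``for some intermediate \emph{closed} $\tm'$''. Nothing in the definition of Howe's closure guarantees that the middle term of the second rule is closed: the rule reads $\valzero \clohbisim\open\bisim \tmone$, and the witness $\tm'$ with $\valzero \clohbisim \tm' \open\bisim \tmone$ may be open even when $\valzero$ and $\tmone$ are closed (derivations of $\clohbisim$ pass through open terms as a matter of course). If $\tm'$ is open, your induction hypothesis --- stated for $\clohbisimc$, \ie for closed terms --- does not apply to $\valzero \clohbisim \tm'$, and the subsequent appeals to $\bisim$ (a relation on closed terms) and to Lemma~\ref{l:bisim-eval} are not even well-formed. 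The repair is exactly the paper's Lemma~\ref{l:same-size}: there is a substitution $\subs$ closing $\tm'$ such that $\valzero\subs \clohbisimc \tm'\subs$ with a derivation of the \emph{same size} as that of $\valzero \clohbisim \tm'$; here $\valzero\subs = \valzero$ since $\valzero$ is closed, and $\tm'\subs \bisim \tmone$ follows from $\tm' \open\bisim \tmone$ because $\tmone$ is closed. Since this substitution preserves only the size of the derivation and not the derivation itself, you must also switch from structural induction on the derivation to induction on its size, as the paper does in the proof of Lemma~\ref{l:sim-property}. With these two adjustments, the bookkeeping you describe in this case (induction hypothesis, then your auxiliary observation, then the right-composition rule to recover $\valzero \clohbisimc \valone$) goes through verbatim.
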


\begin{proof}
  By induction on $\valzero \clohbisimc \tmone$.

  Suppose $\valzero \bisim \tmone$. Let $\valzero = \lam \varx
  \tmzero$; for all $\val$, we have $\valzero \lts\val \subst \tmzero
  \varx \val$. By bisimilarity, there exists $\tmone^\val$ such that
  $\valzero \ltswk\val \tmone^\val$ and $\subst \tmzero \varx \val
  \bisim \tmone^\val$. Therefore there exists $\valone=\lam \varx
         {\tmone'}$ such that $\tmone \ltswk\tau \valone \lts\val
         \subst {\tmone'} \varx \val$ and $\subst \tmzero \varx \val
         \bisim \subst {\tmone'} \varx \val$. Because this holds for
         all $\val$, we have $\tmzero \open\bisim \tmone'$, therefore
         we have $\tmzero \clohbisim \tmone'$. From this observation,
         we deduce $\lam \varx \tmzero \clohbisimc \lam \varx
         {\tmone'}$, as wished.

  The case $\valzero \clohbisim \bisim \tmone$ relies on induction and a similar
  reasoning as in the first case. Suppose $\valzero \comp\clohbisim \tmone$. Let
  $\valzero = \lam \varx \tmzero$; we have $\tmone = \lam \varx {\tmone'}$ with
  $\tmzero \clohbisim \tmone'$. Because $\clohbisim$ is a congruence, we have
  $\lam \varx \tmzero \clohbisim \lam \varx {\tmone'}$, as wished.
  \qed
\end{proof}

\begin{lemma}
  \label{l:inctx}
  If $\ctx_0 \clohbisim \ctx_1$ and $\tmzero \clohbisim \tmone$ then $\inctx
  {\ctx_0} \tmzero \clohbisim \inctx {\ctx_1} \tmone$.
\end{lemma}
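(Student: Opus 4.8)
The plan is to proceed by induction on the derivation of $\ctx_0 \clohbisim \ctx_1$. Since $\clohbisim$ on pure contexts is defined as the smallest congruence extending the term relation $\clohbisim$ with the base pair $\mtctx \clohbisim \mtctx$, and since pure contexts are syntax-directed, this is equivalent to a structural induction on $\ctx_0$ (whose shape then forces the shape of $\ctx_1$). Throughout, I keep the hypothesis $\tmzero \clohbisim \tmone$ fixed and, at each layer, feed it together with the context material peeled off at that layer into the compatible-refinement rule for $\clohbisim$.

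In the base case $\ctx_0 = \ctx_1 = \mtctx$, filling gives $\inctx {\mtctx} \tmzero = \tmzero$ and $\inctx {\mtctx} \tmone = \tmone$, so the goal is exactly the hypothesis $\tmzero \clohbisim \tmone$. For the inductive step, first consider $\ctx_0 = \apctx {\ctx_0'} {\tms_0}$ and $\ctx_1 = \apctx {\ctx_1'} {\tms_1}$, derived from $\ctx_0' \clohbisim \ctx_1'$ and $\tms_0 \clohbisim \tms_1$. Combining $\tms_0 \clohbisim \tms_1$ with $\tmzero \clohbisim \tmone$, the compatible-refinement rule for application gives $\app \tmzero {\tms_0} \comp\clohbisim \app \tmone {\tms_1}$, whence $\app \tmzero {\tms_0} \clohbisim \app \tmone {\tms_1}$ by the third defining rule of $\clohbisim$. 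Applying the induction hypothesis to the strictly smaller pair $\ctx_0' \clohbisim \ctx_1'$ with these two related terms yields $\inctx {\ctx_0'} {\app \tmzero {\tms_0}} \clohbisim \inctx {\ctx_1'} {\app \tmone {\tms_1}}$, which, by the inside-out reading of contexts, is precisely $\inctx {\ctx_0} \tmzero \clohbisim \inctx {\ctx_1} \tmone$. The case $\ctx_0 = \vctx {\valzero} {\ctx_0'}$, $\ctx_1 = \vctx {\valone} {\ctx_1'}$ is symmetric: from $\valzero \clohbisim \valone$ and $\tmzero \clohbisim \tmone$ I build $\app \valzero \tmzero \clohbisim \app \valone \tmone$ by compatible refinement and then invoke the induction hypothesis on $\ctx_0'$.

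Because the argument is a plain structural induction whose inductive cases each reduce to a single application of the congruence rule, there is no genuine obstacle. The only point requiring care is the inside-out convention: filling $\apctx {\ctx_0'} {\tms_0}$ (resp. $\vctx {\valzero} {\ctx_0'}$) with $\tmzero$ amounts to first forming the application node $\app \tmzero {\tms_0}$ (resp. $\app \valzero \tmzero$) and only then filling the remaining context $\ctx_0'$. Getting this bookkeeping right is what ensures that the compatible-refinement step is applied to the freshly created application while the induction is driven by the strictly smaller sub-context $\ctx_0'$.
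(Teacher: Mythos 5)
Your proof is correct and takes essentially the same route as the paper: the paper's entire argument for this lemma is ``By induction on $\ctx_0 \clohbisim \ctx_1$,'' and your write-up is precisely that induction spelled out in detail. The base case $\mtctx$, the two inductive cases discharged by one application of the compatible-refinement rule (plus the third defining rule of $\clohbisim$) followed by the induction hypothesis on the smaller sub-context, and the careful handling of the inside-out filling convention are all exactly what the paper's one-line proof intends.
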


\begin{proof}
  By induction on $\ctx_0 \clohbisim \ctx_1$. \qed
\end{proof}

\begin{lemma}[Lemma \ref{l:sim-property-main} in the article]
  \label{l:sim-property}
  If $\tmzero \clohbisimc \tmone$ and $\tmzero \lts\act \tmzero'$, then for all
  $\act \clohbisimc \act'$, there exists $\tmone'$ such that $\tmone
  \ltswk{\act'} \tmone'$ and $\tmzero' \clohbisimc \tmone'$.
\end{lemma}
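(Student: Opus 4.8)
I would prove Lemma~\ref{l:sim-property} by induction on the derivation of $\tmzero \clohbisimc \tmone$, following the three rules defining Howe's closure, and for each case by a sub-analysis on the transition $\tmzero \lts\act \tmzero'$.  The base case is $\tmzero \open\bisim \tmone$ (closed here, so $\tmzero \bisim \tmone$): by Lemma~\ref{l:cloh-act} I first move the label, obtaining $\tmzero \lts{\act'} \tmzero''$ with $\tmzero' \clohbisimc \tmzero''$; since $\bisim$ is itself a bisimulation there is $\tmone'$ with $\tmone \ltswk{\act'} \tmone'$ and $\tmzero'' \bisim \tmone'$, and composing gives $\tmzero' \clohbisim \bisim \tmone'$, hence $\tmzero' \clohbisimc \tmone'$.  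The right-composition case $\tmzero \clohbisim \open\bisim \tmone$ factors through an intermediate term and combines the induction hypothesis with the bisimilarity of $\bisim$, again closing up under right composition with $\bisim$.

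\textbf{The compatible-refinement case.}  The substance of the proof is the third rule, $\tmzero \comp\clohbisim \tmone$, where $\tmzero$ and $\tmone$ share an outermost operator and their immediate subterms are related by $\clohbisim$.  Here I would case-split on the last rule used to derive $\tmzero \lts\act \tmzero'$.  For the $\tau$-rules, $\LTSbeta$ and $\LTSval$ reduce to Lemma~\ref{l:beta} (substitutivity of $\clohbisimc$ along a $\beta$-step), while the compositional rules $\LTScompl$, $\LTScompr$, $\LTScompreset$ follow by applying the induction hypothesis to the active subterm and then reassembling with Lemma~\ref{l:inctx}.  The value-transition rule $\LTSval$ itself needs Lemma~\ref{l:cloh-val} to turn a Howe-related term into a value that can answer the $\lts\valzero$ challenge.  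For the capture transitions $\LTScaptl$ and $\LTScaptr$, I would push the surrounding frame into the context-label (using that $\clohbisim$ is extended congruently to labels) and invoke the induction hypothesis on the subterm carrying the shift, so that the structural decomposition of Lemma~\ref{l:decompose-shift} lines up on both sides.

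\textbf{The main obstacle.}  The delicate case is $\LTScaptreset$ combined with $\LTSshift$: a term $\reset\tm$ performs a $\tau$-step by letting $\tm \lts\mtctx \tm'$ capture the empty context.  When $\tm$ is only Howe-related to the matching subterm of $\tmone$, I must produce the same capture on the other side and show the two results—each of the form $\reset{\subst \tms \vark {\lam \varx {\reset{\inctx \ctx \varx}}}}$—remain related by $\clohbisimc$.  This forces me to track how $\clohbisim$ interacts with the substitution of the captured continuation $\lam \varx {\reset{\inctx \ctx \varx}}$ for $\vark$, which is exactly why the label extension of $\clohbisim$ and the substitutivity property (Lemma~\ref{l:properties}, via Lemma~\ref{l:beta}) are needed together.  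I expect the bookkeeping of the context appearing in both the label and the substituted continuation, kept coherent across the two derivations, to be the only real difficulty; everything else is an assembly of the auxiliary lemmas already in place.
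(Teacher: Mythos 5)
Your overall strategy is the paper's: induction over Howe's closure, the same three cases, and the same toolkit (Lemma~\ref{l:cloh-act} in the base case; Lemmas~\ref{l:cloh-val}, \ref{l:beta}, \ref{l:inctx} and \ref{l:properties} in the compatible-refinement case). Indeed, your treatment of the capture rules $\LTScaptl$, $\LTScaptr$, $\LTScaptreset$ is more explicit than the paper's, which spells out only the top-level shift case and dispatches captures under application and reset rather tersely. But there is one genuine gap, and it lies in the induction set-up itself.

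You propose ``induction on the derivation of $\tmzero \clohbisimc \tmone$, following the three rules defining Howe's closure'', i.e.\ structural induction on the derivation. This breaks in the right-composition case. There the premise is $\tmzero \clohbisim \tm_2 \open\bisim \tmone$ for some intermediate term $\tm_2$, and nothing forces $\tm_2$ to be closed: a closed term can perfectly well be related by $\clohbisim$ or $\open\bisim$ to an open one (take $\tmzero = \Omega$ and $\tm_2 = \app {\lamp y \Omega}{\lamp z \varx}$, say). Your induction hypothesis, however, speaks only of $\clohbisimc$, i.e.\ of pairs of \emph{closed} terms, so it cannot be applied to $\tmzero \clohbisim \tm_2$. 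To repair this you must close $\tm_2$ with a substitution $\subs$; but then the derivation of $\tmzero \clohbisim \tm_2\subs$ is no longer a subderivation of the one you started from, and structural induction tells you nothing about it. This is precisely why the paper proves Lemma~\ref{l:same-size} (a closing substitution can be chosen so as to preserve the \emph{size} of the Howe derivation) and runs the induction on that size rather than on the shape of the derivation: after closing, the size is unchanged, hence still smaller than that of the original derivation, and the induction hypothesis applies to $\tmzero \clohbisimc \tm_2\subs \bisim \tmone$, after which right composition with $\open\bisim$ concludes as you describe. Without this device---or an equivalent one, such as restating the simulation property for open terms quantified over all closing substitutions---your right-composition case does not go through. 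The rest of your plan (base case, and the case analysis on the outermost operator with the substitutivity and label-congruence lemmas) is sound and matches the paper.
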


\begin{proof}
  By induction on the size of the derivation of $\tmzero \clohbisimc \tmone$.

  If $\tmzero \open\bisim \tmone$, then we have $\tmzero \bisim \tmone$ because
  we work with closed terms. By Lemma \ref{l:cloh-act}, there exists $\tmzero''$
  such that $\tmzero \lts{\act'} \tmzero''$ and $\tmzero' \clohbisimc
  \tmzero''$. By bisimilarity, there exists $\tmone'$ such that $\tmone
  \ltswk{\act'} \tmone'$ and $\tmzero'' \bisim \tmone'$ (\ie $\tmzero'' \open\bisim
  \tmone'$ because the terms are closed). Therefore we have $\tmzero'
  \clohbisimc \open\bisim \tmone'$, \ie $\tmzero' \clohbisimc \tmone'$, as
  required.\\

  If $\tmzero \clohbisim \tm_2 \open\bisim \tmone$, then by Lemma
  \ref{l:same-size}, there exists $\subs$ such that $\tmzero \subs \clohbisimc
  \tm_2 \subs$ and the size of the derivation of $\tmzero \subs \clohbisimc
  \tm_2 \subs$ is the same as for $\tmzero \clohbisimc \tm_2$. Because $\tmzero$
  and $\tmone$ are closed, and by definition of $\open \bisim$, we have in fact
  $\tmzero \clohbisimc \tm_2 \subs \bisim \tmone$. By induction hypothesis,
  there exists $\tm'_2$ such that $\tm_2 \subs \ltswk{\act'} \tm'_2$ and
  $\tmzero' \clohbisimc \tm_2'$. By bisimilarity, there exists $\tmone'$ such
  that $\tmone \ltswk{\act'} \tmone'$ and $\tm_2' \bisim \tmone'$ (\ie $\tm_2'
  \open\bisim \tmone'$ because the terms are closed). Therefore we have
  $\tmzero' \clohbisimc \open\bisim \tmone'$, \ie $\tmzero'
  \clohbisimc \tmone'$, as required.\\

  If $\tmzero \comp\clohbisim \tmone$, then we distinguish several cases,
  depending on the outermost operator.
  
  Suppose $\tmzero = \lam \varx {\tms_0}$ and $\tmone = \lam \varx {\tms_1}$
  with $\tms_0 \clohbisim \tms_1$. The only possible transition is $\tmzero
  \lts\val \subst {\tms_0} \varx \val$. We have $\tmone \lts{\val'} \subst
  {\tms_1} \varx {\val'}$. By Lemma \ref{l:properties}, we have $\subst {\tms_0}
  \varx \val \clohbisim \subst {\tms_1} \varx {\val'}$, and because $\varx$ is
  the only free variable of $\tms_0$ and $\tms_1$, we have $\subst {\tms_0}
  \varx \val \clohbisimc \subst {\tms_1} \varx {\val'}$, as required.

  Suppose $\tmzero = \shift \vark {\tms_0}$ and $\tmone = \shift \vark {\tms_1}$
  with $\tms_0 \clohbisim \tms_1$. The only possible transition is $\tmzero
  \lts\ctx \reset {\subst {\tms_0} \vark {\lam \varx {\reset {\inctx \ctx
          \varx}}}}$. We have $\tmone \lts{\ctx'} \reset {\subst {\tms_1} \vark
    {\lam \varx {\reset {\inctx {\ctx'} \varx}}}}$. Because $\clohbisim$ is a
  congruence and by Lemma \ref{l:inctx}, we have $\lam \varx {\reset{\inctx \ctx
      \varx}} \clohbisimc \lam \varx {\reset{\inctx {\ctx'} \varx}}$. Therefore,
  by Lemma \ref{l:properties}, we have $\reset {\subst {\tms_0} \vark {\lam
      \varx {\reset {\inctx \ctx \varx}}}} \clohbisim \subst {\tms_1} \vark
  {\lam \varx {\reset {\inctx {\ctx'} \varx}}}$, and because $\vark$ is the only
  free variable of $\tms_0$ and $\tms_1$, we have $\reset{\subst {\tms_0} \vark
    {\lam \varx {\reset {\inctx \ctx \varx}}}} \clohbisimc \reset{\subst
    {\tms_1} \vark {\lam \varx {\reset {\inctx {\ctx'} \varx}}}}$, as required.

  Suppose $\tmzero = \app {\lamp \varx {\tms_0}} \valzero$ and $\tmone = \app
  {\tmone^1}{\tmone^2}$ with $\lam \varx {\tms_0} \clohbisimc \tmone^1$ and
  $\valzero \clohbisimc \tmone^2$. The only possible transition is $\tmzero
  \ltstau \subst {\tms_0} \varx \valzero$. By Lemma \ref{l:cloh-val}, there
  exists $\lam \varx {\tms_1}$, $\valone$ such that $\tmone^1 \ltswk\tau\lam
  \varx {\tms_1}$, $\tmone^2 \ltswk\tau \valone$, $\lam \varx {\tms_0}
  \clohbisimc \lam \varx {\tms_1}$, and $\valzero \clohbisimc \valone$. From
  $\tmone^1 \ltswk\tau\lam \varx {\tms_1}$ and $\tmone^2 \ltswk\tau \valone$, we
  can deduce $\tmone \ltswk\tau \subst {\tms_1} \varx \valone$, and from $\lam
  \varx {\tms_0} \clohbisimc \lam \varx {\tms_1}$ and $\valzero \clohbisimc
  \valone$, we have $\subst {\tms_0} \varx \valzero \clohbisim \subst {\tms_1}
  \varx \valone$ by Lemma \ref{l:beta}. Hence, we have the required result.

  Suppose $\tmzero = \app \valzero {\tms_0}$ and $\tmone = \app
  {\tmone^1}{\tms_1}$ with $\valzero \clohbisimc \tmone^1$ and $\tms_0
  \clohbisimc \tms_1$. The only possible transition is $\tmzero
  \ltstau \app \valzero {\tms'_0}$, where $\tms_0 \lts\tau
  \tms'_0$. By Lemma \ref{l:cloh-val}, there exists $\valone$ such
  that $\tmone^1 \ltswk\tau \valone$ and $\valzero \clohbisimc
  \valone$. By induction hypothesis, there exists $\tms_1'$ such that
  $\tms_1 \ltswk\tau \tms'_1$ and $\tms_1 \clohbisimc
  \tms'_1$. Therefore we have $\tmone \ltswk\tau \app \valone
       {\tms'_1}$, and because $\clohbisimc$ is a congruence, we have
       $\app \valzero {\tms'_0} \clohbisimc \app \valone {\tms'_1}$,
       hence the result holds.

  Suppose $\tmzero = \app {\tmzero^1}{\tmzero^2}$ and $\tmone = \app
  {\tmone^1}{\tmone^2}$ with $\tmzero^1 \clohbisimc \tmone^1$ and $\tmzero^2
  \clohbisimc \tmone^2$. The only possible transition is $\tmzero \ltstau \app
  {{\tmzero^1}'}{\tmzero^2}$, where $\tmzero^1 \lts\tau {\tmzero^1}'$. By
  induction, there exists ${\tmone^1}'$ such that $\tmone^1 \ltswk\tau
  {\tmone^1}'$ and ${\tmzero^1}' \clohbisimc {\tmone^1}'$. Therefore we have
  $\tmone \ltswk\tau \app {{\tmone^1}'}{\tmone^2}$, and because $\clohbisimc$ is
  a congruence, we have $\app {{\tmzero^1}'}{\tmzero^2} \clohbisimc \app
  {{\tmone^1}'}{\tmone^2}$, hence the result holds. The case $\tmzero = \reset
  {\tms_0}$, where $\tms_0$ is not a value, is treated similarly.

  Suppose $\tmzero = \reset \valzero$ and $\tmone = \reset {\tms_1}$ with
  $\valzero \clohbisimc {\tms_1}$. The only possible transition is $\tmzero
  \lts\tau \valzero$. By Lemma \ref{l:cloh-val}, there exists $\valone$ such
  that $\tms_1 \ltswk\tau \valone$ and $\valzero \clohbisimc \valone$. We have
  $\tmone \ltswk\tau \valone$, with $\valzero \clohbisimc \valone$, hence the
  result holds.
  \qed
\end{proof}

\begin{lemma}
  The relation $\rtclo{(\clohbisimc)}$ is a bisimulation.
\end{lemma}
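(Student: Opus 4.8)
The plan is to use symmetry to reduce the bisimulation goal to a single simulation condition, and then to lift the simulation-like property of Lemma~\ref{l:sim-property}, which only handles a single strong step on the left, to the weak transitions that arise along a $\rtclo{(\clohbisimc)}$-chain. By the second item of Lemma~\ref{l:properties}, $\rtclo{(\clohbisimc)}$ is symmetric, so it suffices to prove that it is a simulation: whenever $\tmzero \rtclo{(\clohbisimc)} \tmone$ and $\tmzero \lts\act \tmzero'$, I must produce $\tmone'$ with $\tmone \ltswk\act \tmone'$ and $\tmzero' \rtclo{(\clohbisimc)} \tmone'$.

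The first step is an auxiliary \emph{weak} version of Lemma~\ref{l:sim-property} for a single link: if $\tmzero \clohbisimc \tmone$ and $\tmzero \ltswk\act \tmzero'$, then there is $\tmone'$ with $\tmone \ltswk\act \tmone'$ and $\tmzero' \clohbisimc \tmone'$. I would prove this by induction on the number of $\ltstau$-steps in the weak transition $\tmzero \ltswk\act \tmzero'$, applying Lemma~\ref{l:sim-property} to each individual transition with the label choice $\act' = \act$. This choice is legitimate because $\clohbisimc$ is reflexive on labels: $\tau \clohbisim \tau$ holds by definition, and reflexivity of $\clohbisim$ on terms and on pure contexts gives $\val \clohbisim \val$ and $\ctx \clohbisim \ctx$. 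The matching weak transitions produced at each step are then recombined using that $\cloltstau$ is transitive and that $\cloltstau \ltswk\act \,=\, \ltswk\act$, both of which are immediate from the definition of the weak delay transition.

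With this weak single-link property in hand, the simulation condition follows by induction on the length of a chain $\tmzero = \tm_0 \clohbisimc \tm_1 \clohbisimc \cdots \clohbisimc \tm_n = \tmone$ witnessing $\tmzero \rtclo{(\clohbisimc)} \tmone$. Starting from $\tmzero \lts\act \tmzero'$, hence $\tm_0 \ltswk\act \tms_0$ with $\tms_0 = \tmzero'$, I propagate the transition one link at a time: the weak single-link property turns $\tm_i \ltswk\act \tms_i$ together with $\tm_i \clohbisimc \tm_{i+1}$ into $\tm_{i+1} \ltswk\act \tms_{i+1}$ with $\tms_i \clohbisimc \tms_{i+1}$. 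After $n$ links I obtain $\tmone = \tm_n \ltswk\act \tms_n$ with $\tmzero' = \tms_0 \clohbisimc \tms_1 \clohbisimc \cdots \clohbisimc \tms_n$, \ie $\tmzero' \rtclo{(\clohbisimc)} \tms_n$; taking $\tmone' = \tms_n$ closes the case.

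The delicate point throughout is the weak-transition bookkeeping. Since Lemma~\ref{l:sim-property} only matches a single \emph{strong} transition from the left by a \emph{weak} transition on the right, the internal $\ltstau$-steps that accumulate once we move to weak left transitions and to a multi-link chain must be threaded through carefully; the crucial observations are that the label is preserved verbatim at each link (via reflexivity of $\clohbisimc$ on labels) and that delay transitions compose so that the accumulated internal steps never become stranded after a visible action. No new reasoning about the transition rules themselves is required: all the work is carried by Lemmas~\ref{l:sim-property} and~\ref{l:properties}.
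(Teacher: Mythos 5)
Your proposal is correct and follows essentially the same route as the paper: reduce to a simulation proof via symmetry (Lemma~\ref{l:properties}), then induct along the $\clohbisimc$-chain, matching each link with Lemma~\ref{l:sim-property} instantiated at the reflexive label pair $\act \clohbisimc \act$. The only difference is presentational: you isolate as an explicit auxiliary lemma the weak single-link property (lifting Lemma~\ref{l:sim-property} from a strong to a weak transition on the left, by induction on the internal steps), which is exactly the content the paper compresses into ``the case $1 < i \leq k$ is easy by induction.''
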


\begin{proof}
  We know that $\rtclo{(\clohbisimc)}$ is symmetric by Lemma \ref{l:properties},
  so it is enough to prove that $\rtclo{(\clohbisimc)}$ is a simulation. Let
  $\tmzero \rtclo{(\clohbisimc)} \tmone$; there exists an integer $k$ such that
  $\tmzero \mathrel{(\clohbisimc)^k} \tmone$. Let $(\tmzero^i)_{i \in \{1 \ldots k\}}$ be
  the terms such that $\tmzero \clohbisimc \tmzero^1 \clohbisimc \tmzero^2
  \ldots \tmzero^{k-1} \clohbisimc \tmzero^k=\tmone$. Let $\tmzero \lts\act
  \tmzero'$. We prove by induction on $i \in 1 \ldots k$ that there exists
  ${\tmzero^i}'$ such that $\tmzero^i \ltswk\act {\tmzero^i}'$ and $\tmzero'
  \mathrel{(\clohbisimc)^i} {\tmzero^i}'$. Suppose $i=1$. We have $\act \clohbisimc \act$
  so by Lemma \ref{l:sim-property}, there exists ${\tmzero^1}'$ such that
  $\tmzero^1 \ltswk\act {\tmzero^1}'$ and $\tmzero' \clohbisimc {\tmzero^1}'$,
  as wished. The case $1 < i \leq k$ is easy by induction. In particular, for
  $i=k$, we have $\tmone = \tmzero^k \ltswk\act {\tmzero^k}'$ and $\tmzero'
  (\clohbisimc)^k {\tmzero^k}'$. We have the required result because
  $(\clohbisimc)^k \,\subseteq\, \rtclo{(\clohbisimc)}$.
  \qed
\end{proof}

\begin{theorem}[Theorem \ref{t:soundness-main} in the article]
  \label{t:soundness}
  We have $\bisim \,\subseteq\, \ctxequiv$.
\end{theorem}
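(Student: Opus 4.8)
The plan is to combine the two facts already at our disposal: that $\bisim$ is a congruence (the theorem just established) and that the labelled transition system faithfully reflects both reduction and the two classes of observable normal form (Lemma~\ref{l:lts-redcbv-main}). First I would record the congruence consequence: if $\tmzero \bisim \tmone$ and $\cctx$ is an arbitrary closed context, then $\inctx \cctx \tmzero \bisim \inctx \cctx \tmone$, with both sides closed. It then suffices to show that bisimilar closed terms agree on the two observations appearing in the definition of $\ctxequiv$, applied to these two plugged terms.

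Concretely, I would isolate the following key lemma: for closed terms $\tm \bisim \tms$, (i) $\tm$ evaluates to a value iff $\tms$ does, and (ii) $\tm$ evaluates to a stuck term iff $\tms$ does. Since $\bisim$ is symmetric, it suffices to prove each left-to-right implication, and since $\bisim$ coincides with big-step bisimilarity (Lemma~\ref{l:apbisim-bisim}) a weak transition from $\tm$ is matched by a weak transition with the same label from $\tms$. For (i): if $\tm \evalcbv \val$, then $\ltstau \,=\, \redcbv$ (Lemma~\ref{l:lts-redcbv-main}) gives $\tm \cloltstau \val$; as $\val$ is a value it can fire rule $\LTSval$, so $\tm \ltswk\valone \tm''$ for some value $\valone$ and some $\tm''$. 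Matching this from $\tms$ yields $\tms \ltswk\valone \tms''$, that is $\tms \cloltstau \tms'$ with $\tms' \lts\valone \tms''$; by Lemma~\ref{l:lts-redcbv-main} the source $\tms'$ of a $\lts\valone$ transition must be a value, hence a $\redcbv$-normal form, so $\tms \evalcbv \tms'$. For (ii): if $\tm \evalcbv \tm'$ with $\tm'$ stuck, then by Lemma~\ref{l:stuck} $\tm'$ has the shape $\inctx \ctx {\shift \vark \tms_0}$ and therefore admits a $\lts\mtctx$ transition, giving $\tm \ltswk\mtctx \tm''$; matching from $\tms$ gives $\tms \cloltstau \tms'$ with $\tms' \lts\mtctx \tms''$, and again by Lemma~\ref{l:lts-redcbv-main} the source $\tms'$ of a context-capture transition is stuck, hence a normal form, so $\tms \evalcbv \tms'$.

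Combining the pieces, I would instantiate the key lemma at $\inctx \cctx \tmzero \bisim \inctx \cctx \tmone$ (bisimilar by congruence) to obtain exactly the two bullet conditions of $\ctxequiv$, the ``conversely'' direction being supplied by the symmetry of $\bisim$. As $\cctx$ ranges over all closed contexts, this yields $\tmzero \ctxequiv \tmone$, and hence $\bisim \,\subseteq\, \ctxequiv$.

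I expect the only real work to lie in the key lemma, and precisely in the fact that the labels $\lts\valone$ and $\lts\mtctx$ cleanly separate the two kinds of observable normal form: that only values emit $\lts\valone$ transitions and only stuck terms emit $\lts\mtctx$ (context-capture) transitions. Both are exactly the content of Lemma~\ref{l:lts-redcbv-main}, so once that correspondence is in hand the argument reduces to the routine observation that reaching a value (respectively a stuck term) along $\cloltstau$ coincides with $\evalcbv$-terminating at one, since values and stuck terms are precisely the $\redcbv$-normal forms.
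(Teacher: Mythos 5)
Your proposal is correct and follows essentially the same route as the paper's proof: use congruence to get $\inctx \cctx \tmzero \bisim \inctx \cctx \tmone$, then use the bisimulation game together with Lemma~\ref{l:lts-redcbv-main} to transfer the two observations (evaluation to a value, matched via $\lts\val$ transitions, and evaluation to a stuck term, matched via context-capture transitions). Your write-up merely makes explicit some steps the paper leaves implicit, such as isolating the observation-transfer argument as a separate lemma and noting that sources of $\lts\val$ and $\lts\ctx$ transitions are values and stuck terms respectively.
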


\begin{proof}
  Let $\tmzero \bisim \tmone$, and $\cctx$ a context. Because $\bisim$ is a
  congruence, we have $\inctx \cctx \tmzero \bisim \inctx \cctx \tmone$. If
  $\inctx \cctx \tmzero \evalcbv \valzero$, then we have $\inctx \cctx \tmzero
  \ltswk\tau \valzero \lts\val$; by bisimilarity, there exists $\valone$ such
  that $\inctx \cctx \tmone \ltswk\tau \valone \lts\val$, therefore we have
  $\inctx \cctx \tmone \evalcbv \valone$. The reasoning is the same if $\inctx
  \cctx \tmzero \evalcbv \tmzero'$, where $\tmzero'$ is stuck, and for the
  evaluations of $\inctx \cctx \tmone$.
  \qed
\end{proof}

\subsection{Completeness}
\label{a:completeness}

\begin{theorem}[Theorem \ref{t:completeness-main} in the article]
  \label{t:completeness}
  We have $\rctxequiv \,\subseteq\, \bisim$.
\end{theorem}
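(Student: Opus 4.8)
The plan is to prove that $\rctxequiv$ is a big-step bisimulation. Since $\rctxequiv$ is symmetric by construction and $\apbisim$ is the largest big-step bisimulation with $\apbisim \,=\, \bisim$ (Lemma \ref{l:apbisim-bisim}), this immediately yields $\rctxequiv \,\subseteq\, \bisim$. By symmetry of $\rctxequiv$ it is enough to establish the simulation clause: assuming $\tmzero \rctxequiv \tmone$ and $\tmzero \ltswk\act \tmzero'$ with $\act \neq \tau$, I must build $\tmone'$ with $\tmone \ltswk\act \tmone'$ and $\tmzero' \rctxequiv \tmone'$. Inspecting the LTS, a visible label is either a value $\val$ or a pure context $\ctx$, and in each case the transition forces $\tmzero$ to reduce internally to an observable term (a value or a stuck term, respectively), which I can detect from the other side by testing $\rctxequiv$ with the empty context $\mtctx$.

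In the value case I would argue as follows. From $\ltstau \,=\, \redcbv$ (Lemma \ref{l:lts-redcbv-main}) and irreducibility of values, $\tmzero \ltswk\val \tmzero'$ means $\tmzero \evalcbv \lam\varx\tms$ and $\tmzero' = \subst\tms\varx\val$. Testing with $\mtctx$ gives $\tmone \evalcbv \lam\varx{\tms'}$, so $\tmone \ltswk\val \tmone'$ where $\tmone' = \subst{\tms'}\varx\val$, with the same label. To obtain $\tmzero' \rctxequiv \tmone'$ I would use, for an arbitrary $\rctx$, the evaluation context $\apctx\rctx\val$: compositionality of $\redcbv$ gives $\inctx\rctx{\app\tmzero\val} \clocbv \inctx\rctx{\subst\tms\varx\val}$ and similarly for $\tmone$; by Lemma \ref{l:bisim-eval} these reductions lie in $\bisim$, hence in $\rctxequiv$ by Theorem \ref{t:soundness-main} together with $\ctxequiv \,\subseteq\, \rctxequiv$. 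Since $\rctxequiv$ is transitive and preserved under plugging into an evaluation context (evaluation contexts compose), chaining $\inctx\rctx{\subst\tms\varx\val} \rctxequiv \inctx\rctx{\app\tmzero\val}$, then $\tmzero \rctxequiv \tmone$ under $\apctx\rctx\val$, then back through $\inctx\rctx{\app\tmone\val} \rctxequiv \inctx\rctx{\subst{\tms'}\varx\val}$ gives equivalence of $\inctx\rctx{\tmzero'}$ and $\inctx\rctx{\tmone'}$ for every $\rctx$, i.e. $\tmzero' \rctxequiv \tmone'$.

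The context case is analogous but is where the real work lies. Here $\tmzero \ltswk\ctx \tmzero'$ means $\tmzero \evalcbv \tm_s$ with $\tm_s$ stuck and, by Lemma \ref{l:lts-redcbv-main}, $\reset{\inctx\ctx{\tm_s}} \ltstau \tmzero'$. Testing with $\mtctx$ yields $\tmone \evalcbv \tm_s'$ with $\tm_s'$ stuck; by Lemma \ref{l:stuck} a stuck term performs a $\lts\ctx$ transition for the \emph{same} pure context $\ctx$, so $\tmone \ltswk\ctx \tmone'$ with $\reset{\inctx\ctx{\tm_s'}} \ltstau \tmone'$. The crux is to realise this capture with an actual evaluation context: $\reset{\inctx\ctx\hole}$ is one (obtained from $\resetctx\mtctx$ by inserting the layers of the pure context $\ctx$ inside it), and reduction lifts through it, so $\reset{\inctx\ctx\tmzero} \clocbv \tmzero'$ and $\reset{\inctx\ctx\tmone} \clocbv \tmone'$. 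Composing with an arbitrary $\rctx$ and repeating the transitivity argument of the value case then gives $\tmzero' \rctxequiv \tmone'$.

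I expect the main obstacle to be exactly this capture case: one has to confirm that $\reset{\inctx\ctx\hole}$ is a legitimate evaluation context and that the matching transition from $\tmone$ is driven by the same $\ctx$, so that the two captures $\reset{\inctx\ctx{\tm_s}} \redcbv \tmzero'$ and $\reset{\inctx\ctx{\tm_s'}} \redcbv \tmone'$ can be replayed inside an arbitrary $\rctx$ and related through the context-composition argument. The remaining ingredients---compositionality of $\redcbv$, transitivity and context-closure of $\rctxequiv$, and the reduction/LTS correspondence---are routine, and it is the soundness direction (Theorem \ref{t:soundness-main}) that lets me turn the reductions produced along the way into $\rctxequiv$-steps.
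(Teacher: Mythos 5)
Your proposal is correct and follows essentially the same route as the paper's own proof: you show $\rctxequiv$ is a big-step simulation, detect the value/stuck observable on the other side via the empty context, replay the visible transition through the evaluation contexts $\apctx\mtctx\val$ and $\resetctx\ctx$, and close the diagram using Lemma \ref{l:lts-redcbv-main}, Lemma \ref{l:bisim-eval}, soundness (Theorem \ref{t:soundness-main}), and transitivity plus evaluation-context closure of $\rctxequiv$. The only cosmetic difference is that you carry an arbitrary outer $\rctx$ through the chain, whereas the paper performs the same chaining once at top level ($\tmzero' \bisim \app\tmzero\val \rctxequiv \app\tmone\val \bisim \tmone'$).
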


\begin{proof}
  Because $\rctxequiv$ is symmetric, it is enough to prove that $\rctxequiv$ is a
  big-step simulation. Let $\tmzero \rctxequiv \tmone$. We have two cases to
  consider.

  Suppose $\tmzero \ltswk\val \tmzero'$. Then we have $\tmzero
  \evalcbv \valzero$ for some $\valzero$. By definition of $\rctxequiv$,
  there exists $\valone$ such that $\tmone \evalcbv
  \valone$. Therefore, we have $\tmone \ltswk\val \tmone'$ for some
  $\tmone'$ and $\app \tmzero \val \ltswk\tau \tmzero'$ and $\app
  \tmone \val \ltswk\tau \tmone'$ by Lemma \ref{l:lts-redcbv}. Hence,
  we have $\app \tmzero \val \bisim \tmzero'$ and $\app \tmone \val
  \bisim \tmone'$ by Lemma \ref{l:bisim-eval}. Because $\tmzero
  \rctxequiv \tmone$, we have $\app \tmzero \val \rctxequiv \app
  \tmone \val$. Finally, we have $\tmzero' \bisim \app \tmzero \val
  \rctxequiv \app \tmone \val \bisim \tmone'$ and $\tmzero' \rctxequiv
  \tmone'$ by Theorem \ref{t:soundness} and transitivity of
  $\rctxequiv$.

  Suppose $\tmzero \ltswk\ctx \tmzero'$. Then we have $\tmzero \evalcbv
  \tmzero''$ for some $\tmzero''$. By definition of $\rctxequiv$, there exists
  $\tmone''$ such that $\tmone \evalcbv \tmone''$. Therefore, by definition of
  the LTS, we have $\tmone \ltswk\ctx \tmone'$ for some $\tmone'$. We then have
  $\reset{\inctx \ctx \tmzero} \ltswk\tau \tmzero'$ and $\reset{\inctx \ctx
    \tmone} \ltswk\tau \tmone'$ by Lemma \ref{l:lts-redcbv}. Hence, we have
  $\reset{\inctx \ctx \tmzero} \bisim \tmzero'$ and $\reset{\inctx \ctx \tmone}
  \bisim \tmone'$ by Lemma \ref{l:bisim-eval}. Because $\tmzero \rctxequiv
  \tmone$, we have $\reset{\inctx \ctx \tmzero} \rctxequiv \reset{\inctx \ctx
    \tmone}$. Finally, we have $\tmzero' \bisim \reset{\inctx \ctx \tmzero}
  \rctxequiv \reset{\inctx \ctx \tmone} \bisim \tmone'$, and, therefore we have
  $\tmzero' \rctxequiv \tmone'$ by Theorem \ref{t:soundness} and transitivity of
  $\rctxequiv$.  \qed
\end{proof}

\section{Proof of Proposition \ref{p:omega}}
\label{a:omega}

\begin{proof}
  Let $\ctxzero$ be such that $\fv \ctxzero = \emptyset$. We let $\subs^\ctxzero_i$
  (resp. $\subs^{\lam \varx {\inctx \ctxzero \varx}}_i$) range over
  substitutions mapping $\vark_i$ to
  $\lam y {\reset{\inctx{\ctx_1^i}{\inctx \ctxzero{\inctx {\ctx_2^i} y}}}}$ (resp.
  $\lam y {\reset{\inctx{\ctx_1^i}{\app {\lamp \varx {\inctx \ctxzero
            \varx}}{\inctx {\ctx_2^i} y}}}}$) for some fixed $\ctx_1^i$, $\ctx_2^i$. We define
  \begin{eqnarray*}
    \rel_1 &=& \{ (\inctx \rctx {\app {\lamp \varx {\inctx \ctxzero
          \varx}} \tm} \subs^{\lam \varx {\inctx \ctxzero \varx}}_0 \ldots
    \subs^{\lam \varx {\inctx \ctxzero \varx}}_n, \inctx \rctx {\inctx \ctxzero
      \tm} \subs^\ctxzero_0 \ldots \subs^\ctxzero_n), \\ 
    && \hspace{19em} \fv{\tm, \rctx} \,\subseteq\, \{ \vark_0 \ldots \vark_n\} \} \\
    \rel_2 &=& \{ (\tm \subs^{\lam \varx {\inctx \ctxzero \varx}}_0 \ldots
    \subs^{\lam \varx {\inctx \ctxzero \varx}}_n, \tm \subs^\ctxzero_0 \ldots
    \subs^\ctxzero_n), \fv \tm \,\subseteq\, \{ \vark_0 \ldots \vark_n\} \}
  \end{eqnarray*}
  and we prove that $\rel_1 \cup \rel_2$ is a bisimulation. First, let $\tmzero \rel_1
  \tmone$ with $\tmzero = \inctx \rctx {\app {\lamp \varx {\inctx \ctxzero
        \varx}} \tm} \subs^{\lam \varx {\inctx \ctxzero \varx}}_0 \ldots
  \subs^{\lam \varx {\inctx \ctxzero \varx}}_n$ and $\tmone = \inctx \rctx
  {\inctx \ctxzero \tm} \subs^\ctxzero_0 \ldots \subs^\ctxzero_n$; we consider
  the possible transitions from $\tmzero$.
  \begin{itemize}
  \item If $\tm \lts\tau \tm'$, then $\tmzero \lts\tau \tmzero' = \inctx \rctx
    {\app {\lamp \varx {\inctx \ctxzero \varx}}{\tm'}} \subs^{\lam \varx {\inctx
        \ctxzero \varx}}_0 \ldots \subs^{\lam \varx {\inctx \ctxzero \varx}}_n$
    and $\tmone \lts\tau \tmone' = \inctx \rctx {\inctx \ctxzero {\tm'}}
    \subs^\ctxzero_0 \ldots \subs^\ctxzero_n$. We still have $\tmzero' \rel_1
    \tmone'$.
  \item If $\tm$ is a value $\val$, then $\tmzero \lts\tau \tmzero' = \inctx
    \rctx {\inctx \ctxzero \val} \subs^{\lam \varx {\inctx \ctxzero \varx}}_0
    \ldots \subs^{\lam \varx {\inctx \ctxzero \varx}}_n$ (with rule $\LTSbeta$),
    and $\tmone = \inctx \rctx {\inctx \ctxzero \val} \subs^\ctxzero_0 \ldots
    \subs^\ctxzero_n$. We have $\tmzero' \rel_2 \tmone$.
  \item If $\tm$ is a stuck term, then we have two possibilities. If $\rctx$
    contains a reset, then the shift in $\tm$ can capture the context up to this
    delimiter, resulting in $\tmzero \lts\tau \tmzero' = \inctx {\rctx'}
    {\subst{\tm'}{\vark_{n+1}}{\lam y {\reset{\inctx{\ctx_1}{\app {\lamp \varx
                {\inctx \ctxzero \varx}}{\inctx {\ctx_2} y}}}}}} \subs^{\lam
      \varx {\inctx \ctxzero \varx}}_0 \ldots \subs^{\lam \varx {\inctx \ctxzero
        \varx}}_n$ and $\tmone \lts\tau \tmone' = \inctx {\rctx'}
    {\subst{\tm'}{\vark_{n+1}}{\lam y {\reset{\inctx{\ctx_1}{\inctx \ctxzero
              {\inctx {\ctx_2} y}}}}}} \subs^{\ctxzero}_0 \ldots
    \subs^{\ctxzero}_n$. Otherwise, $\tm$ can capture $\rctx$, giving $\tmzero
    \lts\ctx \tmzero' = \subst{\tm'}{\vark_{n+1}}{\lam y
      {\reset{\inctx{\ctx_1}{\app {\lamp \varx {\inctx \ctxzero \varx}}{\inctx
              {\ctx_2} y}}}}} \subs^{\lam \varx {\inctx \ctxzero \varx}}_0
    \ldots \subs^{\lam \varx {\inctx \ctxzero \varx}}_n$, and similarly $\tmone
    \lts\ctx \tmone' = \subst{\tm'}{\vark_{n+1}}{\lam y
      {\reset{\inctx{\ctx_1}{\inctx \ctxzero {\inctx {\ctx_2} y}}}}}
    \subs^{\ctxzero}_0 \ldots \subs^{\ctxzero}_n$. In both cases, we have
    $\tmzero' \rel_2 \tmone'$.
  \end{itemize}
  Conversely, one can check that the transitions from $\tmone$ are matched by
  $\tmzero$. Now, we consider $\tmzero \rel_2 \tmone$ with $\tmzero = \tm
  \subs^{\lam \varx {\inctx \ctxzero \varx}}_0 \ldots \subs^{\lam \varx {\inctx
      \ctxzero \varx}}_n$ and $\tmone = \tm \subs^\ctxzero_0 \ldots
  \subs^\ctxzero_n$. We enumerate the possible transitions from $\tmzero$.
  \begin{itemize}
  \item If $\tm \lts\act \tm'$, then $\tmzero \lts\act \tmzero' = \tm'
    \subs^{\lam \varx {\inctx \ctxzero \varx}}_0 \ldots \subs^{\lam \varx
      {\inctx \ctxzero \varx}}_n$ and $\tmone \lts\act \tmone'= \tm'
    \subs^\ctxzero_0 \ldots \subs^\ctxzero_n)$. We still have $\tmzero' \rel_2
    \tmone'$.
  \item Suppose $\tm = \inctx \rctx {\app {\lamp z {\tm'}}{\vark_i}}$. Then,
    using rule $\LTSbeta$, we have the transitions $\tmzero \lts\tau \tmzero' =
    \inctx \rctx {\subst {\tm'} z {\lam y {\reset{\inctx {\ctx_1^i}{\app {\lamp
                \varx {\inctx \ctxzero \varx}}{\inctx {\ctx_2^i} y}}}}}}
    \subs^{\lam \varx {\inctx \ctxzero \varx}}_0 \ldots \subs^{\lam \varx
      {\inctx \ctxzero \varx}}_n$ and $\tmone \lts\tau \tmone' = \inctx \rctx
    {\subst {\tm'} z {\lam y {\reset{\inctx {\ctx_1^i}{\inctx \ctxzero {\inctx
                {\ctx_2^i} y}}}}}} \subs^\ctxzero_0 \ldots
    \subs^\ctxzero_n$. The terms $\tmzero'$ and $\tmone'$ can be written $\inctx
    \rctx {\subst {\tm'} z {\vark_i}} \subs^{\lam \varx {\inctx \ctxzero
        \varx}}_0 \ldots \subs^{\lam \varx {\inctx \ctxzero \varx}}_n$ and
    $\inctx \rctx {\subst {\tm'} z {\vark_i}} \subs^\ctxzero_0 \ldots
    \subs^\ctxzero_n$, therefore we have $\tmzero' \rel_2 \tmone'$. The
    reasoning is the same if $\tm = \inctx \rctx {\reset{\vark_i}}$.
  \item If $\tm = \inctx \rctx {\app {\vark_i} \val}$, then $\tmzero \lts\tau
    \tmzero' = \inctx \rctx {\reset{\inctx {\ctx_1^i}{\app {\lamp \varx
              {\inctx \ctxzero \varx}}{\inctx {\ctx_2^i} \val}}}} \subs^{\lam
      \varx {\inctx \ctxzero \varx}}_0 \ldots \subs^{\lam \varx {\inctx \ctxzero
        \varx}}_n$, and $\tmone \lts\tau \tmone' = \inctx \rctx {\reset{\inctx
        {\ctx_1^i}{\inctx \ctxzero {\inctx {\ctx_2^i} \val}}}} 
    \subs^\ctxzero_0 \ldots \subs^\ctxzero_n$. We have $\tmzero' \rel_1
    \tmone'$.
  \end{itemize}
  Conversely, one can check that the transitions from $\tmone$ are
  matched by $\tmzero$. Finally, $\rel_1 \cup \rel_2$ is a
  bisimulation, as wished.  \qed
\end{proof}

\end{document}